\documentclass[11pt, draftclsnofoot, onecolumn]{IEEEtran}
\usepackage{amsmath,amssymb}
\usepackage[dvips]{graphicx}
\usepackage{amsfonts}
\usepackage[mathscr]{eucal}
\usepackage{latexsym}
\usepackage{amsthm}
\usepackage{exscale}
\usepackage[mathscr]{eucal}
\usepackage{bm}
\usepackage[dvipsnames]{color}
\usepackage{cases}
\usepackage{color}
\usepackage{epsfig}
\usepackage{algorithm}
\usepackage{algorithmic}
\usepackage[verbose,nospace,sort]{cite}
\usepackage{tabularx}
\usepackage{multirow}
\usepackage{multicol}
\usepackage{balance}
\usepackage{caption}
\usepackage{subcaption}
\usepackage{setspace}

\graphicspath{{./figs/}}

\theoremstyle{definition}
\newtheorem{proposition}{Proposition}
\newtheorem{theorem}{Theorem}

\newtheorem{lemma}{Lemma}

\newcommand{\SINR}{\mathrm{SINR}}

\newcommand{\diag}{\mathrm{diag}}

\newcommand{\dic}{\mathrm{dic}}
\newcommand{\ULA}{\mathrm{ULA}}
\newcommand{\UPA}{\mathrm{UPA}}

\usepackage{amsmath,amssymb}
\usepackage[dvips]{graphicx}
\usepackage{amsfonts}
\usepackage[mathscr]{eucal}
\usepackage{latexsym}
\usepackage{amsthm}
\usepackage{exscale}
\usepackage[mathscr]{eucal}
\usepackage{bm}
\usepackage[dvipsnames]{color}
\usepackage{cases}
\usepackage{color}
\usepackage{epsfig}
\usepackage{algorithm}
\usepackage{algorithmic}
\usepackage[verbose,nospace,sort]{cite}
\usepackage{tabularx}
\usepackage{multirow}
\usepackage{multicol}
\usepackage{balance}
\usepackage{caption}
\usepackage{subcaption}
\usepackage{setspace}
\usepackage{url}

\begin{document}
\title{Capacity Characterization and Formation Optimization for Multi-User MIMO Communications with UAV Swarm}
\author{Yong Zeng, {\it Fellow,~IEEE}\\
\thanks{Y. Zeng is with the National Mobile Communications Research Laboratory, Southeast University, Nanjing 210096, China, and also with Purple Mountain Laboratories, Nanjing 211111, China (e-mail: \{yong\_zeng\}@seu.edu.cn). (\emph{Corresponding author: Yong Zeng.})}
}

\maketitle

\begin{abstract}
For a multi-user multiple-input multiple-output (MU-MIMO) wireless communication system, imagining that the locations of the users are now fully controllable, what is the maximum sum-capacity, and what are the corresponding optimal user locations? While these questions are irrelevant in conventional human-centric communications with random user mobility, they become critically important for emerging applications involving ground or aerial robots. This paper addresses these fundamental questions in the context of MU-MIMO communications with an unmanned aerial vehicle (UAV) swarm acting as the users. To this end, we first derive closed-form expressions for the sum-capacity of MU-MIMO UAV swarm communications. Our results reveal that, compared to conventional MU-MIMO systems, the additional degrees of freedom provided by the coordinated mobility of the UAV swarm yields substantial capacity enhancement. Specifically, when the base station (BS) is equipped with an $M$-element uniform linear array (ULA), the full spatial multiplexing gain and beamforming gain, both equal to $M$, can be achieved simultaneously. For a BS with a uniform planar array (UPA), we show that asymptotically $\frac{\pi M}{4}$ users can simultaneously enjoy the full beamforming gain $M$. Furthermore, we propose a novel framework to optimize UAV swarm formation for maximizing the sum-capacity achieved by successive interference cancellation (SIC) and maximizing the sum-rate via treating interference as noise (TIN), taking into account practical considerations such as collision avoidance and swarm cohesion constraints. By exploiting the manifold structure of the array response vectors with respect to UAV directions, we develop an efficient algorithm to solve the resulting non-convex formation optimization problems. Extensive simulation results demonstrate that the proposed algorithms achieve near-optimal performance.
\end{abstract}

\begin{IEEEkeywords}
UAV Swarm Communication, MU-MIMO, Capacity Characterization, Rate Maximization, Formation Optimization, Coordinated Mobility.
\end{IEEEkeywords}

\section{Introduction}
Multi-user multiple-input and multiple-output (MU-MIMO) has been a key technology in modern wireless communication systems. It enables a single base station (BS) or access point (AP) equipped with multiple antennas to serve multiple user equipments (UEs) simultaneously over the same time-frequency resources \cite{377}. By leveraging the spatial degrees of freedom afforded by multiple antennas at the BS, MU-MIMO separates signals intended for (or received from) different UEs through their distinct spatial channel signatures. Compared to single-user MIMO (SU-MIMO), MU-MIMO provides greater flexibility in increasing spatial multiplexing gain, improving link reliability, and boosting spectral efficiency. Consequently, MU-MIMO had been widely used in WiFi and 4G LTE networks. In the 5G era, MU-MIMO has evolved in tandem with massive MIMO technology \cite{373}. Specifically, as the number of BS antennas grows significantly, far exceeding the number of active UEs, the channels corresponding to different UEs become asymptotically orthogonal under the so-called ``favorable propagation'' conditions, e.g., uncorrelated Rayleigh fading \cite{459}. In such regimes, even simple linear precoding or receive beamforming techniques can achieve near-optimal performance.

The capacity characterization of MU-MIMO represents a non-trivial generalization of the SU-MIMO case \cite{37,36}. In multi-user settings, the notion of capacity must be extended to a capacity region, which describes the complete set of achievable rate tuples for all UEs simultaneously. For both the uplink and downlink MU-MIMO scenarios, corresponding to  multiple-access channel (MAC) and broadcast channel (BC) \cite{212,263} respectively, the capacity regions are well established. Of particular interest is the sum-capacity. With circularly symmetric additive white Gaussian noise (AWGN), for any given channel realization, the sum-capacity of MAC can be expressed as a log-determinant function of the received signal covariance matrix. This sum-capacity is achieved when all UEs transmit at their maximum power and the BS employs optimal multi-user detection (MUD), typically realized via successive interference cancellation (SIC). In the downlink MU-MIMO setting, a parallel result holds. By leveraging uplink-downlink duality \cite{310,453,312,271}, the BC capacity  can be derived from the MAC solution, and the capacity-achieving strategy is known to be dirty-paper coding (DPC) \cite{5003,212}. However, both SIC and DPC are nonlinear techniques that entail high computational complexity and raise practical concerns such as error propagation and privacy. Consequently, a substantial body of research has focused on suboptimal but practical linear schemes, by treating  interference as noise (TIN) and the achievable communication rate is optimized through carefully designed linear transmit and receive beamformers \cite{182,226,5004}.

However, the aforementioned works on MU-MIMO capacity characterization and rate optimization all assume that UE locations are exogenous and uncontrollable. In such settings, UE-BS channels are determined by the environment, lying outside the scope of system optimization. This assumption is well-founded in conventional human-centric wireless communications, where user mobility is inherently random. Conversely, as wireless networks evolve from connecting people to connecting intelligence and everything, a new class of scenarios is emerging, where the system exerts full control over the physical positions of UEs. This is evident in network-connected autonomous agents, such as swarms of ground robots or unmanned aerial vehicles (UAVs) \cite{2005}. The potential usage scenarios include UAV swarm-based aerial inspection, relaying, and sensing, as illustrated in Fig.~\ref{F:UAVSwarmApplications}. In this paradigm, capacity and achievable data rates are expected to improve significantly, owing to the additional degrees of freedom offered by optimizing the locations or formation of the swarm UEs, which directly govern the UE-BS channels. Consequently, this new paradigm raises several key questions:

(1) What is the maximum sum-capacity achievable via SIC?

(2) What is the maximum sum-rate achievable via TIN?

(3) What constitutes the optimal swarm formation?

\begin{figure}
\centering
\includegraphics[scale=0.5]{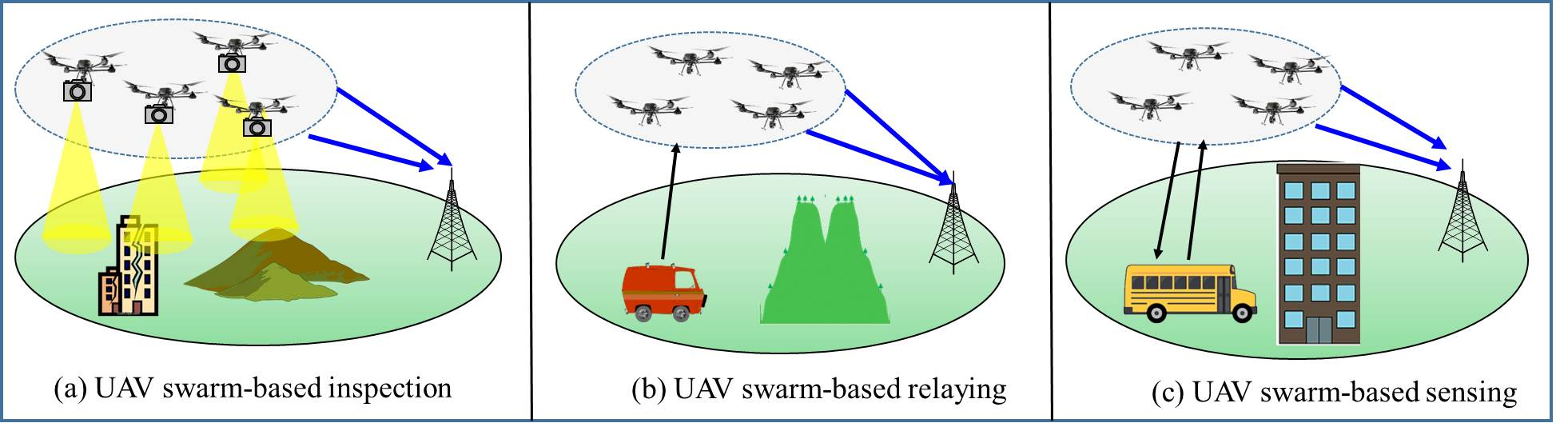}
\caption{Example usage scenarios of network-connected UAV swarm, where the UAVs in the swarm communicate with the BS with an additional design freedom via formation optimization.}\label{F:UAVSwarmApplications}
\end{figure}

This paper aims to address the above fundamental questions. In fact, exploiting node mobility to enhance communication capacity or data rates has been explored in prior work under different settings. A representative example is mobile relaying-assisted wireless communication, which gave rise to the data ferrying paradigm for delay-tolerant networks  \cite{638,653}. In this approach, a mobile relay first approaches the source node to collect data and subsequently moves toward the destination to offload it. This concept has since evolved into a more general framework of joint communication and trajectory optimization, particularly in the context of UAV-assisted communications \cite{641,1095}. Here, a UAV acting as a flying relay, BS, or AP can optimize its trajectory to maximize metrics such as throughput or energy efficiency \cite{904}. However, these works typically assume that the mobile nodes (e.g., UAVs) serve as dedicated helpers to provide communication services. The resulting performance gains over static deployments are primarily attributed to mobility-induced variations in path loss: improving link quality by flying closer to intended users and farther from interferers. Notably, achieving significant gains often requires the UAV to traverse substantial distances, say on the order of tens or even hundreds  of meters.

Besides mobility-induced path loss variation, node location optimization also introduces controllable phase variations across wireless links, where even relatively small changes in position can lead to significant differences in system performance. This principle was previously explored under the concept of line-of-sight (LoS) MIMO, wherein the positions of the transmitter and/or receiver antenna arrays were deliberately optimized to achieve a spatial multiplexing gain greater than one even in pure LoS propagation environments \cite{624}. A key insight from this line of work is that, depending on the link distance, the required antenna separation must be much larger than the conventional half-wavelength spacing to induce sufficient channel decorrelation via near-field or spherical-wave effects \cite{5005}. However, such large apertures are difficult to implement in practice, especially for compact user devices. More recently, the concepts of fluid antenna system (FAS) and movable antenna (MA) have been proposed to harness spatial diversity through physical repositioning of antenna elements \cite{5006,5007}. Nevertheless, their practical deployment faces challenges related to mechanical hardware complexity, actuation latency, and control overhead.

Different from the prior works, in this paper, we exploit the controllable UE mobility for capacity and rate optimization under the context of MU-MIMO with UAV swarm being the UEs. This scenario differs fundamentally from prior UAV-assisted communication paradigms \cite{649}, where UAVs primarily serve as aerial platform providing communication services for UEs. In contrast, here the UAVs themselves act as independent UEs, each transmitting or receiving its own data stream.
Crucially, optimizing UAV positions offers dual benefits: not only does it enable favorable path-loss characteristics (e.g., proximity to the BS), but it also allows deliberate control over inter-UAV channel correlations, which are inherently tied to the relative signal phases induced by their spatial configuration.
Moreover, unlike LoS MIMO or FAS/MA systems, which require specialized hardware to reposition antennas, the UAVs in our setup possess intrinsic high mobility by design, eliminating the need for additional mechanical or electrical actuation mechanisms. Finally, because the UAVs operate as independent transceivers rather than forming a synchronized distributed antenna array, the system avoids the stringent requirements on inter-node coordination, phase synchronization, and joint signal processing. This significantly reduces implementation complexity and enhances practical feasibility.

MIMO UAV swarm communications have attracted growing attentions recently. For example,  UAV swarm was exploited to form virtual or movable antenna array that cooperatively transmit/receive  data streams \cite{5009,5011}. Game-theoretical approach has been proposed  for decentralized 3D deployment for massive MIMO UAV swarm communications \cite{5008}. In \cite{5010}, gradient descent based algorithm was proposed to optimize the UAV swarm positions to achieve a high multiplexing gain in LoS MIMO backhaul. Despite this increasing research interest, the three fundamental questions outlined above regarding the maximum sum capacity,  achievable rate, and optimal swarm formation remain open. This gap motivates our current work.

The capacity characterization and formation optimization of MU-MIMO UAV swarm communications faces two major difficulties. The first one is finding a tractable relation between the UAV swarm locations  and the MU-MIMO channels. This issue is alleviated under the context of MIMO UAV  swarm communications, since compared to ground UEs, aerial UEs have elevated altitude, which make it more likely to have clear LoS link with the BS \cite{1095}. In this case, the LoS wireless channel depends on the UAV locations deterministically, via distance-dependent path loss and signal phase. The second major difficulty is that as the UAV locations  usually affect the channels in rather complicated manner, say a wavelength-scale variation of UE location would change the channel phase by $2\pi$, conventional convex or non-convex optimization techniques trying to optimize swarm locations directly fail. In this paper, we provide effective solutions to address such challenges. The main contributions of this paper are summarized as follows:
\begin{itemize}
\item {\it MU-MIMO UAV Swarm Capacity Characterization}: We derive the closed-form expressions for the sum-capacity of MU-MIMO communications with a UAV swarm being the UEs, by exploiting the new degree of freedom provided by the coordinated mobility of UAVs. In particular, for a BS equipped with $M$-element uniform linear array (ULA), theoretical analysis demonstrates that the sum-capacity can be expressed as $C=M\log_2(1+\rho M)$, where $\rho$ is the received signal-to-noise ratio (SNR) without considering beamforming gain. This result reveals that up to $M$ UAV UEs can be  simultaneously supported, each enjoying the full beamforming gain of $M$.  This finding stands in sharp contrast to conventional MU-MIMO communications, where full spatial multiplexing gain and full beamforming gain, both equal to $M$, typically cannot be achieved at the same time. For example, for massive MIMO systems under the so-called ``favorable propagation conditions'', the full beamforming gain  $M$ is achieved for all UEs only when the number of UEs $K$ satisfies $K\ll M$, in which case the asymptotical capacity is $C=K\log_2(1+\rho M)$.\footnote{Note that if we do not require full beamforming gain $M$ for all UEs, the sum-capacity can be higher by making full use of the spatial multiplexing gain $M$.} This is substantially smaller than that of the considered MIMO UAV swarm communications. The fundamental reasons for this enormous gain are twofold: 1) UE locations are fully controllable via UAV formation optimization; and 2) the BS-UAV channel is deterministically determined by the UAV locations under LoS scenarios. Consequently, up to $M$ UAVs can be deliberately positioned  along the $M$ orthogonal directions of the $M$-element ULA. In parallel with the notion of ``favorable propagation'' (e.g., uncorrelated Rayleigh fading) in massive MIMO setup, we term the free-space LoS channels as the ``ideal propagation``, as it simultaneously achieves full beamforming gain and full spatial multiplexing gain. Similar results hold when the BS  is equipped with $M$-element uniform planar array (UPA). A key difference is that, unlike the ULA case where exactly $M$ UEs achieve full beamforming gain, we show that asymptotically $\frac{\pi M}{4}$ UAV UEs can enjoy the full beamforming gain $M$ when $M$ is large. Building on these insights, we further derive closed-form capacity expressions for more general scenarios where the UAV swarm must be placed within prescribed distance and angular ranges.
\item {\it UAV Swarm Formation Optimization:} Taking into account practical constraints such as collision avoidance and swarm cohesion constraints, we propose a novel optimization framework to optimize UAV swarm formation for maximizing either the achievable sum-capacity under SIC or the sum-rate under TIN. The resulting optimization problems are highly non-convex, as UAV locations influence the BS-UE channels through both distance-dependent path loss and channel phase in a complicated manner. This renders standard convex or non-convex optimization techniques difficult to apply directly. Fortunately, by exploiting the manifold structure of the array response vectors with respect to UE directions, we develop an efficient algorithm to effectively solve the UAV swarm formation optimization problems. Extensive simulation results demonstrate that the proposed algorithms achieve near-optimal performance.
\end{itemize}

The rest of this paper is organized as follows. Section~\ref{sec:systemModel} introduces the system model. Section~\ref{sec:capacityCharacterization} presents  the capacity characterization of MU-MIMO communications with UAV swarm.  Section~\ref{sec:optimization} presents the UAV swarm formation optimization algorithms, by considering the practical constraints. Section~\ref{sec:simulation} presents the simulation results. Finally, we conclude the paper in Section~\ref{sec:Conclusion}.

{\it Notations: } Scalars are denoted by italic letters. Boldface lower- and upper-case letters denote vectors and matrices, respectively. $\mathbb{R}^{M\times 1 }$ and $\mathbb{C}^{M\times 1 }$ denote the space of $M$-dimensional real- and complex-valued vectors, respectively. For a vector $\mathbf a$, $\|\mathbf a\|$ represents its Euclidean norm, $\mathbf a^T$ and $\mathbf a^H$ denote its transpose and Hermitian transpose, respectively, and $\mathrm{diag}(\mathbf a)$ denotes a diagonal matrix with $\mathbf a$ being the diagonal elements. For a square matrix $\mathbf A$, $|\mathbf A|$ denotes its determinant. $\log_2(\cdot)$ denotes the  logarithm with base $2$. $\otimes$ denotes the Kronecker product.

\section{System Model}\label{sec:systemModel}
As shown in Fig.~\ref{F:systemModel}, we consider a MU-MIMO wireless communication system, where the BS equipped with $M$-element antenna array communicates with a UAV swarm consisting of $K$ single-antenna UAVs. Without loss of generality, we establish a Cartesian coordinate system so that the reference antenna  element (say the first element) of the BS  array locates at the origin.   Let $\mathbf q_k=r_k[\cos \theta_k \cos\phi_k, \cos\theta_k \sin \phi_k, \sin\theta_k]^T$ denote the 3D location of UAV $k$, $k=1,...,K$, where $r_k$ denotes the range,  $\theta_k\in [-\pi/2,\pi/2]$ and $\phi_k\in [-\pi/2,\pi/2]$ denote the elevation and azimuth angles, respectively. The set of all the $K$ UAV locations characterizes the  UAV swarm formation, denoted as $\mathbf Q\triangleq [\mathbf q_1,\cdots, \mathbf q_K]\in \mathbb{R}^{3\times K}$.

\begin{figure}
\centering
\includegraphics[scale=0.7]{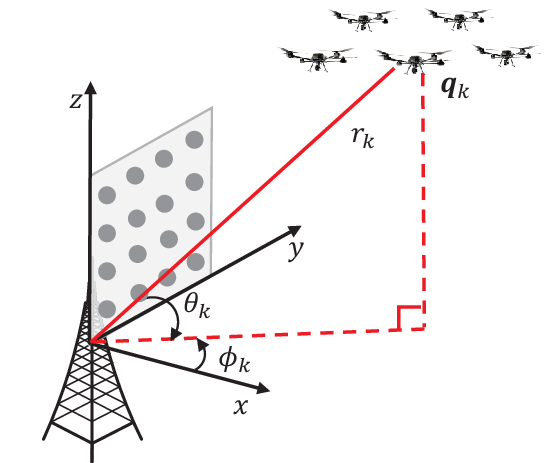}
\caption{MU-MIMO communication where a BS serves a UAV swarm, whose formation can be flexibly optimized.}\label{F:systemModel}
\end{figure}

For far-field free-space propagation, the LoS channel vector $\mathbf {\bar h}_k$ between the BS and UAV $k$ can be expressed as
\begin{align}\label{eq:hk}
\mathbf {\bar h}_k = \frac{\sqrt{\beta_0}e^{j\phi_k}}{r_{0}}\underbrace{\frac{r_{0}}{r_k} \mathbf a(\theta_k, \phi_k)}_{\triangleq \mathbf h_k(\mathbf q_k)}, k=1,\cdots, K,
\end{align}
where $\beta_0$ denotes the channel power gain at $1$ meter, $r_0$ is a reference range, $\phi_k$ is the phase shift of UAV $k$ to the reference antenna element of the BS array, and $\mathbf a(\theta_k, \phi_k)\in \mathbb{C}^{M\times 1}$ is the array response vector characterizing the relative phases between the $M$-element antenna array for a uniform plane wave (UPW) with propagation direction $(\theta_k,\phi_k)$. For convenience, in \eqref{eq:hk}, we have defined $\mathbf h_k(\mathbf q_k)$ to represent those channel components dependent on the  UAV location $\mathbf q_k$. Note that while the reference phase shift $\phi_k$ also depends on $\mathbf q_k$, as will become clear later, it does not affect the communication capacity or achievable rate.

{\it ULA:} If ULA is placed along the $y$ axis and assuming $\theta_k=0$ for simplicity, $\forall k$, then the array response vector is only a function of the azimuth AoA $\phi_k$, which can be written as
\begin{align}\label{eq:aULA}
\mathbf a_{\ULA}(\phi_k)=
\begin{bmatrix}
1 \\
e^{j\pi \sin\phi_k}\\
\vdots \\
e^{j\pi (M-1) \sin\phi_k}
\end{bmatrix},
\end{align}
where we have assumed that the adjacent elements are separated by half wavelength.

{\it UPA:} If UPA is placed in the y-z plane, let $M=M_yM_z$, with $M_y$ and $M_z$ denoting the number of elements along the y and z axis, respectively, then the array response vector can be expressed as
\begin{align}\label{eq:aUPA}
\mathbf a_{\UPA}(\theta_k, \phi_k)= \mathbf a_z(\theta_k)\otimes  \mathbf a_y(\theta_k, \phi_k),
\end{align}
where
\begin{align}\label{eq:azay}
\mathbf a_z(\theta_k)=
\begin{bmatrix}
1 \\
e^{j\pi \sin\theta_k}\\
\vdots \\
e^{j\pi (M_z-1) \sin\theta_k}
\end{bmatrix}, \
\mathbf a_y(\theta_k,\phi_k)=
\begin{bmatrix}
1 \\
e^{j\pi \cos\theta_k\sin\phi_k}\\
\vdots \\
e^{j\pi (M_y-1) \cos\theta_k\sin\phi_k}
\end{bmatrix}.
\end{align}
It is observed from \eqref{eq:hk}-\eqref{eq:azay} that the UAV location $\mathbf q_k$ affects the BS-UAV channel in two aspects: the channel amplitude is inversely proportional to the UAV range $r_k$ and the array steering vector $\mathbf a(\theta_k, \phi_k)$ only depends on the UAV direction $(\theta_k,\phi_k)$ while independent of $r_k$. 

Consider uplink communication, where the $K$ UAVs simultaneously communicate with the BS using the same time-frequency resource block. The received signal vector by the $M$-element antenna array of the BS is
\begin{align}\label{eq:y}
\mathbf y = \sum_{k=1}^K \mathbf {\bar h}_k \sqrt{P_k} s_k + \mathbf n,
\end{align}
where $P_k$ denotes the transmit power of UAV $k$, $s_k\sim \mathcal{CN} (0, 1)$ represents the circularly symmetric complex Gaussian (CSCG) information-bearing symbol, and $\mathbf n \in \mathcal{CN}(\mathbf 0, \sigma^2 \mathbf I_M)$ denotes the CSCG noise with power $\sigma^2$. The following two cases are considered:

{\it SIC:} The input-output relationship \eqref{eq:y} is the classic $K$-user MAC, whose sum-capacity can be achieved by SIC, given by
\begin{align}\label{eq:CQ}
C(\mathbf Q)&= \log_2 \left|\mathbf I + \frac{1}{\sigma^2}\sum_{k=1}^K \mathbf {\bar h}_k \mathbf {\bar h}_k^H P_k \right| \\
&=\log_2 \left|\mathbf I + \sum_{k=1}^K \mathbf h_k(\mathbf q_k) \mathbf h_k^H(\mathbf q_k) \frac{P_k\beta_0}{r_0^2\sigma^2} \right| \\
&= \log_2 \left|\mathbf I +  \mathbf{\bar{P}} \mathbf H^H(\mathbf Q) \mathbf H(\mathbf Q) \right|,\label{eq:sumCapacity}
\end{align}
where we define $\bar P_k\triangleq \frac{P_k\beta_0}{r_0^2 \sigma^2}$ to represent the received SNR for UAV $k$ at the reference distance $r_0$, $\mathbf H(\mathbf Q)=[\mathbf h_1(\mathbf q_1), \cdots \mathbf h_K(\mathbf q_K)]$, and $\mathbf {\bar{P}}=\diag\{\bar P_1, \cdots, \bar P_K\}$. Note that  the sum-capacity is expressed as a function of the UAV swarm formation $\mathbf Q$ in \eqref{eq:sumCapacity}, since the channel vectors $\mathbf H$ depend on $\mathbf Q$. Different from the conventional wireless communication systems, for MIMO UAV swarm communication where the formation $\mathbf Q$ is fully controllable, the maximal sum-capacity can be obtained by solving the following optimization problem:
\begin{align}\label{P:Cstar}
\begin{cases} \underset{\mathbf Q}{\max} & C(\mathbf Q)=\log_2\left|\mathbf I +  \mathbf{\bar{P}} \mathbf H^H(\mathbf Q) \mathbf H(\mathbf Q) \right| \\
 \text{s.t.} & \mathbf Q \in \mathcal Q,
 \end{cases}
\end{align}
where the mapping from UAV swarm $\mathbf Q$ to the channel vectors $\mathbf H(\mathbf Q)$ is specified by \eqref{eq:hk}-\eqref{eq:azay}, and $\mathcal Q$ is the feasible region for the UAV swarm formation that will be specified in Section~\ref{sec:optimalCapacity}.

{\it TIN:} Besides SIC, another common approach for MU-MIMO communication is to suppress the interference by applying linear receive beamforming while treating the residual interference as noise. Since such an approach in general cannot achieve the channel capacity, we usually use the achievable communication rate as the performance metric.  Let $\mathbf w_k\in \mathbb{C}^{M\times 1}$ denote the receive beamforming vector applied to the signal vector $\mathbf y$ in $\eqref{eq:y}$ for detecting UAV $k$'s information-bearing symbol $s_k$. The resulting signal is
\begin{align}
y_k=\mathbf w_k^H \mathbf y= \mathbf w_k^H \mathbf {\bar h}_k \sqrt{P_k} s_k + \sum_{j=1, j\neq k}^K \mathbf w_k^H \mathbf {\bar h}_j \sqrt{P_j}s_j + \mathbf w_k^H \mathbf n.
\end{align}
The corresponding signal-to-interference-and-noise ratio (SINR) is
\begin{align}\label{eq:SINRk}
\mathrm{SINR}_k(\mathbf Q, \mathbf w_k) = \frac{\bar{P}_k|\mathbf w_k^H \mathbf h_k(\mathbf q_k)|^2}{\sum_{j\neq k}\bar P_j|\mathbf w_k^H \mathbf h_j(\mathbf q_j)|^2+ \mathbf w_k^H \mathbf w_k}, \ k=1,\cdots, K,
\end{align}
where we have used the relationship between $\bar {\mathbf h}_k$ and $\mathbf h_k(\mathbf q_k)$ given in \eqref{eq:hk}, and  $\bar P_k$ is the reference SNR defined below \eqref{eq:sumCapacity}.
As indicated in \eqref{eq:SINRk}, the SINR for UAV $k$ is a function of the UAV swarm formation $\mathbf Q$ and its own receive beamforming vector $\mathbf w_k$.
The achievable communication sum rate by TIN is thus
\begin{align}
R(\mathbf Q, \mathbf W)=\sum_{k=1}^K \log_2\left(1+ \mathrm{SINR}_k(\mathbf Q, \mathbf w_k) \right),
\end{align}
where $\mathbf W\triangleq [\mathbf w_1, \cdots \mathbf w_K]$ denote the $K$ receive beamforming vectors. We may then formulate a sum rate maximization problem by jointly optimizing the receive beamforming vectors $\mathbf W$ and the UAV swarm formation $\mathbf Q$:
\begin{align}\label{P:Rstar}
\begin{cases} \underset{\mathbf Q, \mathbf W}{\max} & R(\mathbf Q, \mathbf W)= \sum_{k=1}^K  \log_2\left(1+ \mathrm{SINR}_k(\mathbf Q, \mathbf w_k) \right)\\
 \text{s.t.} & ~~ \mathbf Q \in \mathcal Q.
 \end{cases}
\end{align}
Different from the conventional MU-MIMO communication systems, the capacity and rate maximization problems \eqref{P:Cstar} and \eqref{P:Rstar} involve a new design variable, namely the UAV formation $\mathbf Q$. This thus offers a new dimension to enhance the communication capacity and achievable rate, thanks to the location-controllable UEs brought by agents like UAV swarms. However, directly solving  problems \eqref{P:Cstar} and \eqref{P:Rstar} is challenging, since the UAV formation $\mathbf Q$ affects the communication rate via channel vectors in a complicated manner, i.e., it affects both the amplitude and phase as indicated in \eqref{eq:hk}. Besides, the feasible region of the UAV swarm formation $\mathcal Q$ needs to be specified. As a result, standard convex or non-convex optimization techniques are difficult to be directly applied to solve the two problems. In the following, we first show that for some important scenarios, the optimal solution to optimization problems \eqref{P:Cstar} and \eqref{P:Rstar}  can be obtained in closed-form, based on which some important insights can be obtained. We then propose an efficient algorithm to solve the two problems for the general cases.

\section{MU-MIMO UAV Swarm Capacity Characterization}\label{sec:capacityCharacterization}
In this section, to gain some insights and characterize the communication capacity and achievable rate for MU-MIMO UAV swarm communication, we try to find the closed-form solutions for the capacity and rate maximization problems in \eqref{P:Cstar} and \eqref{P:Rstar}, respectively. To this end, we first derive a sufficient condition for the UAV swarm formation $\mathbf Q$ that achieves the global optimality. To that end, we define an alternative  optimization problem:
\begin{align}\label{P:CUB}
 \begin{cases} \underset{\mathbf Q}{\max} & \sum_{k=1}^K \log_2 \left(1+ \bar P_k \|\mathbf h_k(\mathbf Q)\|^2\right) \\
 \text{s.t.} & \mathbf Q \in \mathcal Q.
 \end{cases}
\end{align}
Note that problem \eqref{P:CUB} share the same feasible region $\mathcal Q$ as the capacity and rate maximization problems \eqref{P:Cstar} and \eqref{P:Rstar}, while they differ in their objective functions. In particular,  the objective function of \eqref{P:CUB} is given by the sum rate of the $K$ decoupled UEs with channel $\mathbf h_k$ as if all the remaining UEs were absent. In the following, we show that under some mild conditions, solving problem \eqref{P:CUB} is sufficient for solving the more challenging optimization problems \eqref{P:Cstar} and \eqref{P:Rstar}.

\subsection{Optimality Condition for Capacity Maximization}
We first reveal how problem \eqref{P:CUB} is related to the capacity maximization problem \eqref{P:Cstar}.

\begin{theorem}\label{theo:CUB}
Let $C_2^\star$ denote the optimal objective value for problem \eqref{P:CUB}. Then it provides an upper bound for problem \eqref{P:Cstar}, i.e.,
\begin{align}
C(\mathbf Q) \leq C_2^\star, \forall \mathbf Q\in \mathcal Q. \label{eq:CUB}
\end{align}
Furthermore, if and only if there exists an optimal solution $\mathbf Q_2^\star$ to problem \eqref{P:CUB} satisfying the following conditions:
\begin{align}
\mathbf h_k^H(\mathbf Q_2^\star) \mathbf h_j(\mathbf Q_2^\star) = 0, \forall j\neq k,  \label{eq:orthgogonal4}
\end{align}
then the upper bound \eqref{eq:CUB} is tight, and the equality is achieved by $\mathbf Q=\mathbf Q_2^\star$.
\end{theorem}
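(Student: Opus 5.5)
The plan is to apply Hadamard's inequality to the positive definite matrix inside the log-determinant in \eqref{eq:sumCapacity}. First, rewrite the objective symmetrically. By Sylvester's determinant identity,
\begin{align*}
\left|\mathbf I + \mathbf{\bar P}\mathbf H^H\mathbf H\right| = \left|\mathbf I_K + \mathbf{\bar P}^{1/2}\mathbf H^H\mathbf H\mathbf{\bar P}^{1/2}\right|.
\end{align*}
Call the $K\times K$ matrix on the right $\mathbf G(\mathbf Q)$. It is Hermitian positive definite, with diagonal entries $1+\bar P_k\|\mathbf h_k(\mathbf q_k)\|^2$ and off-diagonal entries $\sqrt{\bar P_k\bar P_j}\,\mathbf h_k^H\mathbf h_j$.

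Hadamard's inequality then gives $|\mathbf G|\le \prod_k G_{kk}$. Hence, for every $\mathbf Q\in\mathcal Q$,
\begin{align*}
C(\mathbf Q)\le \sum_{k=1}^K\log_2\left(1+\bar P_k\|\mathbf h_k(\mathbf q_k)\|^2\right)\le C_2^\star,
\end{align*}
where the last step holds because $\mathbf Q$ is feasible for \eqref{P:CUB}. This proves \eqref{eq:CUB}.

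For the tightness part, recall that for a positive definite matrix, equality in Hadamard's inequality holds if and only if the matrix is diagonal. Since $\bar P_k>0$, this means $\mathbf h_k^H\mathbf h_j=0$ for all $j\neq k$.

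Sufficiency: if some optimal $\mathbf Q_2^\star$ of \eqref{P:CUB} satisfies \eqref{eq:orthgogonal4}, then both inequalities in the chain are equalities at $\mathbf Q_2^\star$. Thus $C(\mathbf Q_2^\star)=C_2^\star$, so the bound is tight and attained there.

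Necessity: suppose the bound is tight and attained, i.e., some $\mathbf Q'\in\mathcal Q$ has $C(\mathbf Q')=C_2^\star$. Then both inequalities in the chain must hold with equality at $\mathbf Q'$. The second equality says $\mathbf Q'$ is optimal for \eqref{P:CUB}. The first says $\mathbf G(\mathbf Q')$ is diagonal, which is exactly \eqref{eq:orthgogonal4}. So $\mathbf Q'$ serves as the required $\mathbf Q_2^\star$.

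The main obstacle is handling the "if and only if" correctly. Hadamard's equality condition must be applied to the symmetrized $K\times K$ Gram form $\mathbf G$, not to the $M\times M$ matrix $\mathbf I+\sum_k \bar P_k\mathbf h_k\mathbf h_k^H$. For the latter, diagonality is not the right condition.

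Attainment of the maximum should also be noted. The statement assumes \eqref{P:CUB} has an optimal solution; if $\mathcal Q$ is closed and bounded this follows from continuity. Tightness is interpreted as the maximum of \eqref{P:Cstar} being attained and equal to $C_2^\star$.
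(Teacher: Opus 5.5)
Your proposal is correct and follows essentially the same route as the paper's proof. Both apply Hadamard's (Fischer's) inequality to the $K\times K$ log-determinant and then bound by $C_2^\star$ via feasibility. Both then show that attainment at some $\mathbf Q'$ forces both inequalities to be equalities, i.e., $\mathbf Q'$ is optimal for \eqref{P:CUB} and $\mathbf H^H(\mathbf Q')\mathbf H(\mathbf Q')$ is diagonal. Your symmetrization to $\mathbf I_K+\bar{\mathbf P}^{1/2}\mathbf H^H\mathbf H\bar{\mathbf P}^{1/2}$ is a small but welcome rigor improvement, since the paper applies Hadamard directly to $\mathbf I+\bar{\mathbf P}\mathbf H^H\mathbf H$, which is not Hermitian when the $\bar P_k$ differ. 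Your direct necessity argument is equivalent to the paper's proof by contradiction.
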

\begin{IEEEproof}
Please refer to Appendix~\ref{A:CUB}.
\end{IEEEproof}

\begin{lemma}\label{lemma1C}
Let $\mathbf Q_2^\star$ denote an optimal solution to problem \eqref{P:CUB}. If it satisfies the orthogonal conditions \eqref{eq:orthgogonal4}, then it must be an optimal solution to problem \eqref{P:Cstar} as well.
\end{lemma}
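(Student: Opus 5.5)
The plan is to obtain Lemma~\ref{lemma1C} as a direct consequence of Theorem~\ref{theo:CUB}, using the upper bound together with an explicit evaluation of $C(\mathbf Q_2^\star)$ under orthogonality.

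\textbf{Step 1 (upper bound).} By \eqref{eq:CUB} in Theorem~\ref{theo:CUB}, every feasible formation satisfies $C(\mathbf Q)\le C_2^\star$. If one prefers a self-contained argument, this is Hadamard's inequality. Write
\[
C(\mathbf Q)=\log_2\left|\mathbf I+\bar{\mathbf P}^{1/2}\mathbf H^H\mathbf H\bar{\mathbf P}^{1/2}\right|,
\]
using Sylvester's identity $|\mathbf I+\mathbf A\mathbf B|=|\mathbf I+\mathbf B\mathbf A|$ with $\mathbf A=\bar{\mathbf P}^{1/2}$ and $\mathbf B=\mathbf H^H\mathbf H\bar{\mathbf P}^{1/2}$. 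The matrix inside is positive definite, so its determinant is at most the product of its diagonal entries $1+\bar P_k\|\mathbf h_k(\mathbf Q)\|^2$. This gives $C(\mathbf Q)\le\sum_k\log_2(1+\bar P_k\|\mathbf h_k(\mathbf Q)\|^2)\le C_2^\star$.

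\textbf{Step 2 (tightness at $\mathbf Q_2^\star$).} Suppose $\mathbf Q_2^\star$ satisfies \eqref{eq:orthgogonal4}. Then $\mathbf H^H(\mathbf Q_2^\star)\mathbf H(\mathbf Q_2^\star)=\diag\{\|\mathbf h_1\|^2,\dots,\|\mathbf h_K\|^2\}$. Hence $\mathbf I+\bar{\mathbf P}\mathbf H^H\mathbf H$ is diagonal, and its log-determinant equals $\sum_k\log_2(1+\bar P_k\|\mathbf h_k(\mathbf Q_2^\star)\|^2)$. Because $\mathbf Q_2^\star$ is optimal for \eqref{P:CUB}, this sum is exactly $C_2^\star$. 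This is also the ``if'' direction of Theorem~\ref{theo:CUB}, which we may simply cite.

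\textbf{Step 3 (conclusion).} Since $\mathbf Q_2^\star\in\mathcal Q$, it is feasible for \eqref{P:Cstar}, because both problems share the feasible region. Combining the first two steps gives $C(\mathbf Q_2^\star)=C_2^\star\ge C(\mathbf Q)$ for all $\mathbf Q\in\mathcal Q$, so $\mathbf Q_2^\star$ is optimal for \eqref{P:Cstar}.

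There is no real obstacle here. The only points needing care are the feasibility of $\mathbf Q_2^\star$ for \eqref{P:Cstar}, which follows from the shared region $\mathcal Q$, and the use of Sylvester's identity to justify applying Hadamard's inequality to a Hermitian matrix.
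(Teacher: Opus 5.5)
Your proof is correct and is the paper's argument. $\mathbf Q_2^\star$ is feasible for \eqref{P:Cstar} because the region $\mathcal Q$ is shared, and under \eqref{eq:orthgogonal4} it attains the bound $C_2^\star$ of Theorem~\ref{theo:CUB}, so it is optimal. One cosmetic fix in your optional Step 1: to get $\mathbf A\mathbf B=\bar{\mathbf P}\mathbf H^H\mathbf H$ you need $\mathbf B=\bar{\mathbf P}^{1/2}\mathbf H^H\mathbf H$, since your choice yields $\left|\mathbf I+\mathbf H^H\mathbf H\bar{\mathbf P}\right|$, which matches only after a second use of Sylvester's identity.
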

\begin{IEEEproof}
The result directly follows from Theorem~\ref{theo:CUB}, since under the conditions specified in Lemma~\ref{lemma1C}, $\mathbf Q_2^\star$ is also feasible to problem \eqref{P:Cstar} and it achieves the upper bound, then it must be an optimal solution.
\end{IEEEproof}

Lemma~\ref{lemma1C} provides an efficient method to optimally solve the challenging optimization problem \eqref{P:Cstar}, by firstly finding an optimal solution $\mathbf Q_2^\star$ to the relatively easier problem \eqref{P:CUB} so that the orthogonal conditions \eqref{eq:orthgogonal4} are satisfied. In this case, $\mathbf Q_2^\star$ is an optimal solution to problem \eqref{P:Cstar} as well. Note that if no such solution to \eqref{P:CUB} exists, \eqref{P:Cstar} remains challenging to solve, which will be addressed in Section~\ref{sec:optimization}.


\subsection{Optimality Condition for Rate Maximization}
In this subsection, for the case of TIN, we derive an optimality condition  for the rate maximization problem \eqref{P:Rstar}. Different from the capacity maximization problem \eqref{P:Cstar}, the rate maximization problem involves optimizing both receive beamforming matrix $\mathbf W$ and the UAV formation $\mathbf Q$. In the following, we first show that the problem can be reduced to optimizing $\mathbf Q$ alone.
\begin{theorem}\label{theo:1}
Solving the rate maximization problem \eqref{P:Rstar} is equivalent to solving the following swarm formation optimization problem:
\begin{align}\label{P:RQ}
 \begin{cases} \underset{\mathbf Q}{\max} & R(\mathbf Q) \triangleq \sum_{k=1}^K  \log_2\left(1+  \SINR_k(\mathbf Q) \right)\\
 \text{s.t.} & ~~ \mathbf Q \in \mathcal Q,
 \end{cases}
\end{align}
where
\begin{align}\label{eq:SINRkQ}
\SINR_k(\mathbf Q) = \bar P_k \mathbf h_k^H \Big(\sum_{j\neq k} \bar P_j \mathbf h_j\mathbf h_j^H + \mathbf I_M \Big)^{-1} \mathbf h_k.
\end{align}
\end{theorem}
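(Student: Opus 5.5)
The plan is to decouple the joint problem \eqref{P:Rstar} by noting that the constraint $\mathbf Q\in\mathcal Q$ involves only $\mathbf Q$. The beamformers $\mathbf W$ are unconstrained, and $\mathbf w_k$ enters only $\SINR_k$. Hence
\begin{align*}
\max_{\mathbf Q\in\mathcal Q,\mathbf W} R(\mathbf Q,\mathbf W)=\max_{\mathbf Q\in\mathcal Q}\sum_{k=1}^K \log_2\Big(1+\max_{\mathbf w_k}\SINR_k(\mathbf Q,\mathbf w_k)\Big),
\end{align*}
using that $\log_2(1+x)$ is increasing and the inner maximizations over different $\mathbf w_k$ are separable. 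It then suffices to show that, for each fixed $\mathbf Q$, $\max_{\mathbf w_k}\SINR_k(\mathbf Q,\mathbf w_k)$ equals the expression in \eqref{eq:SINRkQ}. Equivalence of the two problems then follows: an optimal $\mathbf Q$ of \eqref{P:RQ}, paired with the corresponding optimal beamformers, is optimal for \eqref{P:Rstar}, and conversely.

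For the inner step, fix $\mathbf Q$ and write
\begin{align*}
\mathbf R_k\triangleq \sum_{j\neq k}\bar P_j\mathbf h_j\mathbf h_j^H+\mathbf I_M,
\end{align*}
which is Hermitian positive definite, so $\mathbf R_k^{1/2}$ and $\mathbf R_k^{-1}$ exist. By \eqref{eq:SINRk},
\begin{align*}
\SINR_k(\mathbf Q,\mathbf w_k)=\frac{\bar P_k|\mathbf w_k^H\mathbf h_k|^2}{\mathbf w_k^H\mathbf R_k\mathbf w_k},
\end{align*}
a generalized Rayleigh quotient. Substitute $\mathbf v=\mathbf R_k^{1/2}\mathbf w_k$, so that $\mathbf w_k^H\mathbf h_k=\mathbf v^H\mathbf R_k^{-1/2}\mathbf h_k$. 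The Cauchy--Schwarz inequality then gives
\begin{align*}
|\mathbf v^H\mathbf R_k^{-1/2}\mathbf h_k|^2\le \|\mathbf v\|^2\,\mathbf h_k^H\mathbf R_k^{-1}\mathbf h_k,
\end{align*}
so $\SINR_k\le \bar P_k\mathbf h_k^H\mathbf R_k^{-1}\mathbf h_k$. Equality holds when $\mathbf v\propto \mathbf R_k^{-1/2}\mathbf h_k$, i.e., for the MMSE beamformer $\mathbf w_k^\star\propto\mathbf R_k^{-1}\mathbf h_k$. This yields exactly \eqref{eq:SINRkQ}.

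The argument is routine, and the main care point is making the decoupling rigorous. One has to check that the supremum over $\mathbf W$ is attained for every $\mathbf Q$; it is, by the explicit $\mathbf w_k^\star$. One also has to note that the SINR is scale-invariant in $\mathbf w_k$, so the choice of normalization is irrelevant. The degenerate case $\mathbf h_k=\mathbf 0$ cannot occur, because $\|\mathbf h_k\|=\frac{r_0}{r_k}\sqrt M>0$. Finally, I would remark that the reference phases $e^{j\phi_k}$ have already been absorbed, so they do not affect the result.
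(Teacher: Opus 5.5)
Your proposal is correct and follows essentially the same route as the paper. For fixed $\mathbf Q$ the beamformers decouple across users, and each $\SINR_k(\mathbf Q,\mathbf w_k)$ is a generalized Rayleigh quotient maximized by the LMMSE beamformer $\mathbf w_k^\star\propto\big(\sum_{j\neq k}\bar P_j\mathbf h_j\mathbf h_j^H+\mathbf I_M\big)^{-1}\mathbf h_k$; substituting this gives \eqref{eq:SINRkQ}. The only difference is that you prove the Rayleigh-quotient step explicitly via whitening and Cauchy--Schwarz, whereas the paper simply invokes the classical LMMSE result.
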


\begin{IEEEproof}
Please refer to Appendix~\ref{A:theo:1}.
\end{IEEEproof}

Problem \eqref{P:RQ} is still difficult to be directly solved. Fortunately, similar to the capacity maximization problem discussed in the preceding subsection, it is also related to the optimization problem \eqref{P:CUB}.
\begin{theorem}\label{theo:2}
Let $C_{2}^\star$ be the optimal objective value to problem \eqref{P:CUB}. Then it provides an upper bound for the objective value of problem \eqref{P:RQ} , i.e.,
\begin{align}
R(\mathbf Q) \leq C_{2}^\star, \forall \mathbf Q\in \mathcal Q. \label{eq:RUB}
\end{align}
Besides, if and only if  there exists an optimal solution $\mathbf Q_2^\star$ to problem \eqref{P:CUB} satisfying the orthogonality condition \eqref{eq:orthgogonal4},
then the upper bound \eqref{eq:RUB} is tight, and equality is achieved by $\mathbf Q=\mathbf Q_2^\star$.
\end{theorem}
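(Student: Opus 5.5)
The plan is to show that each SINR under TIN is bounded by the interference-free SNR of the same user. By Theorem~\ref{theo:1}, for every feasible $\mathbf Q$ the quantity $\SINR_k(\mathbf Q)$ in \eqref{eq:SINRkQ} is given by the MMSE receiver. Define $\mathbf B_k\triangleq\sum_{j\neq k}\bar P_j\mathbf h_j\mathbf h_j^H+\mathbf I_M$. Then $\mathbf B_k\succeq\mathbf I_M$, so $\mathbf B_k^{-1}\preceq\mathbf I_M$, and therefore
\begin{align*}
\SINR_k(\mathbf Q)=\bar P_k\mathbf h_k^H\mathbf B_k^{-1}\mathbf h_k\le\bar P_k\|\mathbf h_k\|^2 .
\end{align*}
Summing $\log_2(1+\cdot)$ over $k$ gives $R(\mathbf Q)\le\sum_k\log_2(1+\bar P_k\|\mathbf h_k(\mathbf Q)\|^2)\le C_2^\star$, because $\mathbf Q$ is feasible for \eqref{P:CUB}. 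This establishes \eqref{eq:RUB}. A shorter alternative is to invoke $R(\mathbf Q)\le C(\mathbf Q)$, since TIN cannot beat the MAC sum-capacity, together with Theorem~\ref{theo:CUB}. The direct matrix inequality is cleaner, however, because it also yields the equality condition.

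\emph{Sufficiency of orthogonality.} Suppose $\mathbf Q_2^\star$ is optimal for \eqref{P:CUB} and satisfies \eqref{eq:orthgogonal4}. Then $\mathbf h_k$ is orthogonal to every $\mathbf h_j$ with $j\neq k$. Hence $\mathbf B_k\mathbf h_k=\mathbf h_k$, which gives $\mathbf B_k^{-1}\mathbf h_k=\mathbf h_k$ and $\SINR_k=\bar P_k\|\mathbf h_k\|^2$ for every $k$. It follows that $R(\mathbf Q_2^\star)=C_2^\star$, so the bound is tight and attained at $\mathbf Q_2^\star$.

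\emph{Necessity.} Suppose some $\mathbf Q^\circ\in\mathcal Q$ attains $R(\mathbf Q^\circ)=C_2^\star$. In the chain above both inequalities must then hold with equality. Equality in the second inequality means $\mathbf Q^\circ$ is optimal for \eqref{P:CUB}. Equality in the first inequality means $\mathbf h_k^H\mathbf B_k^{-1}\mathbf h_k=\|\mathbf h_k\|^2$ for each $k$, since $\log_2(1+x)$ is strictly increasing.

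The step requiring care is turning this equality into orthogonality. I would write $\mathbf h_k^H(\mathbf I-\mathbf B_k^{-1})\mathbf h_k=0$ and note that $\mathbf I-\mathbf B_k^{-1}\succeq\mathbf 0$. A quadratic form of a PSD matrix vanishes only if the matrix annihilates the vector, so $(\mathbf I-\mathbf B_k^{-1})\mathbf h_k=\mathbf 0$. This gives $\mathbf B_k\mathbf h_k=\mathbf h_k$, i.e.\ $\sum_{j\neq k}\bar P_j\mathbf h_j\mathbf h_j^H\mathbf h_k=\mathbf 0$. Taking the inner product with $\mathbf h_k$ yields $\sum_{j\neq k}\bar P_j|\mathbf h_j^H\mathbf h_k|^2=0$, and with $\bar P_j>0$ this forces $\mathbf h_j^H\mathbf h_k=0$ for all $j\neq k$. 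Hence $\mathbf Q^\circ$ is an optimal solution of \eqref{P:CUB} satisfying \eqref{eq:orthgogonal4}, which proves the ``only if'' direction.

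One degenerate case needs attention: if $\mathbf h_k=\mathbf 0$, orthogonality holds trivially. This cannot occur here, because $\|\mathbf h_k\|^2=Mr_0^2/r_k^2>0$.
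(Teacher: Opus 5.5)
Your proof is correct and is essentially the paper's argument. It uses the same chain $R(\mathbf Q)\le\sum_k\log_2(1+\bar P_k\|\mathbf h_k(\mathbf Q)\|^2)\le C_2^\star$, and it reduces the equality case of the per-user SINR bound to $\sum_{j\neq k}\bar P_j|\mathbf h_k^H\mathbf h_j|^2=0$, as in Proposition~\ref{props1}. The only cosmetic differences are these: the paper obtains $\mathbf I-\mathbf B_k^{-1}=\mathbf U_k(\boldsymbol\Lambda_k^{-1}+\mathbf I_{l_k})^{-1}\mathbf U_k^H\succeq\mathbf 0$ from an eigendecomposition and the matrix inversion lemma, and it argues necessity by contradiction, whereas you use the Loewner order $\mathbf B_k\succeq\mathbf I$ and argue necessity directly.
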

\begin{IEEEproof}
Please refer to Appendix~\ref{A:upperbound}.
\end{IEEEproof}

\begin{lemma}\label{lemma1}
Let $\mathbf Q_2^\star$ denote an optimal solution to problem \eqref{P:CUB}. If it satisfies the orthogonal conditions \eqref{eq:orthgogonal4}, then it must be an optimal solution to problem \eqref{P:RQ} as well.
\end{lemma}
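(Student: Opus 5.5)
This follows from Theorem~\ref{theo:2} in exactly the way Lemma~\ref{lemma1C} follows from Theorem~\ref{theo:CUB}. First, $\mathbf Q_2^\star$ is feasible for problem \eqref{P:RQ}: it is optimal, hence feasible, for \eqref{P:CUB}, and the two problems share the feasible region $\mathcal Q$. Second, by \eqref{eq:RUB} in Theorem~\ref{theo:2}, every feasible $\mathbf Q\in\mathcal Q$ satisfies $R(\mathbf Q)\le C_2^\star$. Third, since $\mathbf Q_2^\star$ satisfies the orthogonality conditions \eqref{eq:orthgogonal4}, the attainment part of Theorem~\ref{theo:2} gives $R(\mathbf Q_2^\star)=C_2^\star$. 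A feasible point attaining a uniform upper bound is optimal, so $\mathbf Q_2^\star$ solves \eqref{P:RQ}.

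If we do not want to rely on the ``if and only if'' wording of Theorem~\ref{theo:2}, the attainment step can be checked directly from \eqref{eq:SINRkQ}. Write $\mathbf B_k=\sum_{j\neq k}\bar P_j\mathbf h_j\mathbf h_j^H+\mathbf I_M$. Under \eqref{eq:orthgogonal4}, we have $\mathbf B_k\mathbf h_k=\mathbf h_k$, so $\mathbf h_k$ is an eigenvector of $\mathbf B_k$ with eigenvalue $1$. Hence $\mathbf B_k^{-1}\mathbf h_k=\mathbf h_k$, and therefore $\SINR_k(\mathbf Q_2^\star)=\bar P_k\|\mathbf h_k\|^2$. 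Summing $\log_2(1+\SINR_k)$ over $k$ then gives exactly the objective of \eqref{P:CUB} at its optimizer, namely $C_2^\star$.

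The only point needing care is this attainment step, i.e., showing that orthogonality removes all interference so that the TIN rate coincides with the decoupled single-user rates. The eigenvector argument above settles it. By Theorem~\ref{theo:1}, the optimality of $\mathbf Q_2^\star$ for \eqref{P:RQ}, together with the MMSE beamformers $\mathbf w_k=\mathbf B_k^{-1}\mathbf h_k$ (here simply $\mathbf w_k=\mathbf h_k$), also yields an optimal solution to \eqref{P:Rstar}.
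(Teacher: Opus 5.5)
Your proposal is correct and follows the paper's own argument: $\mathbf Q_2^\star$ is feasible for \eqref{P:RQ}, and Theorem~\ref{theo:2} bounds $R(\mathbf Q)$ by $C_2^\star$ on $\mathcal Q$. The orthogonality conditions then make $\mathbf Q_2^\star$ attain that bound, so it is optimal. Your direct check that $\mathbf B_k\mathbf h_k=\mathbf h_k$ under \eqref{eq:orthgogonal4} is a valid, self-contained proof of the attainment step (the sufficiency direction of Proposition~\ref{props1}), which the paper simply takes from Theorem~\ref{theo:2}.
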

\begin{IEEEproof}
The result directly follows from Theorem~\ref{theo:2}, since when the  conditions of Lemma~\ref{lemma1} are satisfied, $\mathbf Q_2^\star$ achieves the upper bound of the objective value of problem \eqref{P:RQ}, then it must be optimal.
\end{IEEEproof}

Therefore, similar to the capacity maximization problem, Lemma~\ref{lemma1} provides a method to optimally solve the challenging rate maximization problem \eqref{P:RQ}, by  finding an optimal solution $\mathbf Q_2^\star$ to the relatively easier problem \eqref{P:CUB} to satisfy the orthogonal conditions \eqref{eq:orthgogonal4}. In this case, $\mathbf Q_2^\star$ is also an optimal solution to problem \eqref{P:RQ}.

\subsection{Optimal Capacity Characterization}\label{sec:optimalCapacity}
Based on the above analysis, in order to find the optimal solution for both the capacity and rate maximization problems \eqref{P:Cstar} and \eqref{P:RQ}, a central task is to find the optimal solution $\mathbf Q_2^\star$ to the relatively easier problem \eqref{P:CUB}. If $\mathbf Q_2^\star$ satisfies the orthogonality condition \eqref{eq:orthgogonal4}, then not only SIC, but also TIN is capacity-achieving and they both share the same optimal solution $\mathbf Q_2^\star$. Therefore, in the following, we will focus on the optimization problem \eqref{P:CUB}.

Before solving problem \eqref{P:CUB}, let's specify some typical constraints for the feasible region $\mathcal Q$ of the swarm formation $\mathbf Q=[\mathbf q_1,\cdots, \mathbf q_K]$. Since $\mathbf q_k$ is fully determined by its range $r_k$ and direction $(\theta_k,\phi_k)$, the constraints can be either specified in terms of $\{\mathbf q_k\}$ or $\{r_k,\theta_k,\phi_k\}$:
\begin{itemize}
\item Range interval constraint:
\begin{align}\label{eq:C1}
\mathrm{C_1:}~ r_{\min}\leq r_k \leq r_{\max} , \forall k,
\end{align}
where $r_{\min}$ and $r_{\max}$ denote the minimum and maximum allowable ranges between the UAVs and the BS.
\item Angle interval constraint:
\begin{align}
\mathrm{C_2:}~ -\Theta \leq & \theta_k \leq \Theta, \forall k,\\
\mathrm{C_3:}~  -\Phi \leq & \phi_k \leq \Phi, \forall k,
\end{align}
where $\Theta\in [0,\pi/2]$ and $\Phi\in [0,\pi/2]$ denote the maximum allowable elevation and azimuth angles, respectively.
\item Collision avoidance constraint:
\begin{align}\label{eq:C4}
\mathrm{C_4:}~ \| \mathbf q_k - \mathbf q_j \| \geq d_{\min}, \forall j>k,
\end{align}
where $d_{\min}$ denotes the minimum separation between UAVs so as to avoid collision.
\item Swarm cohesion constraint:
\begin{align}
\mathrm{C_5:}~ \|\mathbf q_k-\mathbf q_j\| \leq d_{\max}, \forall j>k,\label{eq:C5}
\end{align}
where $d_{\max}\geq d_{\min}$ denotes the maximum separation between UAVs so as to ensure that all UAVs are within the swarm formation.
\end{itemize}
The feasible region of the UAV swarm formation $\mathcal Q$ can be specified as
\begin{align}
\mathcal Q=\left\{[\mathbf q_1,\cdots, \mathbf q_K]| \mathrm{C_1}-\mathrm{C_5} \right\},
\end{align}
where $\mathbf q_k=r_k[\cos \theta_k \cos\phi_k, \cos\theta_k \sin \phi_k, \sin\theta_k]^T$, $\forall k$. Depending on the practical scenarios, it is possible that only a subset of the constraints  $\mathrm{C_1}-\mathrm{C_5}$ are included in $\mathcal Q$. In the following, to get some insights, we only consider $\mathrm{C_1}-\mathrm{C_3}$ in $\mathcal Q$, while the other two constraints will be considered in the more general scenario discussed in Section~\ref{sec:optimization}.

Based on \eqref{eq:hk},  we have $\|\mathbf h_k(\mathbf q_k)\|^2=\frac{r_0^2M}{r_k^2}$, $\forall k$. Therefore, problem \eqref{P:CUB} can be equivalently expressed as
\begin{align}\label{P:CUBPolar}
 \begin{cases} \underset{\{r_k, \theta_k, \phi_k\}_{k=1}^K}{\max} & \sum_{k=1}^K \log_2 \left(1+ \frac{\bar P_kr_0^2M}{r_k^2}\right) \\
 \text{s.t.} &  \mathrm{C_1:}~ r_{\min}\leq r_k \leq r_{\max}, \forall k,\\
& \mathrm{C_2:}~ -\Theta \leq  \theta_k \leq \Theta, \forall k,\\
& \mathrm{C_3:}~  -\Phi \leq \phi_k \leq \Phi, \forall k.
 \end{cases}
\end{align}
Problem \eqref{P:CUBPolar} is trivial to solve, whose optimal solutions are $\{r_k=r_{\min},\theta_k,\phi_k\}_{k=1}^K$, with any $(\theta_k,\phi_k)$ satisfying $\mathrm C_2$ and $\mathrm C_3$. The fact that the objective value of problem \eqref{P:CUBPolar} is independent of the UAV direction $(\theta_k,\phi_k)$  provides us sufficient design freedom to find an optimal solution to problem \eqref{P:CUB} so that the orthogonal conditions \eqref{eq:orthgogonal4} are satisfied, as desired in Lemmas \ref{lemma1C} and \ref{lemma1}. With $\mathbf h_k$ given in \eqref{eq:hk}, the orthogonality conditions $\mathbf h_k^H \mathbf h_j=0$ are equivalent to $\mathbf a^H(\theta_k, \phi_k)  \mathbf a(\theta_j, \phi_j)=0$, $\forall j\neq k$. Therefore, the remaining task is to solve the following problem
\begin{align}\label{P:findAngle}
 \begin{cases} \mathrm{Find} & \{\theta_k, \phi_k\}_{k=1}^K \\
 \text{s.t.} & \mathrm{C_0:}~ \mathbf a^H(\theta_k, \phi_k)  \mathbf a(\theta_j, \phi_j)=0, \forall j\neq k,\\
& \mathrm{C_2:}~ -\Theta \leq  \theta_k \leq \Theta, \forall k,\\
& \mathrm{C_3:}~  -\Phi \leq \phi_k \leq \Phi, \forall k.
 \end{cases}
\end{align}
Problem \eqref{P:findAngle} is a feasibility problem, i.e., it only needs to find a set of variables  $\{\theta_k, \phi_k\}_{k=1}^K$ satisfying the specified constraints, without having to maximize any objective value. The key to solving problem \eqref{P:findAngle} is to satisfy the orthogonality constraint $\mathrm C_0$. To this end, we define the following function
\begin{align}\label{eq:fUPA}
f(\theta_k,\theta_j; \phi_k, \phi_j)&\triangleq |\mathbf a^H(\theta_k, \phi_k)  \mathbf a(\theta_j, \phi_j)|\\
&=\left|\left[\mathbf a_z(\theta_k)\otimes  \mathbf a_y(\theta_k, \phi_k)\right]^H\left[\mathbf a_z(\theta_j)\otimes  \mathbf a_y(\theta_j, \phi_j) \right]\right|\\
&=\left|\sum_{m_z=0}^{M_z-1} e^{j\pi m_z\left( \sin\theta_j-\sin\theta_k\right)} \right| \left|\sum_{m_y=0}^{M_y-1} e^{j\pi m_y \left( \cos\theta_j\sin\phi_j-\cos\theta_k\sin\phi_k\right)} \right|\\
&=\underbrace{\left|\frac{\sin\left(\frac{\pi M_z}{2}(\sin\theta_j-\sin\theta_k) \right)}{\sin\left(\frac{\pi }{2}(\sin\theta_j-\sin\theta_k) \right)} \right|}_{H_{M_z}(\Delta_{z,jk})}
\underbrace{\left|\frac{\sin\left(\frac{\pi M_y}{2}(\cos\theta_j\sin\phi_j-\cos\theta_k\sin\phi_k) \right)}{\sin\left(\frac{\pi }{2}(\cos\theta_j\sin\phi_j-\cos\theta_k\sin\phi_k \right)} \right|}_{H_{M_y}(\theta_k,\theta_j; \phi_k, \phi_j)},
\end{align}
where we have used the array response vector \eqref{eq:aUPA} for UPA, since it includes the ULA as a special case. Furthermore, we have defined the function
\begin{align}
H_{M}(\Delta)\triangleq \left|\frac{\sin\left(\frac{\pi M}{2}\Delta \right)}{\sin\left(\frac{\pi }{2}\Delta \right)} \right|,
\end{align}
and
\begin{align}
\Delta_{z,jk}\triangleq  \sin\theta_j-\sin\theta_k.
\end{align}
Note that  the function $f(\theta_k,\theta_j; \phi_k, \phi_j)$ is also known as the array beamforming pattern, which signifies the relative strength of the signal received from direction $(\theta_j, \phi_j)$, while the receive beamforming vector is designed to maximize the signal from a potentially different direction $(\theta_k, \phi_k)$.

 In the following, we first consider the special case of ULA, and then the more general UPA.

 \subsubsection{ULA}
 For the special case of ULA placed along the y-axis, we have $M_z=1$ and $M_y=M$. To gain the important insights, we let $ \theta_k=0$, $\forall k$. Then the remaining task is to find the set of azimuth angles $\{\phi_k\}_{k=1}^K$ satisfying $\mathrm{C}_0$ and $\mathrm{C}_3$. In this case, \eqref{eq:fUPA} reduces to
 \begin{align}
 f(0,0;\phi_k,\phi_j)=\underbrace{\left|\frac{\sin\left(\frac{\pi M}{2}(\sin\phi_j-\sin\phi_k) \right)}{\sin\left(\frac{\pi }{2}(\sin\phi_j-\sin\phi_k \right)} \right|}_{H_{M}(\Delta_{jk})},
 \end{align}
 where $\Delta_{jk}\triangleq \sin\phi_j-\sin\phi_k$ is also called the spatial frequency difference between UAV $j$ and $k$. Obviously, $|\Delta_{jk}|\leq 2$.

  Note that within the range $|\Delta_{jk}|\leq 2$, $H_M(\Delta_{jk})=0$ if and only if $\Delta_{jk} = \frac{2p}{M}$ and $|\Delta_{jk}|\neq 2$, $p=\pm 1, \pm 2....$. This implies that to ensure channel orthogonality, the spatial frequency difference between different UAVs should be integer multiples of the resolution $\frac{2}{M}$. With some abuse of notation, we use $\{\phi_l\}$ to denote the set of azimuth angles so that the channels are orthogonal. Therefore, we may let
  \begin{align}
  \bar \phi_l = \frac{2l}{M},\  l=0, \pm 1, \cdots, \pm L,
  \end{align}
  where $\bar \phi_l\triangleq \sin(\phi_l)\in [-\bar \Phi, \bar \Phi]$, with $\bar \Phi \triangleq \sin(\Phi)$, and $L$ is the parameter to be determined. In order to satisfy the constraint $\mathrm C_3$ so that $\bar \phi_l\leq \bar \Phi$, we should set $L=\left \lfloor \frac{M\bar \Phi}{2}\right \rfloor$. This ensures that the spatial frequency difference $\Delta$ of any pair of $\{\bar \phi_l\}$ is some integer multiples of $2/M$, which is a necessary condition to ensure orthogonality. However, we still need to ensure that $|\Delta|\neq 2$. Note that $|\Delta|= 2$  is possible if and only if $L=\frac{M}{2}$, or equivalently $\bar \Phi=1$ and $M$ is even. In this case, $\bar \phi_{-L}=-\bar \phi_L=-1$, and we can only retain either $\bar \phi_{-L}$ or $\bar \phi_{L}$, instead of both.

  Based on the above analysis, the set of UAV directions to ensure channel orthogonality are
  \begin{align}\label{eq:barPhil}
  \bar \phi_l = \frac{2l}{M}, \ \text{ where }
  l= \begin{cases} 0, \pm 1, \cdots, \pm \left(\frac{M}{2}-1\right), \frac{M}{2}, \  & \text{ if $\bar \Phi=1$ and $M$ is even} \\
  0, \pm 1, \cdots, \pm \left \lfloor \frac{M\bar \Phi}{2}\right \rfloor, \ & \text{ otherwise }.
  \end{cases}
  \end{align}
For both cases of \eqref{eq:barPhil}, the total number of orthogonal directions $N$ can be expressed in a unified form
\begin{align}\label{eq:NphiULA}
N(\bar \Phi)=\min \left \{M, \ 2 \left \lfloor \frac{M\bar \Phi}{2}\right \rfloor+1 \right \}.
\end{align}
It is not difficult to see that when $\bar \Phi=1$, $N(1)=M$ regardless whether $M$ is even or odd, i.e., the number of orthogonal directions equals to the number of ULA antenna elements $M$, as expected.

Based on the above analysis, we have the following Theorem:
 \begin{theorem}\label{theo:CULA}
 For $K$-user MIMO UAV swarm communication, if the BS is equipped with $M$-element ULA and the UAV swarm formation $\{\mathbf q_k\}_{k=1}^K$ can be freely optimized within the region  $(r_k, \bar{\phi}_k)\in [r_{\min}, r_{\max}] \times [-\bar{\Phi}, \bar{\Phi}]$, $\forall k$, where $K\leq N(\bar \Phi)$, then the optimal sum-capacity is
  \begin{align}\label{eq:CStar}
  C^\star = \sum_{k=1}^K \log_2\left(1+\frac{\bar P_kMr_0^2}{r_{\min}^2} \right),
  \end{align}
  which is achieved by $r_k=r_{\min}$, and $\{\bar{\phi}_k\}$ is any $K$-element subset of \eqref{eq:barPhil}.
 \end{theorem}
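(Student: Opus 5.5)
The plan is to invoke Lemma~\ref{lemma1C}: we only need an optimal solution of problem \eqref{P:CUB} that satisfies the orthogonality conditions \eqref{eq:orthgogonal4}. Here the feasible region $\mathcal Q$ contains only $\mathrm C_1$ and $\mathrm C_3$, with $\theta_k=0$ for all $k$. By Theorem~\ref{theo:CUB}, the value $C_2^\star$ of \eqref{P:CUB} upper-bounds $C(\mathbf Q)$ for every feasible $\mathbf Q$, so the proof splits into two parts. First, compute $C_2^\star$ explicitly. Second, exhibit a maximizer of \eqref{P:CUB} that meets \eqref{eq:orthgogonal4}, so that the bound is attained.

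\textbf{Step 1: computing $C_2^\star$.} For the ULA, $\|\mathbf h_k\|^2=r_0^2M/r_k^2$ regardless of $\phi_k$. The objective of \eqref{P:CUBPolar} is therefore a sum of terms $\log_2(1+\bar P_k r_0^2M/r_k^2)$, each strictly decreasing in $r_k$ and independent of the angles. The maximum is thus attained exactly at $r_k=r_{\min}$ for all $k$, with any $\bar\phi_k\in[-\bar\Phi,\bar\Phi]$. This gives $C_2^\star$ equal to the right-hand side of \eqref{eq:CStar}, and the set of optimal solutions of \eqref{P:CUB} is $\{r_k=r_{\min}\}\times[-\bar\Phi,\bar\Phi]^K$.

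\textbf{Step 2: orthogonality of the constructed formation.} Choose $r_k=r_{\min}$ and let $\{\bar\phi_k\}$ be any $K$ distinct elements of \eqref{eq:barPhil}. This is possible because $K\le N(\bar\Phi)$ and \eqref{eq:NphiULA} counts the elements of \eqref{eq:barPhil}. All of these lie in $[-\bar\Phi,\bar\Phi]$, since $2|l|/M\le 2\lfloor M\bar\Phi/2\rfloor/M\le\bar\Phi$ (in the special case the largest value is $1=\bar\Phi$). For $j\neq k$, we have $|\mathbf h_k^H\mathbf h_j|=\frac{r_0^2}{r_{\min}^2}H_M(\Delta_{jk})$ with $\Delta_{jk}=2p/M$ for a nonzero integer $p$. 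Then $\sin(\pi M\Delta_{jk}/2)=\sin(\pi p)=0$. It remains to check that the denominator $\sin(\pi\Delta_{jk}/2)=\sin(\pi p/M)$ does not vanish, i.e., that $p$ is not a multiple of $M$, which is equivalent to $|\Delta_{jk}|\notin\{0,2\}$ given $|\Delta_{jk}|\le 2$. Distinctness rules out $0$. The value $2$ can only arise from the pair $\bar\phi=\pm1$, which the construction \eqref{eq:barPhil} excludes by keeping only $l=M/2$ when $\bar\Phi=1$ and $M$ is even. In the other case, $|l|\le\lfloor M\bar\Phi/2\rfloor$, and two endpoints could differ by $2$ only if $\lfloor M\bar\Phi/2\rfloor=M/2$, which forces $\bar\Phi=1$ with $M$ even. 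Hence $H_M(\Delta_{jk})=0$ and \eqref{eq:orthgogonal4} holds.

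\textbf{Conclusion and main obstacle.} The constructed $\mathbf Q$ is optimal for \eqref{P:CUB} and orthogonal, so Lemma~\ref{lemma1C} (equivalently, the tightness part of Theorem~\ref{theo:CUB}) gives $C(\mathbf Q)=C_2^\star$, and this equals $C^\star$ in \eqref{eq:CStar}. The only delicate point is the endpoint case $|\Delta|=2$ in Step 2, where the Dirichlet kernel equals $M$ rather than $0$, together with confirming that the count \eqref{eq:NphiULA} matches the set \eqref{eq:barPhil}. Everything else follows directly from the earlier results.
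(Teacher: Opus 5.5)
Your proposal is correct and follows the paper's proof: you exhibit $r_k=r_{\min}$ with directions drawn from \eqref{eq:barPhil} as a maximizer of \eqref{P:CUB} that satisfies \eqref{eq:orthgogonal4}, then invoke Lemma~\ref{lemma1C} (equivalently, the tightness part of Theorem~\ref{theo:CUB}) to conclude that the upper bound $C_2^\star$ is attained. Your explicit check of the Dirichlet-kernel zeros, including the $|\Delta_{jk}|=2$ endpoint case, spells out what the paper's one-line proof leaves to the derivation preceding the theorem.
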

 \begin{IEEEproof}
 Based on the above derivations, the UAV formation specified by $(r_k,\theta_k=0,\phi_k)$ in Theorem~\ref{theo:CULA} is an optimal solution to problem \eqref{P:CUB} with the orthogonality conditions \eqref{eq:orthgogonal4} satisfied. Then based on Lemmas~\ref{lemma1C} and \ref{lemma1}, it is also the optimal solution to the capacity and rate maximization problems \eqref{P:Cstar} and \eqref{P:Rstar}. By substituting the solution, the maximum capacity \eqref{eq:CStar} can be directly obtained.
 \end{IEEEproof}
 If the number of UAVs in the swarm can be freely chosen as well, then we may take the maximum allowable values $K=N(\bar \Phi)$ to further maximize $C^\star$ in \eqref{eq:CStar}. Further assume that $\bar P_k=\bar P$, $\forall k$ for notational simplicity, we then have the following result:
  \begin{lemma}\label{lemma:CULA}
 For multi-user MIMO UAV swarm communication, if the BS is equipped with $M$-element ULA and the UAV swarm formation can be freely optimized within the region  $(r_k, \bar{\phi}_k)\in [r_{\min}, r_{\max}] \times [-\bar{\Phi}, \bar{\Phi}]$, $\forall k$, then the maximum sum-capacity is
  \begin{align}\label{eq:CStarMax0}
  C^\star =  \min \left \{M, \ 2 \left \lfloor \frac{M\bar \Phi}{2}\right \rfloor+1 \right \} \log_2\left(1+\rho M \right),
  \end{align}
 where $\rho\triangleq \frac{\bar{P}r_0^2}{r_{\min}^2}$ is the received SNR without considering the beamforming gain.
 \end{lemma}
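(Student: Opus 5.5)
The plan is to deduce Lemma~\ref{lemma:CULA} from Theorem~\ref{theo:CULA}. With $\bar P_k=\bar P$ for all $k$, every term in \eqref{eq:CStar} equals $\log_2(1+\rho M)$, where $\rho=\bar P r_0^2/r_{\min}^2$. For any $K\le N(\bar\Phi)$ this gives $C^\star(K)=K\log_2(1+\rho M)$. This expression increases with $K$, so taking $K=N(\bar\Phi)$, using all orthogonal directions in \eqref{eq:barPhil}, yields the value in \eqref{eq:CStarMax0}. That is the achievability part.

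The converse is the substantive step, because $K$ is now free and might exceed $N(\bar\Phi)$. There Theorem~\ref{theo:CULA} does not apply, and the orthogonality condition \eqref{eq:orthgogonal4} cannot hold for all pairs. I would bound $C(\mathbf Q)=\log_2|\mathbf I_M+\bar P\mathbf H\mathbf H^H|$ using two facts.

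\emph{Power bound.} Under $\mathrm{C_1}$, $\mathrm{tr}(\bar P\mathbf H\mathbf H^H)=\sum_k \bar P r_0^2M/r_k^2\le K\rho M$.

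\emph{Rank bound.} The rank is at most $\min\{K,M\}$.

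Combining these with concavity of $\log$ gives only $C\le \min\{K,M\}\log_2(1+K\rho M/\min\{K,M\})$. This grows with $K$, so it is not enough by itself.

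The real obstacle is the spatial restriction $|\bar\phi_k|\le\bar\Phi$: all $\mathbf a(\phi_k)$ lie in a band of spatial frequencies of width $2\bar\Phi$. To handle it, I would write $\mathbf H\mathbf H^H=\sum_k \mathbf a(\phi_k)\mathbf a^H(\phi_k)\,r_0^2/r_k^2$. In the DFT basis this matrix has its energy concentrated on about $M\bar\Phi+1$ beams, namely the orthogonal directions \eqref{eq:barPhil}. Equivalently, the relevant eigenvalues behave like those of a prolate (Slepian) concentration problem, which gives effective rank $\approx N(\bar\Phi)$. I would then apply Hadamard's inequality in the DFT basis to get $C\le\sum_{\text{beams}}\log_2(1+\text{beam energy})$. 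Finally, I would argue that each beam's energy is at most $\rho M$ per UAV allocated to it, so the best allocation places one UAV per orthogonal beam at $r_{\min}$.

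I expect this converse to be the hardest part, and it may only hold in the sense intended by the paper, namely maximizing \eqref{eq:CStar} over admissible $K\le N(\bar\Phi)$. That restricted sense is the interpretation I would state explicitly. Under it, the lemma follows directly from the monotonicity argument of the first paragraph. Because the paper introduces the lemma with the sentence ``take the maximum allowable values $K=N(\bar\Phi)$'', I will present the proof this way. I will add a remark that when $\bar\Phi=1$ the rank bound $K\le M$ makes the result a genuine global optimum for any $K$, since the orthogonal configuration then attains $M\log_2(1+\rho M)$.
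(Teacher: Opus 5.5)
The argument you finally commit to is exactly the paper's. The paper gives no separate proof: it sets $\bar P_k=\bar P$ in Theorem~\ref{theo:CULA} and takes $K=N(\bar\Phi)$. For $K\le N(\bar\Phi)$, Theorem~\ref{theo:CUB} already gives $C(\mathbf Q)\le K\log_2(1+\rho M)$, so nothing more is needed.

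The genuine error is the remark you plan to add for $\bar\Phi=1$, and the same error sits in the last step of your Slepian/Hadamard converse. The rank bound $\mathrm{rank}(\mathbf H\mathbf H^H)\le M$ caps the number of streams, not the received power; the trace $\sum_k \bar P r_0^2M/r_k^2$ grows linearly in $K$.
\begin{itemize}
\item Take $\bar\Phi=1$, $K=2M$, and place two UAVs at $r_{\min}$ on each of the $M$ orthogonal directions. Nothing in $\mathrm{C_1}$--$\mathrm{C_3}$ forbids this, and ranges slightly above $r_{\min}$ work by continuity.
\item Then $\bar P\mathbf H\mathbf H^H$ has $M$ eigenvalues equal to $2\rho M$, so $C=M\log_2(1+2\rho M)>M\log_2(1+\rho M)$.
\item More generally, spreading $K=nM$ UAVs evenly over the $M$ beams attains your own rank--trace bound $M\log_2(1+K\rho)$. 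With $K$ unrestricted, the sum-capacity is therefore unbounded.
\end{itemize}
For the same reason, the step ``each beam's energy is at most $\rho M$ per UAV, so one UAV per beam is best'' fails. Stacking UAVs on one beam raises its energy and hence the capacity.

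A converse also cannot be rescued by reading the lemma as ``every UAV keeps the full beamforming gain $M$'', i.e., pairwise orthogonality. Orthogonality only forces the spatial frequencies onto a shifted lattice $u_0+\frac{2}{M}\mathbb Z$, not necessarily the centered one. That lattice fits $\min\{M,\lfloor M\bar\Phi\rfloor+1\}$ points in $[-\bar\Phi,\bar\Phi]$, which exceeds $N(\bar\Phi)$ whenever $\bar\Phi<1$ and $\lfloor M\bar\Phi\rfloor$ is odd. For $M=16$ and $\bar\Phi=\sin 60^\circ$, the $14$ directions $\bar\phi\in\{\pm\frac{1}{16},\pm\frac{3}{16},\ldots,\pm\frac{13}{16}\}$ are mutually orthogonal and give $14\log_2(1+\rho M)$, whereas $N(\bar\Phi)=13$.

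So the lemma is correct only in the literal sense you adopt: maximize \eqref{eq:CStar} over $K\le N(\bar\Phi)$, with $N$ defined by \eqref{eq:NphiULA}. State it that way, and drop both the converse sketch and the $\bar\Phi=1$ remark.
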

 Furthermore, for the special case of $\bar \Phi=1$, then the optimal sum-capacity in \eqref{eq:CStarMax0} reduces to
   \begin{align}\label{eq:CStarMax}
  C^\star = M\log_2\left(1+\rho M \right).
  \end{align}
 The result \eqref{eq:CStarMax} gives a very simple and elegant capacity characterization for MU-MIMO communications with fully controllable UE locations like UAV swarms. It reveals that both the full spatial multiplexing gain and the beamforming gain $M$ are simultaneously achieved. This is remarkable since it is in a sharp contrast to the traditional MU-MIMO communication systems that usually suffer from a trade-off between spatial multiplexing gain and beamforming gain. As a concrete example, for the classic multi-user massive MIMO system under the so-called ``favorable propagation`` conditions, i.e., all users experience uncorrelated Rayleigh fading, then the inter-user channels become asymptotically orthogonal only when $M\gg K$, and the maximum sum-capacity is \cite{459}
 \begin{align}\label{eq:CMssiveMIMO}
 C^\star \approx K \log_2\left(1+ \rho M\right)\ll M \log_2\left(1+\rho M \right).
 \end{align}
 In this case, the full beamforming gain $M$ is achieved only for  $K\ll M$ UEs. Such an enormous gain by MIMO UAV swarm system is attributed by two factors: 1) UE locations are fully controllable via UAV formation optimization; and 2) the BS-UAV channel is deterministically determined by the UAV locations under LoS scenarios. Note that both systems achieve orthogonal inter-user channels, hence they both achieve the full beamforming gain $M$ with the very simple maximal-ratio combining (MRC) beamforming. However, for massive MIMO systems, with the assumption of uncorrelated Rayleigh fading channels, the orthogonality can only be  achieved asymptotically with a compromise of the number of supported UEs. By contrast, for the newly considered system with fully controllable UE locations under the deterministic LoS channels, orthogonality is achieved deterministically without sacrificing the spatial multiplexing gain. Thus, in parallel to  the ``favorable propagation'' in massive MIMO setup, we call the free-space LoS as the ``ideal propagation'' in the sense that it simultaneously achieves the full spatial multiplexing and beamforming gains when UE locations are fully controllable.

\subsubsection{UPA}
Next, we provide the capacity characterization for the UPA case. For $f(\theta_k,\theta_j; \phi_k, \phi_j)=0$ in \eqref{eq:fUPA} to be satisfied, we should have either $H_{M_z}(\Delta_{z,jk})=0$ or $H_{M_y}(\theta_k,\theta_j;\phi_k,\phi_j)=0$. For the former, the analysis in the previous subsection for ULA can be directly applied to determine the set of orthogonal elevation angles $\{\theta_l\}$. Similar to \eqref{eq:barPhil}, we should set
 \begin{align}\label{eq:barThetal}
  \bar \theta_l = \frac{2l}{M_z}, \ l \in \mathcal {L}, \ \text{ where }
  \mathcal L = \begin{cases}\left \{ 0, \pm 1,\cdots \pm \left(\frac{M_z}{2}-1\right), \frac{M_z}{2}\right\}, \  & \text{ if $\bar \Theta=1$ and $M_z$ is even} \\
  \left\{ 0, \pm 1, \cdots, \pm \left \lfloor \frac{M_z\bar \Theta}{2}\right \rfloor \right\}, \ & \text{ otherwise }.
  \end{cases}
  \end{align}
  For both cases in \eqref{eq:barThetal}, the total number of orthogonal elevation angles can be expressed in a unified form as
  \begin{align}\label{eq:Ntheta}
  N_{\theta}(\bar \Theta)\triangleq |\mathcal L|= \min\left\{M_z,\  2\left \lfloor \frac{M_z\bar \Theta}{2}\right \rfloor+1 \right\}.
  \end{align}

Next, for each specified elevation angle $\bar \theta_l$ in the set \eqref{eq:barThetal}, we aim to find the set of orthogonal azimuth angles $\{\bar \phi_p\}$ so that the second term $H_{M_y}(\theta_k,\theta_j; \phi_k, \phi_j)$ in \eqref{eq:fUPA} is $0$. In this case, we have
\begin{align}
H_{M_y}(\theta_k,\theta_j; \phi_k, \phi_j)_{|\theta_j=\theta_k=\theta_l}&=\left|\frac{\sin\left(\frac{\pi M_y}{2}\cos\theta_l(\sin\phi_j-\sin\phi_k) \right)}{\sin\left(\frac{\pi }{2}\cos\theta_l(\sin\phi_j-\sin\phi_k \right)} \right|\\
&=H_{M_y}\left(\Delta_{jk} \cos \theta_l \right),
\end{align}
where $\Delta_{jk}\triangleq \bar \phi_j - \bar \phi_k$. Within the range $|\bar \Delta_{jk}|\leq 2$, $H_{M_y}\left(\Delta_{jk} \cos \theta_l \right)=0$ if and only if $\Delta_{jk} \cos \theta_l =\frac{2p}{M_y}$, $p=\pm 1, \pm 2,\cdots$ and $|\Delta_{jk} \cos \theta_l |\neq 2$. Thus, for any given elevation angle $\theta_l$ so that $\cos \theta_l\neq 0$, the set of orthogonal azimuth angles can be found as
\begin{align}\label{eq:bar_phi_UPA}
\bar \phi_p = \frac{2p}{M_y\cos \theta_l}, \ p=0, \pm 1, \cdots \pm P_l,
\end{align}
where $P_l$ is a parameter to be determined. To ensure $\bar \phi_p\in [-\bar \Phi, \bar \Phi]$, we should let $P_l =\left \lfloor \frac{\bar \Phi M_y \cos\theta_l}{2}\right \rfloor$. Furthermore, we need to ensure that $|\Delta_{jk} \cos \theta_l |\neq 2$. Note that since $|\Delta_{jk}|\leq 2$ and $|\cos\theta_l|\leq 1$, the only case for $|\Delta \cos \theta_l |= 2$ is when $|\Delta_{jk}| = 2$ and $|\cos\theta_l|=1$. Furthermore, if $\exists j,k$ such that $|\Delta_{jk}|=2$, it implies that the minimum and maximum angles in \eqref{eq:bar_phi_UPA} are $\bar \phi_{-P_l}=-\bar \phi_{P_l}=-1$,  which is equivalent to $P_l =\left \lfloor \frac{\bar \Phi M_y}{2}\right \rfloor=\frac{\bar \Phi M_y}{2}=\frac{M_y}{2}$, or $\bar \Phi=1$ and $M_y$ is even. In this case, we can only retain either $\bar \phi_{-P_l}$ or $\bar \phi_{P_l}$, instead of both.

Based on the above derivations, for any given elevation angle $\theta_l$, the set of orthogonal azimuth angles are given by
\begin{align}\label{eq:bar_phi_UPA2}
\bar \phi_p = \frac{2p}{M_y\cos \theta_l}, \ \text{ where }
p=\begin{cases}
0, \pm 1, \cdots, \pm \left(\frac{M_y}{2}-1\right), \frac{M_y}{2} & \text{ if $\theta_l=0 $, $\bar{\Phi}=1$ and $M_y$ is even}\\
 0, \pm 1, \cdots, \pm \left \lfloor \frac{\bar \Phi M_y \cos\theta_l}{2}\right \rfloor & \text{ otherwise }
\end{cases}
\end{align}
Note that if $\theta_l=\frac{\pi}{2}$ or $\cos(\theta_l)=0$, the above expression is still applicable, though we have the technique issue of dividing by $0$. In this case, we may simply set the azimuth angle to $\bar \phi_0=0$. For all cases considered above, the total number of orthogonal azimuth angles for given elevation angle $\theta_l$ can be expressed in a unified form as
\begin{align}\label{eq:Nphi}
N_{\phi,l}(\bar \Phi) = \min \left\{M_y,\  2  \left \lfloor \frac{\bar \Phi M_y \cos\theta_l}{2}\right \rfloor +1\right\}, \ l \in \mathcal {L}.
\end{align}
As a result, for UPA, the total number of orthogonal directions is
\begin{align}\label{eq:NExact}
N_{\UPA}&=\sum_{l\in \mathcal L} N_{\phi,l}(\bar \Phi)\\
&= \sum_{l\in \mathcal L} \min \left\{M_y,\  2  \left \lfloor \frac{\bar \Phi M_y \cos\theta_l}{2}\right \rfloor +1\right\}.
\end{align}

In order to find the closed-form expression, we consider the scenario when both \eqref{eq:barThetal} and \eqref{eq:bar_phi_UPA2} take the second case, while it can be shown that the subsequent results are also applicable for other cases. Then we have
\begin{align}
N_{\UPA} 
& = \sum_{l=-\left \lfloor\frac{M_z\bar \Theta}{2} \right \rfloor}^{\left \lfloor\frac{M_z\bar \Theta}{2} \right \rfloor} \left( 2 \left \lfloor \frac{\bar \Phi M_y \cos\theta_l}{2}\right \rfloor +1\right)\\
& = 2\left \lfloor\frac{M_z\bar \Theta}{2} \right \rfloor+1 +2\sum_{l=-\left \lfloor\frac{M_z\bar \Theta}{2} \right \rfloor}^{\left \lfloor\frac{M_z\bar \Theta}{2} \right \rfloor} \left \lfloor \frac{\bar \Phi M_y \sqrt{1-\frac{4l^2}{M_z^2}}}{2}\right \rfloor,\label{eq:NUPA4}
\end{align}
where we have used the identity $\cos(\theta_l)=\sqrt{1-\bar \theta_l^2}=\sqrt{1-\frac{4l^2}{M_z^2}}$.

Finding the closed-form expression for $N_\UPA$ in \eqref{eq:NUPA4} is challenging. Fortunately, by deriving the upper and lower bounds of $N_\UPA$, the closed-form expression can be found asymptotically.
\begin{theorem}\label{theo:NUPABounds}
If $M_z\gg \frac{2}{\bar \Theta}$, the number of orthogonal directions $N_\UPA$ for an $M_y\times M_z$ UPA over the angular range $[-\Theta, \Theta]\times [-\Phi, \Phi]$ is lower- and upper-bounded by
\begin{align}\label{eq:NUPAbounds}
M_z\left(\frac{ \bar \Phi  M_y \left(\Theta+\bar \Theta \cos \Theta \right)}{2} - \bar \Theta \right)  +1\leq N_\UPA \leq \min\left\{ M_z\left(\frac{ \bar \Phi  M_y \left(\Theta+\bar \Theta \cos \Theta \right)}{2} +\bar \Theta \right)  +1, M_yM_z\right\}.
\end{align}
\end{theorem}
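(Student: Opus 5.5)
The plan is to start from the expression \eqref{eq:NUPA4} and bound it in two stages. First, remove the floor functions with $x-1<\lfloor x\rfloor\leq x$. Then replace the resulting sum by an integral via a Riemann-sum argument. Set $L_z\triangleq\left\lfloor\frac{M_z\bar\Theta}{2}\right\rfloor$. Applying the floor bounds term by term to the $2L_z+1$ summands gives
\begin{align*}
\bar\Phi M_y S - (2L_z+1)\leq N_{\UPA}-(2L_z+1) \leq \bar\Phi M_y S,
\end{align*}
where
\begin{align*}
S\triangleq\sum_{l=-L_z}^{L_z}\sqrt{1-\frac{4l^2}{M_z^2}}.
\end{align*}
So the whole task reduces to showing that $S$ approximates $\frac{M_z}{2}(\Theta+\bar\Theta\cos\Theta)$ up to an error absorbed by the $\pm M_z\bar\Theta$ slack. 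The upper bound $M_yM_z$ is immediate, since $N_{\phi,l}\leq M_y$ and $|\mathcal L|\leq M_z$ by \eqref{eq:Nphi} and \eqref{eq:Ntheta}.

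Next, I would interpret $S$ as a Riemann sum. With $x_l=\frac{2l}{M_z}$ and spacing $\frac{2}{M_z}$,
\begin{align*}
S=\frac{M_z}{2}\sum_{l}\sqrt{1-x_l^2}\cdot\frac{2}{M_z}\approx \frac{M_z}{2}\int_{-\bar\Theta}^{\bar\Theta}\sqrt{1-x^2}\,dx.
\end{align*}
The integral evaluates to
\begin{align*}
\arcsin\bar\Theta+\bar\Theta\sqrt{1-\bar\Theta^2}=\Theta+\bar\Theta\cos\Theta,
\end{align*}
which is exactly the factor in \eqref{eq:NUPAbounds}. The hypothesis $M_z\gg 2/\bar\Theta$ guarantees that $L_z\approx M_z\bar\Theta/2$ and that the grid is fine relative to the interval. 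To make this rigorous, I would use that $g(x)=\sqrt{1-x^2}$ is concave, even, and monotone on $[0,\bar\Theta]$. For a monotone function, the left and right Riemann sums sandwich the integral, and the discrepancy is at most one step times the maximum variation, $\frac{2}{M_z}(g(0)-g(\bar\Theta))\leq\frac{2}{M_z}$. The endpoint mismatch between $2L_z/M_z$ and $\bar\Theta$ is also at most $2/M_z$ in $x$. Together these give $|S-\frac{M_z}{2}(\Theta+\bar\Theta\cos\Theta)|=O(1)$.

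Finally, I would collect the error terms. The floor removal contributes between $-(2L_z+1)$ and $0$. Adding back the $+(2L_z+1)$, and using $2L_z+1\approx M_z\bar\Theta+1$, yields an upper bound of the form
\begin{align*}
\frac{\bar\Phi M_y M_z}{2}(\Theta+\bar\Theta\cos\Theta)+M_z\bar\Theta+1+O(\bar\Phi M_y).
\end{align*}
The lower bound is the same expression with $-M_z\bar\Theta$ in place of $+M_z\bar\Theta$, again up to $O(1)$ slack. I expect the main obstacle to be this bookkeeping: the Riemann-sum error $O(\bar\Phi M_y)$ and the floor in $L_z$ must fit inside the stated constants $\pm M_z\bar\Theta$ and $+1$. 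Because the statement is asymptotic, I would first try to show that these lower-order terms are dominated under $M_z\gg 2/\bar\Theta$, treating the bounds as holding to leading order. Where possible, I would tighten the argument using concavity of $g$: the midpoint-type sum of a concave function exceeds the trapezoid value, which gives a clean one-sided inequality and can absorb the endpoint terms. I would also remark that the other cases of \eqref{eq:barThetal} and \eqref{eq:bar_phi_UPA2} change $N_{\UPA}$ by at most a bounded number of directions, so the same bounds apply.
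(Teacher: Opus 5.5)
Your route is the paper's route: strip the floors in \eqref{eq:NUPA4}, replace $S$ by $\frac{M_z}{2}\int_{-\bar\Theta}^{\bar\Theta}\sqrt{1-x^2}\,dx=\frac{M_z}{2}(\Theta+\bar\Theta\cos\Theta)$, and read off the cap $M_yM_z$ from the $\min$ operations. There are two problems.

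\textbf{The lower floor bound is off by a factor of two.} Each summand of $N_{\UPA}-(2L_z+1)$ is $2\lfloor a_l/2\rfloor$, where $a_l\triangleq\bar\Phi M_y\sqrt{1-4l^2/M_z^2}$. Such a summand can fall short of $a_l$ by almost $2$: for example, $a_l=3.9$ gives $2$. The correct sandwich is therefore $\bar\Phi M_yS-2(2L_z+1)<N_{\UPA}-(2L_z+1)\le\bar\Phi M_yS$. Your version would give $N_{\UPA}\ge\bar\Phi M_yS$, which fails in general. It also does not produce the $-M_z\bar\Theta$ you then assert. The corrected version, $N_{\UPA}>\bar\Phi M_yS-(2L_z+1)\ge\bar\Phi M_yS-M_z\bar\Theta-1$, is exactly where that term comes from. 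The paper gets $+1$ rather than $-1$ by two further steps before applying $\lfloor x\rfloor\ge x-1$: it uses $2L_z+1\ge M_z\bar\Theta-1$, and it truncates the sum to $|l|\le\frac{M_z\bar\Theta}{2}-1$, which is allowed because the dropped terms are nonnegative.

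\textbf{The step you flag as the main obstacle cannot be carried out.} The hypothesis $M_z\gg2/\bar\Theta$ places no restriction on $M_y$, and the sum-to-integral error does not vanish. For $\bar\Theta<1$ and $\frac{M_z\bar\Theta}{2}$ an integer, Euler--Maclaurin gives $S=\frac{M_z}{2}(\Theta+\bar\Theta\cos\Theta)+\cos\Theta+O(1/M_z)$. By the corrected lower bound, $N_{\UPA}$ then exceeds the right-hand side of \eqref{eq:NUPAbounds} once $\bar\Phi M_y\cos\Theta$ exceeds roughly $2M_z\bar\Theta$.

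For instance, take $M_z=20$, $\bar\Theta=\frac12$, $\bar\Phi=1$, $M_y=101$. Then \eqref{eq:NExact} gives $N_{\UPA}=101+2(101+99+97+93+87)=1055$, while the stated upper bound is about $977$. So \eqref{eq:NUPAbounds} holds only in the approximate sense of the paper's proof, which replaces the sum by the integral with ``$\approx$'' and never bounds the error. Your leading-order fallback is exactly that.

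If you want a true inequality, your concavity idea gives one:
\begin{itemize}
\item the trapezoid inequality yields $S\le\frac{M_z}{2}(\Theta+\bar\Theta\cos\Theta)+1$;
\item the midpoint inequality yields $S\ge\frac{M_z}{2}(\Theta+\bar\Theta\cos\Theta)-1$, since the points $2l/M_z$ are midpoints of cells tiling $[-\frac{2L_z+1}{M_z},\frac{2L_z+1}{M_z}]$.
\end{itemize}
Together these give $\bigl|N_{\UPA}-\frac{\bar\Phi M_yM_z}{2}(\Theta+\bar\Theta\cos\Theta)\bigr|\le M_z\bar\Theta+\bar\Phi M_y+1$. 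The example shows that an error term of order $\bar\Phi M_y$ cannot be removed. Relative to the leading term, however, $\bar\Phi M_y$ is at most $\frac{2}{M_z\bar\Theta}$, which is small precisely under $M_z\gg2/\bar\Theta$. So Lemma~\ref{lemma:NUPA2} survives.
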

\begin{IEEEproof}
Please refer to Appendix~\ref{A:NUPABounds}.
\end{IEEEproof}

\begin{lemma}\label{lemma:NUPA2}
If $M_z\gg \frac{2}{\bar \Theta}$ and $M_y  \bar \Phi \left(\Theta+\bar \Theta \cos \Theta \right)\gg 2\bar \Theta$, then the number of orthogonal directions for $M=M_y \times M_z$ dimensional UPA over the angular range $\left [-\Theta, \Theta \right] \times \left [-\Phi, \Phi \right]$ is
\begin{align}
N_\UPA \approx \frac{\bar \Phi \left(\Theta+\bar \Theta \cos \Theta    \right)M}{2}.\label{eq:NUPA2}
\end{align}
\end{lemma}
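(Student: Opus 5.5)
The plan is to obtain the statement as a direct corollary of Theorem~\ref{theo:NUPABounds}, by sandwiching $N_\UPA$ between the two bounds in \eqref{eq:NUPAbounds} and showing that their common leading term dominates.

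\textbf{Step 1: shared quantity.} Write
\begin{align*}
A\triangleq \frac{\bar \Phi M_y\left(\Theta+\bar \Theta\cos\Theta\right)}{2}.
\end{align*}
Since $M_z\gg 2/\bar\Theta$, the hypothesis of Theorem~\ref{theo:NUPABounds} holds. Hence
\begin{align*}
M_z(A-\bar\Theta)+1\leq N_\UPA\leq M_z(A+\bar\Theta)+1 .
\end{align*}
For the upper bound I use only the first argument of the $\min$. This is legitimate, because the $\min$ is never larger than that argument.

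\textbf{Step 2: compare the three quantities with $M_zA$.} Divide each of the two bounds by $M_zA$, which is positive since $\bar\Phi,\bar\Theta>0$ and $\Theta\ge\bar\Theta\ge 0$. Both ratios then have the form
\begin{align*}
1\pm \frac{\bar\Theta}{A}+\frac{1}{M_zA}.
\end{align*}

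\textbf{Step 3: make the corrections negligible.} The second hypothesis is exactly $2A\gg 2\bar\Theta$, i.e., $\bar\Theta/A\ll 1$. For the last term, first note that $A\geq\bar\Theta\cdot(\text{something large})$ by the second hypothesis. Then $M_zA\gg M_z\bar\Theta\gg 2$ by the first hypothesis, so $1/(M_zA)\ll 1$. Both bounds are therefore $M_zA(1+o(1))$, and so is $N_\UPA$.

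\textbf{Step 4: conclude.} We get $N_\UPA\approx M_zA$. Substituting $M=M_yM_z$ turns this into \eqref{eq:NUPA2}.

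I do not expect a real obstacle. The only point needing care is to make precise what ``$\approx$'' means: the relative error is at most $\bar\Theta/A+1/(M_zA)$, which tends to zero under the stated $\gg$ conditions. A secondary point is to confirm that the $M_yM_z$ cap in the upper bound is consistent with the approximation. It is: since $\bar\Phi\le1$ and $\Theta+\bar\Theta\cos\Theta\le 2$, we have $A\le M_y$, so $M_zA\le M$. Thus the $\min$ does not affect the leading term.
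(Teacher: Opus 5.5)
Your proposal is correct and follows essentially the same route as the paper: both sandwich $N_{\UPA}$ between the two bounds of Theorem~\ref{theo:NUPABounds}, use the two $\gg$ hypotheses to show that the bounds share the leading term $\frac{\bar\Phi(\Theta+\bar\Theta\cos\Theta)M}{2}$, and observe that the $M_yM_z$ cap does not matter because $\frac{\bar\Phi(\Theta+\bar\Theta\cos\Theta)}{2}<1$. Your explicit relative-error bound $\bar\Theta/A+1/(M_zA)$ is a slightly more careful version of the paper's remark that the bounds ``converge,'' though the bounds of Theorem~\ref{theo:NUPABounds} are themselves obtained through a sum-to-integral approximation.
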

\begin{IEEEproof}
Lemma~\ref{lemma:NUPA2} follows from the fact that under the specified condition, the lower and upper bounds of $N_\UPA$ specified in Theorem~\ref{theo:NUPABounds} converge to the same value in  \eqref{eq:NUPA2}. Note that we have used the fact that for all $0\leq \Phi\leq \pi/2$ and $0\leq \Theta \leq \pi/2$, $\frac{\bar \Phi \left(\Theta+\bar \Theta \cos \Theta \right)}{2}<1$, so that the upper bound $M_yM_z$ in \eqref{eq:NUPAbounds} will not be activated under the asymptotical condition specified in Lemma~\ref{lemma:NUPA2}. Furthermore, since $\bar \Theta \leq 1$, one sufficient condition for $M_y  \bar \Phi \left(\Theta+\bar \Theta \cos \Theta \right)\gg 2\bar \Theta$ is $M_y \bar\Phi \max\{\Theta, \cos\Theta\}\gg 2$, which is not difficult to achieve when $M_y\gg 1$.
\end{IEEEproof}
Lemma~\ref{lemma:NUPA2} shows that the asymptotical number of orthogonal directions for UPA increases linearly with  $M$ and $\bar \Phi$, while nonlinearly and with diminishing gain with respect to the elevation angle range $\bar \Theta$. This is expected since at larger elevation angles $\theta$, the projection of the signal direction in the azimuth plane reduces by a factor $\cos(\theta)$, this thus provides poorer azimuth resolution.

Similar to Lemma~\ref{lemma:CULA} for the ULA case, we immediately have the asymptotical capacity characterization for the UPA case.
  \begin{lemma}\label{lemma:CUPA}
 For MU-MIMO UAV swarm communication, if the BS is equipped with $M$-element UPA and the UAV swarm formation can be freely optimized within the region
  $(r_k, \bar{\theta}_k, \bar{\phi}_k)\in [r_{\min}, r_{\max}] \times [-\bar \Theta, \bar \Theta] \times [-\bar{\Phi}, \bar{\Phi}]$, $\forall k$, then the asymptotical sum-capacity ensuring full beamforming gain is
  \begin{align}\label{eq:CStarMaxUPA2}
  C^\star \approx  \frac{\bar \Phi \left(\Theta+\bar \Theta \cos \Theta \right)M }{2} \log_2\left(1+\rho M \right).
  \end{align}
 \end{lemma}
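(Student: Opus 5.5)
The plan is to reproduce the argument of Theorem~\ref{theo:CULA} and Lemma~\ref{lemma:CULA}, replacing the ULA direction set by the UPA direction set built above. The one-dimensional count $N(\bar\Phi)$ is replaced by the asymptotic count $N_\UPA$ from Lemma~\ref{lemma:NUPA2}.

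First, I would take the formation $r_k=r_{\min}$ for all $k$. The directions $(\bar\theta_k,\bar\phi_k)$ are chosen as distinct elements of the grid defined by \eqref{eq:barThetal} and \eqref{eq:bar_phi_UPA2}: an elevation index $l\in\mathcal L$ and, for that elevation, an azimuth index $p$.

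Next, I would verify the orthogonality conditions \eqref{eq:orthgogonal4} via the factorization \eqref{eq:fUPA}. Take two distinct grid points.
\begin{itemize}
\item If their elevation indices differ, then $\Delta_{z,jk}$ is a nonzero multiple of $2/M_z$ with $|\Delta_{z,jk}|\neq 2$, by the construction of $\mathcal L$. Hence $H_{M_z}(\Delta_{z,jk})=0$.
\item If they share the same elevation $\theta_l$, then $\Delta_{jk}\cos\theta_l$ is a nonzero multiple of $2/M_y$ with modulus $\neq 2$, by \eqref{eq:bar_phi_UPA2}. Hence $H_{M_y}=0$.
\end{itemize}
In both cases $f=0$, so $\mathbf h_k^H\mathbf h_j=0$. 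Every grid point satisfies C$_2$ and C$_3$ by construction, and $r_k=r_{\min}$ maximizes every term of \eqref{P:CUBPolar} independently of direction. The formation is therefore an optimal solution of \eqref{P:CUB} that also satisfies \eqref{eq:orthgogonal4}. By Lemmas~\ref{lemma1C} and \ref{lemma1} it is then optimal for \eqref{P:Cstar}, and also for \eqref{P:RQ}, with value $\sum_k\log_2(1+\bar P_k M r_0^2/r_{\min}^2)=K\log_2(1+\rho M)$ when $\bar P_k=\bar P$.

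Then I would maximize over $K$. Here the phrase ``ensuring full beamforming gain'' matters. Each user attains the single-user gain $M$ exactly when it is orthogonal to all others, because $\mathrm{SINR}_k=\rho M$ requires $\mathbf h_k\perp\mathbf h_j$. So $K$ is limited to the number of mutually orthogonal admissible directions, which the construction realizes as $N_\UPA$ in \eqref{eq:NExact}. Taking $K=N_\UPA$ and substituting the asymptotic value \eqref{eq:NUPA2} from Lemma~\ref{lemma:NUPA2}, which holds under its stated conditions $M_z\gg 2/\bar\Theta$ and $M_y\bar\Phi(\Theta+\bar\Theta\cos\Theta)\gg 2\bar\Theta$, gives \eqref{eq:CStarMaxUPA2}.

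The main obstacle is the last step: justifying that $N_\UPA$ is the right count, i.e., that no other arrangement yields more mutually orthogonal admissible directions. I would not try to prove maximality over all configurations. Instead, following the paper's ULA treatment, I would present the result as the capacity achieved by the grid construction. Two supporting facts reinforce this: orthogonal vectors in $\mathbb{C}^M$ number at most $M$, consistent with the $M_yM_z$ cap in \eqref{eq:NUPAbounds}, and the bounds of Theorem~\ref{theo:NUPABounds} converge. I would state the ``$\approx$'' as inherited directly from Lemma~\ref{lemma:NUPA2}. The remaining steps, namely the case distinctions of \eqref{eq:barThetal} and \eqref{eq:bar_phi_UPA2} including the $\cos\theta_l=0$ case, are routine and follow the ULA argument verbatim.
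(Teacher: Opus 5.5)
Your proposal is correct and is essentially the paper's argument, which consists only of the remark ``similar to Lemma~\ref{lemma:CULA}''. That is: place all UAVs at $r_{\min}$ on the grid \eqref{eq:barThetal}, \eqref{eq:bar_phi_UPA2}, apply Lemmas~\ref{lemma1C} and~\ref{lemma1} as in Theorem~\ref{theo:CULA}, and replace $K=N_\UPA$ by its asymptotic value from Lemma~\ref{lemma:NUPA2}; your explicit orthogonality check via \eqref{eq:fUPA} fills in what the paper leaves implicit. Like the paper, you establish only achievability, not that no admissible configuration has more than $N_\UPA$ mutually orthogonal directions: your two ``supporting facts'' give the bound $M$ and bounds on the grid count, not on arbitrary configurations, and at finite $M$ the centered grid is not always maximal (shifting a row can fit one more direction, though such shifts only add lower-order terms), so presenting the result as achieved by the grid, as you do, is the right call.
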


For the special case when $\bar \Theta=\bar \Phi=1$, \eqref{eq:CStarMaxUPA2} reduces to
  \begin{align}\label{eq:CStarMaxUPA3}
  C^\star \approx  \frac{\pi M}{4} \log_2\left(1+\rho M \right).
  \end{align}
Similar to the ULA case in \eqref{eq:CStarMax}, the expression \eqref{eq:CStarMaxUPA3} gives an elegant expression for the capacity of location-controllable UEs ensuring full beamforming gain $M$. Note that different from that for the ULA case, asymptotically $\frac{\pi M}{4}$ UEs would enjoy the full beamforming gain, which is slightly smaller than the full spatial multiplexing gain $M$. In fact, as long as the number of UEs $K\leq \pi M/4$, full beamforming and spatial multiplexing gains are simultaneously achieved. Similar to the ULA case, thanks to the location-controllable UE and the ideal LoS condition, the asymptotical capacity $C^\star$ in \eqref{eq:CStarMaxUPA3} is much larger than that of the classical massive MIMO setup $K \log_2\left(1+\rho M \right)$, which requires $K\ll M$.

In order to validate the bounds and asymptotical analysis presented in this subsection, Fig.~\ref{F:numdirectionsUPA} plots the number of orthogonal directions normalized by the total number of UPA elements, i.e., the ratio $N_{\UPA}/M$, where different curves correspond to various bounds or approximations for $N_{\UPA}$. Specifically, the curves labelled as ``Upper bound'' and ``Lower bound'' are the results in \eqref{eq:NUPAbounds}, ``Exact'' means the actual value based on \eqref{eq:NExact}, and ``Approx'' is the pre-log factor $\frac{\pi M}{4}$ in \eqref{eq:CStarMaxUPA3}. The number of antennas $M=M_yM_z$ increases by fixing $M_z=20$ while varying $M_y$. Fig.~\ref{F:numdirectionsUPA} shows that when $M$ is large, all the bounds and asymptotical closed-form expressions converge to the exact value, which validate our theoretical analysis.


\begin{figure}
\centering
\includegraphics[scale=0.6]{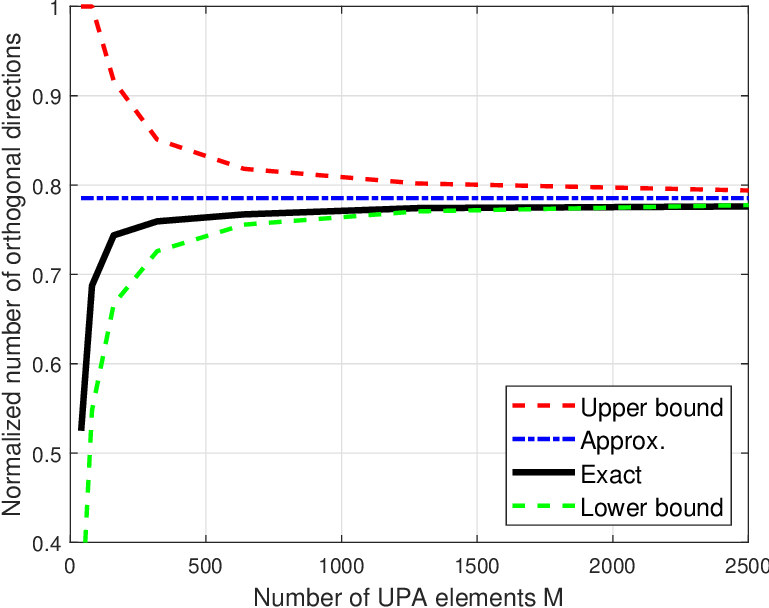}
\caption{Normalized number of orthogonal directions versus number of UPA elements $M$.\vspace{-5ex}}\label{F:numdirectionsUPA}
\end{figure}

\section{UAV Swarm Formation Optimization}\label{sec:optimization}
In the previous section, we have characterized the capacity of MU-MIMO UAV swarm communication by finding the closed-form solutions to the capacity and rate maximization problems \eqref{P:Cstar} and \eqref{P:Rstar}. This is achieved by finding the optimal solutions to the alternative problem \eqref{P:CUB} to satisfy the channel orthogonality condition \eqref{eq:orthgogonal4}. However, in many practical scenarios, the optimal solution to problem \eqref{P:CUB} does not necessarily satisfy channel orthogonality, say when the number of UAVs $K$ is greater than the number of orthogonal directions identified in \eqref{eq:NphiULA} and \eqref{eq:NExact}, or when stringent conditions like the collision avoidance constraint C4 and swarm cohesion constraint C5 are imposed. In this case, the capacity and rate maximization problems \eqref{P:Cstar} and \eqref{P:Rstar} remain unsolved. In this section, we aim to address this issue by proposing a generic algorithm to UAV swarm formation optimization problems.

Directly solving problems \eqref{P:Cstar} and \eqref{P:Rstar} is challenging, since the UAV swarm formation $\mathbf Q$ affects the capacity or achievable rate via the channel $\mathbf H$ in a sophisticated manner. To address such issues, we propose a block coordinate descent (BCD) based algorithm to optimize the UAV ranges $\{r_k\}$ and directions $\{\theta_k,\phi_k\}$ alternately. 

\subsection{Capacity Maximization with SIC}
The  3D coordinate  $\mathbf q_k$ of each UAV can be expressed in terms of its range $r_k$ and direction$(\theta_k, \phi_k)$, i.e., $\mathbf q_k = r_k \mathbf d(\theta_k, \phi_k)$, where
\begin{align}\label{eq:dk}
\mathbf d(\theta_k, \phi_k) =
\begin{bmatrix}
\cos\theta_k \cos\phi_k \\
\cos\theta_k \sin\phi_k\\
\sin\theta_k
\end{bmatrix}
\end{align}
denotes the UAV direction vector. Therefore, the capacity maximization problem \eqref{P:Cstar} can be equivalently written as
\begin{align}\label{P:CstarPolar}
\begin{cases} \underset{\{r_k\}, \{\theta_k, \phi_k\}}{\max} & C(\{r_k,\theta_k,\phi_k\})=\log_2\left|\mathbf I_M + \sum_{k=1}^K \frac{\bar P_k r_0^2}{r_k^2} \mathbf a(\theta_k, \phi_k)\mathbf a^H(\theta_k, \phi_k) \right| \\
 \text{s.t.} &\mathrm{C}_1: r_{\min} \leq r_k \leq r_{\max}, \forall k,\\
 & \mathrm{C}_2: - \Theta \leq \theta_k \leq  \Theta, \forall k,\\
 & \mathrm{C}_3: - \Phi \leq \phi_k \leq  \Phi, \forall k,\\
  & \mathrm{C}_4: \|r_k \mathbf d(\theta_k, \phi_k)-r_j \mathbf d(\theta_j, \phi_j)\|\geq d_{\min}, \forall j< k,\\
   & \mathrm{C}_5: \|r_k \mathbf d(\theta_k, \phi_k)-r_j \mathbf d(\theta_j, \phi_j)\|\leq d_{\max}, \forall j< k.\\
 \end{cases}
\end{align}

\subsubsection{UAV Swarm Range Optimization}
For any given UAV directions $\{\theta_k, \phi_k\}$ satisfying constraints $\mathrm{C}_2$ and $\mathrm{C}_3$, the sub-problem to optimize UAV ranges $\{r_k\}$ can be expressed as
\begin{align}\label{P:CstarOptimizeRange}
\begin{cases} \underset{\{r_k, b_k\}}{\max} & C(\{b_k\})=\log_2\left|\mathbf I_M + \sum_{k=1}^K b_k\mathbf a_k\mathbf a_k^H \right| \\
 \text{s.t.} & 0\leq b_k \leq \frac{\bar P_k r_0^2}{r_k^2}, \forall k, \\
&  r_{\min} \leq r_k \leq r_{\max}, \forall k,\\
   & \|r_k \mathbf d_k -r_j \mathbf d_j\|\geq d_{\min}, \forall j< k,\\
   & \|r_k \mathbf d_k -r_j \mathbf d_j\|\leq d_{\max}, \forall j< k,\\
 \end{cases}
\end{align}
where $\mathbf a_k\triangleq \mathbf a (\theta_k, \phi_k)$ and $\mathbf d_k\triangleq\mathbf d(\theta_k,\phi_k)$ are the given array steering vector and UAV direction vector, respectively.
Note that without loss of optimality, we have introduced slack variables $\{b_k\}$. At the optimal solution to \eqref{P:CstarOptimizeRange}, the constraint $b_k \leq \frac{\bar P_k r_0^2}{r_k^2}$ should be satisfied with equality, since otherwise, we may always increase $b_k$ so that the objective value is further increased.

The objective function of problem \eqref{P:CstarOptimizeRange} is concave with respect to $\{b_k\}$, and the second and fourth constraints are convex with respect to $\{r_k\}$. However, problem \eqref{P:CstarOptimizeRange} is non-convex since the first and third constraints are non-convex. Fortunately, since $\frac{\bar P_k r_0^2}{r_k^2}$ and $\|r_k \mathbf d_k -r_j \mathbf d_j\|$ are convex functions with respect to $\{r_k\}$, the problem can be efficiently addressed by successive convex approximation (SCA) technique \cite{1095}. Specifically, let $\{r_{k,l}\}$ denote the currently obtained UAV range at the $l$th iteration. Then by using the fact that any convex differentiable function is  globally lower bounded by its first-order Taylor expansion, we have
\begin{align}
&\frac{\bar P_k r_0^2}{r_k^2}\geq \frac{\bar P_k r_0^2}{r_{k,l}^2}\left(3-\frac{2r_k}{r_{k,l}}\right), \ \forall r_k, \label{eq:LB1}\\
& \|r_k \mathbf d_k - r_j \mathbf d_j\|^2 \geq g_{\mathrm{lb}}(r_k,r_j), \forall r_k, r_j,\label{eq:LB2}
\end{align}
with
\begin{align}
g_{\mathrm{lb}}(r_k,r_j)\triangleq \|r_{k,l}\mathbf d_k - r_{j,l}\mathbf d_j\|^2 +
2\left(r_{k,l}-r_{j,l}\mathbf d_k^T\mathbf d_j \right)(r_k-r_{k,l})
+2\left(r_{j,l}-r_{k,l}\mathbf d_k^T\mathbf d_j \right)(r_j-r_{j,l}).
\end{align}
Note that the global lower bounds \eqref{eq:LB1} and \eqref{eq:LB2} are tight when $r_k=r_{k,l}$, $\forall k$. By replacing the first and third constraints of problem \eqref{P:CstarOptimizeRange} with the global lower bounds \eqref{eq:LB1} and \eqref{eq:LB2}, we have the following problem
\begin{align}\label{P:CstarOptimizeRange2}
\begin{cases} \underset{\{r_k, b_k\}}{\max} & C(\{b_k\})=\log_2\left|\mathbf I_M + \sum_{k=1}^K b_k\mathbf a_k\mathbf a_k^H \right| \\
 \text{s.t.} & 0\leq b_k \leq \frac{\bar P_k r_0^2}{r_{k,l}^2}\left(3-\frac{2r_k}{r_{k,l}}\right), \forall k, \\
&  r_{\min} \leq r_k \leq r_{\max}, \forall k,\\
   & g_{\mathrm{lb}}(r_k,r_j)\geq d_{\min}^2, \forall j< k,\\
   & \|r_k \mathbf d_k -r_j \mathbf d_j\|\leq d_{\max}, \forall j< k.\\
 \end{cases}
\end{align}
Problem \eqref{P:CstarOptimizeRange2} is a convex optimization problem, which can be efficiently solved via interior-point methods or tool box like CVX \cite{227}. Furthermore, thanks to the global lower bounds \eqref{eq:LB1} and \eqref{eq:LB2}, the optimal solution to problem \eqref{P:CstarOptimizeRange2} is guaranteed to be feasible to problem \eqref{P:CstarOptimizeRange}. By iteratively solving problems \eqref{P:CstarOptimizeRange2} with updated local points $\{r_{k,l}\}$, a sequence of feasible solutions to problem \eqref{P:CstarOptimizeRange} with non-decreasing objective values can be obtained. The proposed SCA-based algorithm for solving problem \eqref{P:CstarOptimizeRange} is summarized in Algorithm~\ref{Algo:OptRange}.
\begin{algorithm}[H]
\caption{Proposed SCA algorithm for UAV range optimization.}\label{Algo:OptRange}
\begin{algorithmic}[1]
\STATE {\bf Input}: UAV swarm directions $\{\theta_k,\phi_k\}_{k=1}^K$, reference SNR $\bar P_k$ and reference range $r_0$.
\STATE {\bf Output}: UAV ranges $\{r_k^\star\}_{k=1}^K$.
\STATE {\bf Initialization}: $l=0$, and initial UAV range $\{r_{k,0}\}_{k=1}^K$ to satisfy the last three constraints of \eqref{P:CstarOptimizeRange}.
\REPEAT
\STATE Solve the convex problem \eqref{P:CstarOptimizeRange2}, and denote the optimal solution as $\{r_k^\star, b_k^\star\}_{k=1}^K$
\STATE Update $l=l+1$
\STATE Update $r_{k,l}\leftarrow r_k^\star$
\UNTIL The increase of the objective value of problem \eqref{P:CstarOptimizeRange} is negligible.
\end{algorithmic}
\end{algorithm}

\subsubsection{UAV Swarm Direction Optimization}\label{sec:direcOptimization}
Next, we consider the other subproblem of \eqref{P:CstarPolar} to optimize the UAV swarm directions $\{\theta_k, \phi_k\}$ with fixed ranges $\{r_k\}$. The  problem can be written as
\begin{align}\label{P:CstarOptimizedDirection}
\begin{cases} \underset{\{\theta_k, \phi_k\}}{\max} & C(\{\theta_k,\phi_k\})=\log_2\left|\mathbf I_M + \sum_{k=1}^K b_k \mathbf a(\theta_k,\phi_k)\mathbf a(\theta_k,\phi_k)^H_k \right| \\
 \text{s.t.}  & \mathrm{C}_2: - \Theta \leq \theta_k \leq  \Theta, \forall k,\\
 & \mathrm{C}_3: - \Phi \leq \phi_k \leq  \Phi, \forall k,\\
  & \mathrm{C}_4: \|r_k \mathbf d(\theta_k,\phi_k)-r_j \mathbf d(\theta_j,\phi_j)\|\geq d_{\min}, \forall j< k,\\
   & \mathrm{C}_5: \|r_k \mathbf d(\theta_k,\phi_k)-r_j \mathbf d(\theta_j,\phi_j)\|\leq d_{\max}, \forall j< k,\\
 \end{cases}
\end{align}
where $b_k\triangleq \bar P_kr_0^2/r_k^2$ denotes the received SNR for UAV $k$ for the given range $r_k$. Solving problem \eqref{P:CstarOptimizedDirection} with convex or non-convex optimization techniques is challenging, since the UAV swarm directions $\{\theta_k, \phi_k\}$ affect the communication capacity $C(\{\theta_k,\phi_k\})$ and the constraints $\mathrm{C_4}$ and $\mathrm{C_5}$ via the array response vectors $\mathbf a(\theta_k, \phi_k)$ and direction vectors $\mathbf d(\theta_k,\phi_k)$ in rather complicated manners. To address such issues, in the following, we propose an efficient greedy algorithm, by successively adding one UAV direction $(\theta_k, \phi_k)$ so that the capacity in problem \eqref{P:CstarOptimizedDirection} is maximized.

Specifically, at the $k$th iteration, the directions $\{\theta_j,\phi_j\}_{j=1}^{k-1}$ for the previous $k-1$ UAVs have already been determined, and we wish to find the best direction $(\theta_k, \phi_k)$ for UAV $k$ so that $C_k$ is maximized, where
\begin{align}
C_k & \triangleq \log_2 \left| \mathbf I + \sum_{j=1}^{k-1}  b_j \mathbf a_j \mathbf a_j^H + b_k \mathbf a_k \mathbf a_k^H \right| \\
& = \log_2 \left| \mathbf I+ \mathbf S_k + b_k \mathbf a_k \mathbf a_k^H \right|\\
& = \log_2 \left| \mathbf I+ \mathbf S_k\right| + \log_2\left|\mathbf I + (\mathbf I+ \mathbf S_k)^{-1}b_k \mathbf a_k \mathbf a_k^H \right|\\
& = \log_2 \left| \mathbf I+ \mathbf S_k\right| + \log_2\left(1 + b_k \mathbf a_k^H(\mathbf I+ \mathbf S_k)^{-1}\mathbf a_k  \right),
\end{align}
where for notational convenience, we have defined $\mathbf a_k\triangleq\mathbf a(\theta_k,\phi_k)$ and $\mathbf S_k\triangleq \sum_{j=1}^{k-1}  b_j \mathbf a_j \mathbf a_j^H$. At the $k$th iteration, $\mathbf S_k$ is fixed, and we only need to solve the following problem
\begin{align}\label{eq:hk_greedy4}
\begin{cases} \underset{\theta_k, \phi_k}{\max} & \mathbf a^H(\theta_k,\phi_k)(\mathbf I+ \mathbf S_k)^{-1}\mathbf a(\theta_k,\phi_k) \\
 \text{s.t.} &
 \mathrm{C}_2, \ \mathrm{C}_3\\
  &  L_{kj}\leq \mathbf d^T(\theta_k, \phi_k)\mathbf d_j \leq U_{kj}, \forall j< k,\\
 \end{cases}
\end{align}
where $L_{kj}\triangleq \frac{r_k^2+r_j^2-d_{\max}^2}{2r_kr_j}$, $U_{kj}\triangleq \frac{r_k^2+r_j^2-d_{\min}^2}{2r_kr_j}$ and $\mathbf d_j\triangleq \mathbf d(\theta_j,\phi_j)$. Note that the last constraint of \eqref{eq:hk_greedy4} directly follows from $\mathrm{C_4}$ and $\mathrm{C_5}$.

By exploiting the array manifold structures of the array response vector $\mathbf a(\theta_k,\phi_k)$ in \eqref{eq:aUPA} and direction vector $\mathbf d(\theta_k,\phi_k)$ in \eqref{eq:dk}, the above problem can be efficiently solved with codebook based method. Specifically, let $\mathrm{\Theta}_{\dic}=\left\{(\theta^{(m)},\phi^{(m)})|-\Theta\leq \theta^{(m)}\leq \Theta, -\Phi\leq \phi^{(m)} \leq \Phi\right\}$ be the set of discretized angles. Then the  angle set that is feasible to problem \eqref{eq:hk_greedy4}  is
 \begin{align}\label{eq:Thetak}
 \mathrm{\Theta}_k\triangleq \left\{(\theta,\phi)\in \mathrm{\Theta}_{\dic}|L_{kj}\leq \mathbf d^T(\theta_k, \phi_k)\mathbf d_j \leq U_{kj}, \forall j< k\right\}.
 \end{align}
 Therefore,  problem \eqref{eq:hk_greedy4} can be written as
 \begin{align}\label{eq:hk_greedy5}
\underset{(\theta_k, \phi_k)\in \mathrm{\Theta}_k}{\max} & \mathbf a^H(\theta_k,\phi_k)\mathbf J_k\mathbf a(\theta_k,\phi_k),
\end{align}
where we have defined $\mathbf J_k\triangleq (\mathbf I+ \mathbf S_k)^{-1}$. Solving the above optimization problem involves matrix inversion in order to find $\mathbf J_k$ at each iteration, which incurs high complexity. The complexity can be reduced by noting the following recursive relationship
\begin{align}
\mathbf J_k =  (\mathbf I+ \mathbf S_k)^{-1}&=\left(\mathbf I+ \sum_{j=1}^{k-1}  b_j \mathbf a_j \mathbf a_j^H\right)^{-1}\\
&=\left(\mathbf I+ \mathbf S_{k-1} + b_{k-1} \mathbf a_{k-1} \mathbf a_{k-1}^H\right)^{-1}\\
&=(\mathbf I+\mathbf S_{k-1})^{-1}-\frac{(\mathbf I+\mathbf S_{k-1})^{-1}\mathbf a_{k-1}\mathbf a_{k-1}^H(\mathbf I+\mathbf S_{k-1})^{-1}}{\frac{1}{b_{k-1}}+\mathbf a_{k-1}^H(\mathbf I+\mathbf S_{k-1})^{-1}\mathbf a_{k-1}}.
\end{align}
Thus, we have the following recursive relation:
\begin{align}\label{eq:Jk}
\mathbf J_1 =& \mathbf I_M, \\
\mathbf J_k =& \mathbf J_{k-1}-\frac{b_{k-1} \mathbf J_{k-1} \mathbf a_{k-1}\mathbf a_{k-1}^H\mathbf J_{k-1}}{1+b_{k-1}\mathbf a_{k-1}^H\mathbf J_{k-1}\mathbf a_{k-1}}, k=2,\cdots, K.\label{eq:JkRecursive}
\end{align}

Based on the above discussions, the pseudo-codes for solving the swarm direction optimization problem \eqref{P:CstarOptimizedDirection} are summarized in Algorithm~\ref{Algo:OptDirection}.
\begin{algorithm}[H]
\caption{Proposed algorithm for swarm direction optimization problem \eqref{P:CstarOptimizedDirection} .}\label{Algo:OptDirection}
\begin{algorithmic}[1]
\STATE {\bf Input}: Number of UAVs $K$, UAV ranges $\{r_k\}_{k=1}^K$, SNRs $\{b_k\}_{k=1}^K$. $d_{\min}$, $d_{\max}$, $\Theta$, and $\Phi$.
\STATE {\bf Output}: UAV directions $\{\theta_k^\star, \phi_k^\star\}_{k=1}^K$
\STATE {\bf Initialization}: Construct the dictionary $\mathrm{\Theta}_{\dic}$. Let $\mathbf J_1\leftarrow \mathbf I_M$, and $k=1$.
\WHILE {$k\leq K$}
\STATE Construct $\mathrm{\Theta}_k$ based on \eqref{eq:Thetak}.
\STATE Let $(\theta_k^\star, \phi_k^\star)\leftarrow \mathrm{arg}\underset{(\theta_k, \phi_k)\in \mathrm{\Theta}_k}{\max}  \mathbf a^H(\theta_k,\phi_k)\mathbf J_k\mathbf a(\theta_k,\phi_k)$
\STATE Let $\mathbf a_k \leftarrow \mathbf a(\theta_k^\star, \phi_k^\star)$
\STATE Update $k=k+1$.
\STATE Update $\mathbf J_k = \mathbf J_{k-1}-\frac{b_{k-1} \mathbf J_{k-1} \mathbf a_{k-1}\mathbf a_{k-1}^H\mathbf J_{k-1}}{1+b_{k-1}\mathbf a_{k-1}^H\mathbf J_{k-1}\mathbf a_{k-1}}$
\ENDWHILE
\end{algorithmic}
\end{algorithm}

By combining Algorithms~\ref{Algo:OptRange} and \ref{Algo:OptDirection}, the proposed BCD algorithm for the UAV swarm formation optimization problem \eqref{P:CstarPolar} is summarized in Algorithm~\ref{Algo:OptSwarm}.
\begin{algorithm}[H]
\caption{Proposed algorithm for capacity maximization problem \eqref{P:CstarPolar}.}\label{Algo:OptSwarm}
\begin{algorithmic}[1]
\STATE {\bf Input}: Number of UAVs $K$, reference SNR $\bar P_k$, reference range $r_0$, $d_{\min}$, $d_{\max}$, $r_{\min}$, $r_{\max}$, $\Theta$, and $\Phi$.
\STATE {\bf Output}: UAV swarm formation $\{r_k^\star, \theta_k^\star, \phi_k^\star\}_{k=1}^K$.
\STATE {\bf Initialization}: UAV range $r_k\in [r_{\min}, r_{\max}]$.
\REPEAT
\STATE With the given UAV range $\{r_k\}$, optimize the direction with Algorithm~\ref{Algo:OptDirection}.
\STATE With the given UAV direction, optimize the range with Algorithm~\ref{Algo:OptRange}.
\UNTIL The increase of the objective value of problem \eqref{P:CstarPolar} is negligible.
\end{algorithmic}
\end{algorithm}
At each iteration of Algorithm~\ref{Algo:OptSwarm}, the objective value of problem \eqref{P:CstarPolar} is guaranteed to be non-decreasing. Therefore, Algorithm~\ref{Algo:OptSwarm} is guaranteed to converge.

\subsection{Rate Maximization with TIN}\label{sec:rateMax}
In this section, we consider the rate maximization problem  \eqref{P:Rstar}  by taking into account the general constraints $\mathrm{C_1}-\mathrm{C_5}$. Similar to the capacity maximization problem \eqref{P:CstarPolar}, by expressing the UAV location $\mathbf q_k$ in terms of its range $r_k$ and direction vector $\mathbf d(\theta_k, \phi_k)$, problem  \eqref{P:Rstar} can be written as
\begin{align}\label{P:RstarPolar}
\begin{cases} \underset{\{r_k\}, \{\theta_k, \phi_k\},  \{\mathbf w_k\}}{\max} & R=\sum_{k=1}^K \log_2\left(1+\frac{\frac{\bar{P}_kr_0^2}{r_k^2}|\mathbf w_k^H \mathbf a(\theta_k, \phi_k)|^2}{\sum_{j\neq k}\frac{\bar P_jr_0^2}{r_j^2}|\mathbf w_k^H \mathbf a(\theta_j, \phi_j)|^2+ \mathbf w_k^H \mathbf w_k} \right) \\
\text{s.t.} &\mathrm{C}_1-\mathrm{C}_5.
 \end{cases}
\end{align}
where we have expressed the channel vector $\mathbf h_k(\mathbf q_k)$ in terms of $r_k$ and $(\theta_k,\phi_k)$ as defined in \eqref{eq:hk}.
Problem \eqref{P:RstarPolar} involves the joint optimization of three sets of variables: the UAV ranges $\{r_k\}$, directions $\{\theta_k, \phi_k\}$ and receive beamforming vectors $\{\mathbf w_k\}$. In the following, we propose a BCD-based algorithm to optimize $\{r_k\}$ and $\{\theta_k, \phi_k, \mathbf w_k\}$ alternately.

\subsubsection{UAV Range Optimization}
For any given UAV directions $\{\theta_k, \phi_k\}$ and receive beamforming vectors $\{\mathbf w_k\}$, the sub-problem of \eqref{P:RstarPolar} to optimize the UAV range can be formulated as
\begin{align}\label{P:RstarRange}
\begin{cases} \underset{\{r_k\}}{\max} & R=\sum_{k=1}^K \log_2\left(1+\frac{\frac{c_{kk}}{r_k^2}}{1+{\sum_{j\neq k}\frac{c_{kj}}{r_j^2}}} \right) \\
 \text{s.t.} & r_{\min} \leq r_k \leq r_{\max}, \forall k,\\
  & r_k^2+r_j^2-2r_kr_j\mathbf d_k^T\mathbf d_j\geq d_{\min}^2, \forall j< k,\\
  &  r_k^2+r_j^2-2r_kr_j\mathbf d_k^T\mathbf d_j\leq d_{\max}^2, \forall j< k,\\
 \end{cases}
\end{align}
where $c_{kj}\triangleq \frac{\bar P_j r_0^2|\mathbf w_k^H \mathbf a(\theta_j, \phi_j)|^2}{\|\mathbf w_k\|^2}$. Note that the last two constraints follow from $\mathrm{C}_4$ and $\mathrm{C}_5$ in \eqref{P:CstarPolar}. By introducing variables $x_k\triangleq \frac{1}{r_k^2}$, $\forall k$, problem \eqref{P:RstarRange} can be equivalently written as
\begin{align}\label{P:RstarRange2}
\begin{cases} \underset{\{x_k\}}{\max} & R=\sum_{k=1}^K \left[\log_2\left(1+\sum_{j=1}^K{c_{kj}x_j}\right)-\log_2\left(1+{\sum_{j\neq k}c_{kj}x_j}\right)\right] \\
 \text{s.t.} & \frac{1}{r_{\max}^2} \leq x_k \leq \frac{1}{r_{\min}^2}, \forall k, \\
 & \frac{1}{x_k}+\frac{1}{x_j}-\frac{2\mathbf d_k^T\mathbf d_j}{\sqrt{x_kx_j}}\geq d_{\min}^2, \forall j< k,\\
  &  \frac{1}{x_k}+\frac{1}{x_j}-\frac{2\mathbf d_k^T\mathbf d_j}{\sqrt{x_kx_j}}\leq d_{\max}^2, \forall j< k,\\
  & x_k\geq 0, \forall k
 \end{cases}
 \end{align}
Problem \eqref{P:RstarRange2} is non-convex, but it can be efficiently solved by SCA technique. Specifically, let $x_{k,l}$ denote the local point at the $l$th iteration. Then by using the fact that any convex (concave) differentiable function is globally lower (upper) bounded by its first-order Taylor approximation, we have the following bounds
\begin{align}
 &\log_2\left(1+{\sum_{j\neq k}c_{kj}x_j}\right)\leq g_{\mathrm{ub}}\left(\{x_j\}\right),\label{eq:bound1}\\
 & \frac{1}{\sqrt{x_kx_j}} \geq g_{\mathrm{lb}}\left(x_k,x_j\right), \label{eq:bound3}\\
 &\frac{1}{x_k} \geq h_{\mathrm{lb}}(x_k) \label{eq:bound2}
\end{align}
where
\begin{align}
& g_{\mathrm{ub}}\left(\{x_j\}\right)\triangleq \log_2\left(1+{\sum_{j\neq k}c_{kj}x_{j,l}}\right)+\frac{\log_2 e}{1+\sum_{i\neq k}c_{ki}x_{i,l}}\sum_{j\neq k} c_{kj}(x_j-x_{j,l})\\
& g_{\mathrm{lb}}\left(x_k,x_j\right)\triangleq \frac{1}{\sqrt{x_{k,l}x_{j,l}}}-\frac{1}{2}x_{k,l}^{-3/2}x_{j,l}^{-1/2}\left(x_k-x_{k,l}\right)
-\frac{1}{2}x_{j,l}^{-3/2}x_{k,l}^{-1/2}\left(x_j-x_{j,l}\right)\\
&h_{\mathrm{lb}}(x_k)\triangleq \frac{1}{x_{k,l}}-\frac{1}{x_{k,l}^2}(x_k-x_{k,l})
\end{align}
By replacing those non-convex or non-concave terms with the corresponding global bounds, problem \eqref{P:RstarRange2} can be transformed into convex optimization problems. Note that depending on whether $\mathbf d_k^T\mathbf d_j$ is positive or negative, we need to apply the global lower-bound \eqref{eq:bound3} for $\frac{1}{\sqrt{x_kx_j}}$  either for the third or fourth constraint. As an illustration, assuming $\mathbf d_k^T\mathbf d_j>0$, then we have
\begin{align}\label{P:RstarRange3}
\begin{cases} \underset{\{x_k\}}{\max} & R=\sum_{k=1}^K \left[\log_2\left(1+\sum_{j=1}^K{c_{kj}x_j}\right)-g_{\mathrm{ub}}\left(\{x_j\}\right)\right] \\
 \text{s.t.} & \frac{1}{r_{\max}^2} \leq x_k \leq \frac{1}{r_{\min}^2}, \forall k, \\
 & h_{\mathrm{lb}}(x_k)+h_{\mathrm{lb}}(x_j)-\frac{2\mathbf d_k^T\mathbf d_j}{\sqrt{x_kx_j}}\geq d_{\min}^2, \forall j< k,\\
  &  \frac{1}{x_k}+\frac{1}{x_j}-2\mathbf d_k^T\mathbf d_j g_{\mathrm{lb}}\left(x_k,x_j\right)\leq d_{\max}^2, \forall j< k,\\
  & x_k\geq 0, \forall k
 \end{cases}
 \end{align}
 Problem \eqref{P:RstarRange3} is a convex optimization problem, which can be efficiently solved via tool box like CVX. Besides, thanks to the global bounds \eqref{eq:bound1}--\eqref{eq:bound2}, the optimal solution to problem \eqref{P:RstarRange3} is guaranteed to be feasible to the non-convex optimization problem \eqref{P:RstarRange2}, and it provides a lower bound to to the optimal value of \eqref{P:RstarRange2}. By iteratively updating the local points $\{x_{k,l}\}$ and solving a sequence of the corresponding convex optimization problems \eqref{P:RstarRange3}, problem \eqref{P:RstarRange2} can be efficiently optimized.

\subsubsection{UAV Direction and Beamforming Optimization}
Next, we consider the other sub-problem of \eqref{P:RstarPolar} to jointly optimize the UAV direction $\{\theta_k,\phi_k\}$ and receive beamforming vectors $\{\mathbf w_k\}$, for any fixed UAV ranges $\{r_k\}$. The problem can be written as
\begin{align}\label{P:RstarDirection}
\begin{cases} \underset{\{\theta_k, \phi_k\},  \{\mathbf w_k\}}{\max} & R=\sum_{k=1}^K \log_2\left(1+\frac{b_k|\mathbf w_k^H \mathbf a(\theta_k, \phi_k)|^2}{\sum_{j\neq k}b_j|\mathbf w_k^H \mathbf a(\theta_j, \phi_j)|^2+ \mathbf w_k^H \mathbf w_k} \right) \\
 \text{s.t.}  & \mathrm{C}_2-\mathrm{C}_5,
 \end{cases}
\end{align}
where $b_k\triangleq \frac{\bar{P}_kr_0^2}{r_k^2}$. For any given $\{\theta_k,\phi_k\}$, the optimal receive beamforming is the LMMSE solution given by
\begin{align}
\mathbf w_k^\star = \sqrt{b_k}\left(\mathbf I+ \sum_{j\neq k} b_j \mathbf a(\theta_j, \phi_j)\mathbf a^H(\theta_j, \phi_j)\right)^{-1}\mathbf a(\theta_k, \phi_k),
\end{align}
and the resulting SINRs can be expressed as a function of the UAV directions $\{\theta_k,\phi_k\}$
\begin{align}
\SINR_k = b_k \mathbf a_k^H \Big(\sum_{j\neq k} b_j \mathbf a_j\mathbf a_j^H + \mathbf I \Big)^{-1} \mathbf a_k,\label{eq:SINRk2}
\end{align}
where for convenience, we have used the notation $\mathbf a_k\triangleq \mathbf a(\theta_k, \phi_k), \forall k$.
Therefore, problem \eqref{P:RstarDirection} is  reduced to
\begin{align}\label{P:RstarDirection2}
\begin{cases} \underset{\{\theta_k, \phi_k\}}{\max} & R=\sum_{k=1}^K \log_2\left(1+b_k \mathbf a_k^H \Big(\sum_{j\neq k} b_j \mathbf a_j\mathbf a_j^H + \mathbf I \Big)^{-1} \mathbf a_k \right) \\
 \text{s.t.}  & \mathrm{C}_2-\mathrm{C}_5.
 \end{cases}
\end{align}
Problem \eqref{P:RstarDirection2} is similar to the capacity maximization problem \eqref{P:CstarOptimizedDirection}, except for the difference on the objective functions. Therefore, the similar sequential UAV direction optimization technique can be applied.

Consider the $k$th iteration, where the directions $\{\theta_j,\phi_j\}_{j=1}^{k-1}$ of the previous $k-1$ UAVs have already been determined. Now we need to determine the  best direction $(\theta_k,\phi_k)$ for UAV $k$. Different from the capacity maximization problem \eqref{P:CstarOptimizedDirection}, the case for TIN needs a different treatment since adding one additional UAV also affects the rate of the previous UAVs due to the interference.

Define $\mathbf S_k \triangleq \sum_{j=1}^{k-1} b_j\mathbf a_j\mathbf a_j^H$. Then on the one hand, the direction $(\theta_k,\phi_k)$ for UAV $k$ should be chosen so that its own communication  rate $\log_2\left(1+b_k\mathbf a_k^H (\mathbf I + \mathbf S_k)^{-1} \mathbf a_k\right)$ is maximized. On the other hand, its adversary impact to the communication rate for the previous $k-1$ UAVs should be minimized. For $j<k$, if UAV $k$ is placed at the direction $(\theta_k,\phi_k)$, we have

\begin{align}
\SINR_j(\mathbf a_k)&=b_j\mathbf a_j^H\left(\mathbf I + \sum_{l<k, l\neq j} b_l\mathbf a_l\mathbf a_l^H +b_k \mathbf a_k \mathbf a_k^H  \right)^{-1}\mathbf a_j\\
&=b_j\mathbf a_j^H\left(\mathbf I + \mathbf S_{j,k}+b_k \mathbf a_k \mathbf a_k^H \right)^{-1}\mathbf a_j\\
&= b_j\mathbf a_j^H \mathbf J_{j,k}\mathbf a_j - \frac{b_j\mathbf a_j^H  \mathbf J_{j,k}b_k \mathbf a_k \mathbf a_k^H\mathbf J_{j,k} \mathbf a_j}{1+b_k \mathbf a_k^H \mathbf J_{j,k} \mathbf a_k}\\
&= \SINR_{j,k-1} - \frac{b_jb_k\left|\mathbf a_j^H  \mathbf J_{j,k}\mathbf a_k\right|^2}{1+b_k \mathbf a_k^H \mathbf J_{j,k} \mathbf a_k},
\end{align}
where for $j<k$, we have defined $\mathbf S_{j,k}\triangleq \sum_{l<k, l\neq j} b_l\mathbf a_l\mathbf a_l^H$ and $\mathbf J_{j,k}\triangleq \left( \mathbf I + \mathbf S_{j,k}\right)^{-1}$. Furthermore, $\SINR_{j,k-1}\triangleq b_j\mathbf a_j^H \mathbf J_{j,k}\mathbf a_j$ is the SINR of UAV $j$ before UAV $k$ is added. The second term accounts for the SINR reduction for UAV $j<k$ due to the addition of UAV $k$ that has the steering vector $\mathbf a_k$. Obviously, if UAV $k$ is the only interference to UAV $j$ so that $\mathbf S_{j,k}=\boldsymbol 0$ and $\mathbf J_{j,k}=\mathbf I$, then $\mathbf a_k$ should be orthogonal to $\mathbf a_j$ so that the SINR reduction is $0$. This is not the case when $\mathbf S_{j,k}\neq \mathbf 0$, i.e., when UAV $j$ also receives interference from other UAVs since the optimal receive beamforming vectors need to balance the interference from different UAVs.

As a result, at the $k$th iteration, the direction $(\theta_k,\phi_k)$ for UAV $k$ can be found by solving the following problem
\begin{align}\label{P:RstarDirection3}
\begin{cases} \underset{\theta_k, \phi_k}{\max} & \log_2(1+\SINR_k(\mathbf a_k))+ \sum_{j=1}^{k-1} \log_2\left(1+\SINR_j(\mathbf a_k)\right)\\
 \text{s.t.}  & \mathrm{C}_2-\mathrm{C}_5.
 \end{cases}
 \end{align}
 Similar to problem \eqref{eq:hk_greedy4}, by exploiting the array manifold structure for $\mathbf a_k$, problem \eqref{P:RstarDirection3} can be efficiently solved with the codebook based method. Furthermore, to avoid evaluating the matrix inversion  $\mathbf J_{j,k}\triangleq \left( \mathbf I + \mathbf S_{j,k}\right)^{-1}$, the following recursive relation is used
\begin{align}
\mathbf J_{j,k}&=(\mathbf I + \mathbf S_{k}- b_j \mathbf a_j\mathbf a_j^H)^{-1}\\
&=(\mathbf I + \mathbf S_{k})^{-1}-\frac{(\mathbf I + \mathbf S_{k})^{-1} b_j \mathbf a_j \mathbf a_j^H(\mathbf I + \mathbf S_{k})^{-1}}{b_j \mathbf a_j^H (\mathbf I + \mathbf S_{k})^{-1}\mathbf a_j-1 }\\
&=\mathbf J_{k}-\frac{\mathbf J_{k}b_{j} \mathbf a_{j} \mathbf a_{j}^H\mathbf J_{k}}{b_{j} \mathbf a_{j}^H\mathbf J_{k}\mathbf a_{j}-1},\label{eq:Jjk}
\end{align}
where $\mathbf J_k\triangleq (\mathbf I+\mathbf S_k)^{-1}$, which can be computed recursively based on \eqref{eq:JkRecursive} without computing matrix inversion.

Therefore, the pseudo-codes for solving the swarm direction optimization problem \eqref{P:RstarDirection2} are summarized in Algorithm~\ref{Algo:RateOptDirection}.
\begin{algorithm}[H]
\caption{Proposed algorithm for swarm direction optimization problem \eqref{P:RstarDirection2} for rate maximization.}\label{Algo:RateOptDirection}
\begin{algorithmic}[1]
\STATE {\bf Input}: Number of UAVs $K$, SNRs $\{b_k\}_{k=1}^K$. $d_{\min}$, $d_{\max}$, $\Theta$, and $\Phi$.
\STATE {\bf Output}: UAV directions $\{\theta_k^\star, \phi_k^\star\}_{k=1}^K$
\STATE {\bf Initialization}: Construct the dictionary $\mathrm{\Theta}_{\dic}$. Let $\mathbf J_1\leftarrow \mathbf I_M$, and $k=1$.
\WHILE {$k\leq K$}
\STATE Construct $\mathrm{\Theta}_k$ based on \eqref{eq:Thetak}.
\STATE For $j<k$, calculate $\mathbf J_{j,k}$ based on \eqref{eq:Jjk}
\STATE Let $(\theta_k^\star, \phi_k^\star)\leftarrow \mathrm{arg}\underset{(\theta_k, \phi_k)\in \mathrm{\Theta}_k}{\max} \log_2(1+\SINR_k(\mathbf a_k))+ \sum_{j=1}^{k-1} \log_2\left(1+\SINR_j(\mathbf a_k)\right)$
\STATE Let $\mathbf a_k \leftarrow \mathbf a(\theta_k^\star, \phi_k^\star)$
\STATE Update $k=k+1$.
\STATE Update $\mathbf J_k = \mathbf J_{k-1}-\frac{b_{k-1} \mathbf J_{k-1} \mathbf a_{k-1}\mathbf a_{k-1}^H\mathbf J_{k-1}}{1+b_{k-1}\mathbf a_{k-1}^H\mathbf J_{k-1}\mathbf a_{k-1}}$
\ENDWHILE
\end{algorithmic}
\end{algorithm}


By combining the SCA algorithm for the range optimization problem \eqref{P:RstarRange} and Algorithm~\ref{Algo:RateOptDirection} for the direction optimization, then the UAV swarm formation optimization problem \eqref{P:RstarPolar} is efficiently solved. The pseudo-codes are quite similar to Algorithm~\ref{Algo:OptSwarm}. Since at each iteration, the objective value of problem \eqref{P:RstarPolar} is non-decreasing, the algorithm is guaranteed to converge.

\section{Simulation Results}\label{sec:simulation}
In this section, we provide simulation results to verify the theoretical analysis developed in Section~\ref{sec:capacityCharacterization} and the proposed algorithms in Section~\ref{sec:optimization}.
Unless otherwise stated, the following three UAV swarm formation schemes are compared:

(1) {\it random swarm}: This is a benchmark scheme to reflect the reality of most existing MU-MIMO communication systems where the UE locations/formations are uncontrollable. In this case, the aerial UEs are randomly located within the range-angular interval specified in \eqref{eq:C1}-\eqref{eq:C5}. If no inter-UAV separation constraint is imposed, i.e., $d_{\min}=0$ and $d_{\max}=\infty$ for constraints \eqref{eq:C4} and \eqref{eq:C5}, it is not difficult to see that at the optimal solution to capacity and rate maximization problems \eqref{P:Cstar} and \eqref{P:Rstar}, all UAVs should have the minimum distance with the BS, i.e., $r_k = r_{\min}, \forall k$. In this case, the UAV angles are independently and uniformly selected from the specified angular range. On the other hand, if the minimum and/or maximum UAV separation constraints \eqref{eq:C4} and \eqref{eq:C5} are imposed, we use a greedy scheme similar to Section~\ref{sec:direcOptimization} to ensure that the obtained random swarm is feasible to all constraints in \eqref{eq:C1}-\eqref{eq:C5} to ensure fair comparison.

(2) {\it SIC, proposed swarm}: This corresponds to the proposed Algorithm~\ref{Algo:OptSwarm}, which optimizes the UAV swarm formation to maximize the sum capacity by assuming SIC.

(3) {\it TIN, proposed swarm}: This corresponds to the proposed Algorithm in Section~\ref{sec:rateMax}, which optimizes the UAV swarm formation to maximize the achievable sum rate by assuming TIN.

Both ULA and UPA are considered. Unless otherwise stated, for ULA, $M=16$ antennas are placed along the y axis, and two different values for the azimuth angular ranges are considered: $\Phi=90^\circ$ and $60^\circ$. For UPA, $M=8\times 8$ antennas are placed along the y-z plane, and two sets of the elevation and azimuth angular ranges are considered: $(\Theta, \Phi)=(90^\circ, 90^\circ), (60^\circ, 60^\circ)$. The reference range is set to $r_0=100$m, and SNR at the reference range is $\bar P=20$ dB. The minimum and maximum allowable distances between each UAV and the BS are $r_{\min}= 50$ m and $r_{\max} = 500$ m, respectively. Besides, we let $\rho = \bar P\frac{r_0^2}{r_{\min}^2}$.
Regarding the inter-UAV separation for collision avoidance and swarm cohesion constraints, two cases are examined. Case 1: No constraint, corresponding to $d_{\min}=0$ and $d_{\max}=\infty$.
Case 2: With constraint: $d_{\min}=10$ m and $d_{\max}=500$ m.

\begin{figure}
\centering
\begin{subfigure}{0.45\textwidth}
\includegraphics[width=0.85\linewidth]{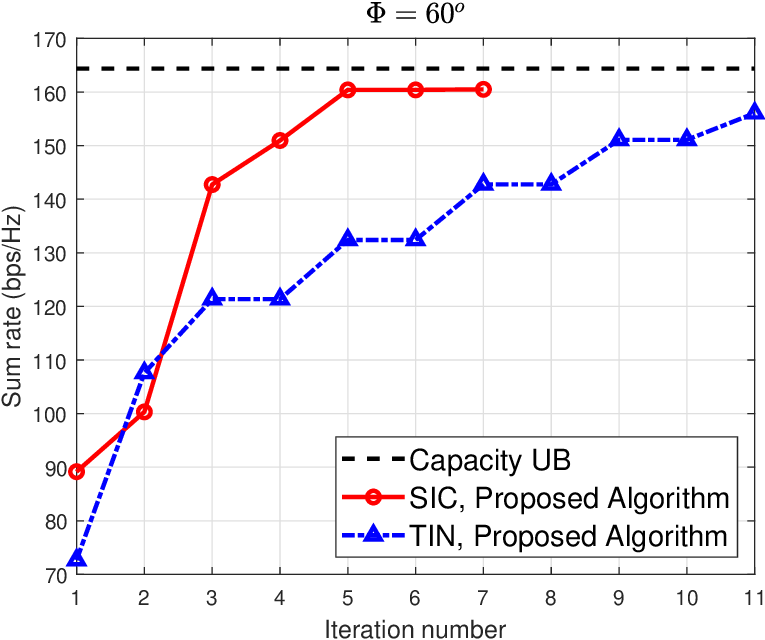}
\caption{ULA}
\end{subfigure}
\begin{subfigure}{0.45\textwidth}
\includegraphics[width=0.85\linewidth]{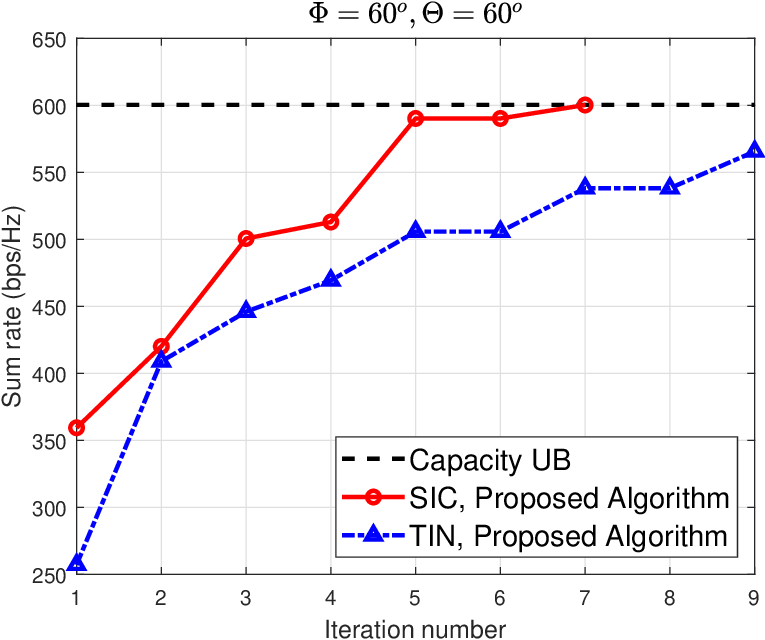}
\caption{UPA}
\end{subfigure}
\caption{Convergence of the proposed UAV swarm formation optimization algorithms.}\vspace{-2ex}\label{F:convergence}
\end{figure}

\subsection{Convergence of Proposed Algorithms}
Fig.~\ref{F:convergence} gives the convergence plot of the proposed Algorithms for formation optimization in Section~\ref{sec:optimization} for ULA and UPA, by showing the achieved sum capacity/rate of each iteration with SIC and TIN, respectively. The number of UAVs in the swarm is set to the optimal number of orthogonal directions derived in \eqref{eq:NphiULA} and \eqref{eq:NExact}, i.e., $K=13$ for ULA with $M=16$ and $\Phi = 60^\circ$, and $K=41$ for UPA with $M=64$ and $\Phi=\Theta = 60^\circ$. The minimum and maximum UAV separation thresholds are $d_{\min}=10$ m and $d_{\max}=500$ m. The curve labelled as ``Capacity UB`` corresponds to the closed-form theoretical expression given in \eqref{eq:CStarMax0} for ULA and \eqref{eq:CStarMaxUPA2} for UPA, which are the exact/approximate capacity if there is no inter-UAV separation constraint, and therefore serving as an upper bound for the considered scenario. Fig.~\ref{F:convergence} shows that for both ULA and UPA, our proposed iterative algorithms have steady increase for the achieved sum capacity/rate, and they converge with a few iterations. Furthermore, the converged capacity/rate of the proposed algorithms achieve near-optimal performance as the theoretical upper bound, which validates  theoretical analysis in Section~\ref{sec:capacityCharacterization} and demonstrates the effectiveness of the proposed optimization algorithms in Section~\ref{sec:optimization}.


\begin{figure}
\centering
\begin{subfigure}{0.45\textwidth}
\includegraphics[width=0.8\linewidth]{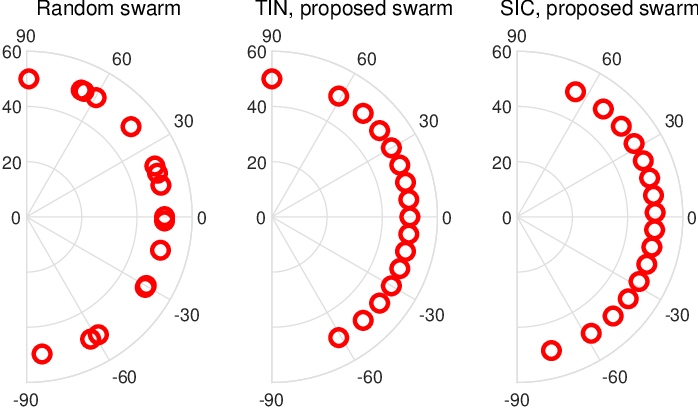}
\caption{w/o UAV separation constraint, $\Phi=90^\circ$}
\end{subfigure} \vspace{3ex}
\hspace{0.02\textwidth}
\begin{subfigure}{0.45\textwidth}
\includegraphics[width=0.8\linewidth]{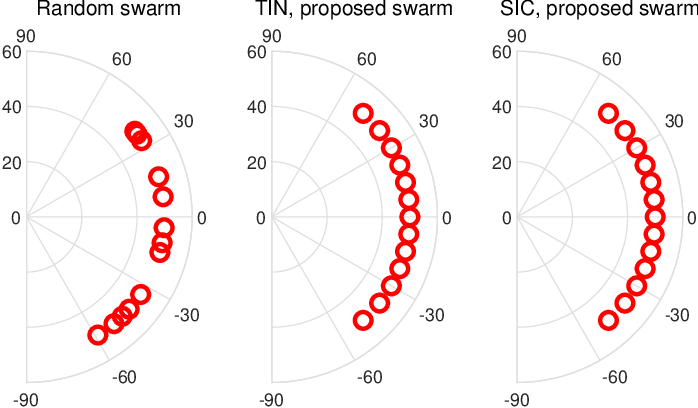}
\caption{w/o UAV separation constraint, $\Phi=60^\circ$}
\end{subfigure}\vspace{3ex}
\begin{subfigure}{0.45\textwidth}
\includegraphics[width=0.8\linewidth]{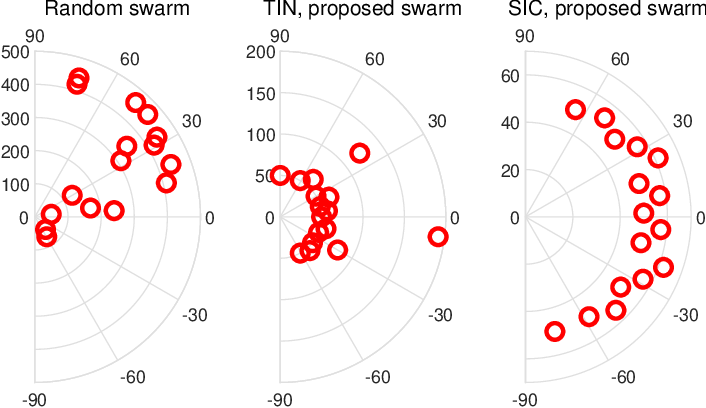}
\caption{with UAV separation constraint, $\Phi=90^\circ$}
\end{subfigure}
\begin{subfigure}{0.45\textwidth}
\includegraphics[width=0.8\linewidth]{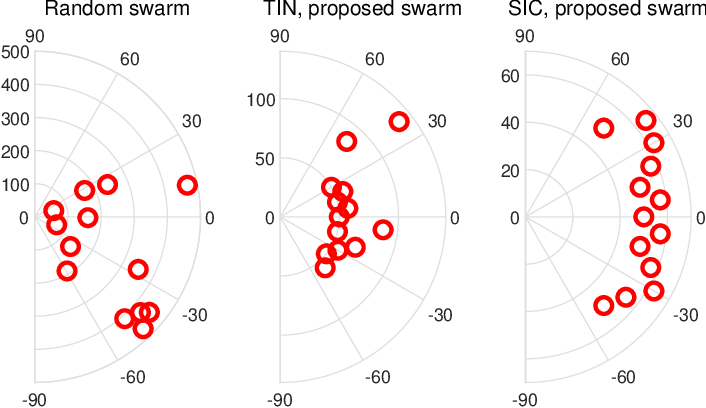}
\caption{with UAV separation constraint, $\Phi=60^\circ$}
\end{subfigure}
\caption{Polar plot of UAV swarm locations for three schemes: random swarm; proposed swarm for TIN; proposed swarm for SIC.}\label{F:swarmlocations}
\end{figure}

\subsection{UAV Swarm Locations and Interference Pattern}
Next, we examine the resulting UAV swarm locations for the three schemes mentioned above: random swarm and optimized swarms with TIN and SIC respectively. For ease of visualization, the ULA case is considered and the resulting UAV swarm locations are plotted in a polar coordinate showing the corresponding UAV range and azimuth angle $(r_k, \phi_k)$. It is observed from Fig.~\ref{F:swarmlocations}(a) and (b) that if no inter-UAV separation constraint is applied, all UAVs have the minimum distance from the BS, i.e., $r_k=r_{\min}=50$ m $\forall k$, while this is not true if inter-UAV separation constraints are considered, as shown in Fig.~\ref{F:swarmlocations}(c) and (d). This is expected since for the latter, the UAV may have to move further away from the BS in order to ensure the minimum UAV separations and also achieve interference mitigation for the TIN case. In terms of UAV directions, different from the conventional random UE cases, the optimized swarm UEs with our proposed algorithms achieve effective coordination for the UE directions so that the inter-UE interference is minimized. In fact, for the case without UAV separation constraint shown in Fig.~\ref{F:swarmlocations} (a) and (b), the proposed formation optimization algorithms give the same UAV directions as the theoretical optimal values derived in Section~\ref{sec:capacityCharacterization}, which achieves zero inter-UE interference.

 For the case with inter-UAV separation constraints shown in Fig.~\ref{F:swarmlocations} (c) and (d), zero inter-UE interference may not be achievable for all UAV pairs. Notably, the proposed algorithms still achieve effective inter-UAV interference separation by UAV direction optimization. To demonstrate this, Fig.~\ref{F:interference} plots the inter-UAV interference coefficients corresponding to the UAV swarm locations in Fig.~\ref{F:swarmlocations}(d). The coefficients are given by the $K\times K$-dimensional matrix defined as $|\mathbf A^H \mathbf A|^2/M^2$, with $\mathbf A\in \mathbb{C}^{M \times K}$ being the array steering matrix with $\mathbf a(\theta_k,\phi_k)$ being its columns. It is observed from Fig.\ref{F:interference} that in contrast to the random swarm scheme that suffers from severe inter-UAV interference, the proposed algorithms achieve almost zero inter-UAV interference by exploiting the additional design degree of freedom of swarm formation optimization, without sacrificing its own beam forming gain $M$. This is fundamentally different from classic beamforming techniques like zero-forcing (ZF) beamforming, which achieves interference nulling but at the cost of sacrificing its own beamforming gain.

\begin{figure}
\centering
\begin{subfigure}{0.32\textwidth}
\includegraphics[width=0.85\linewidth]{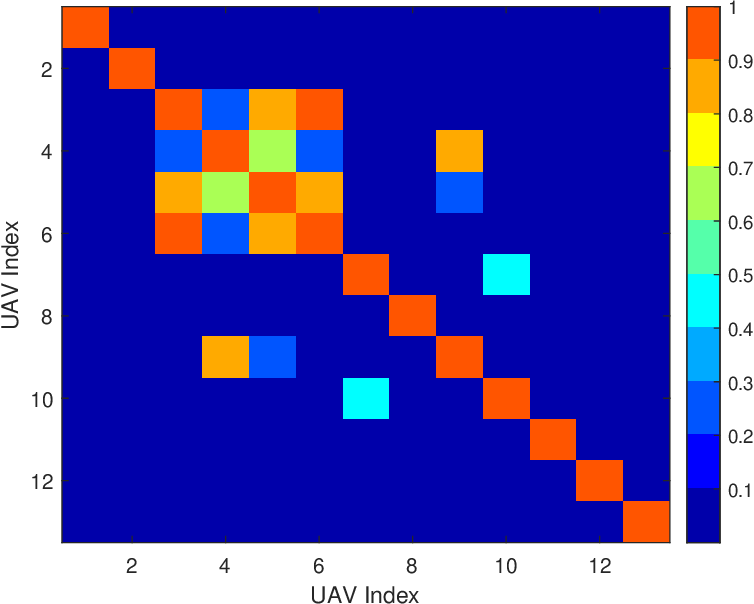}
\caption{Random swarm}
\end{subfigure}
\begin{subfigure}{0.32\textwidth}
\includegraphics[width=0.85\linewidth]{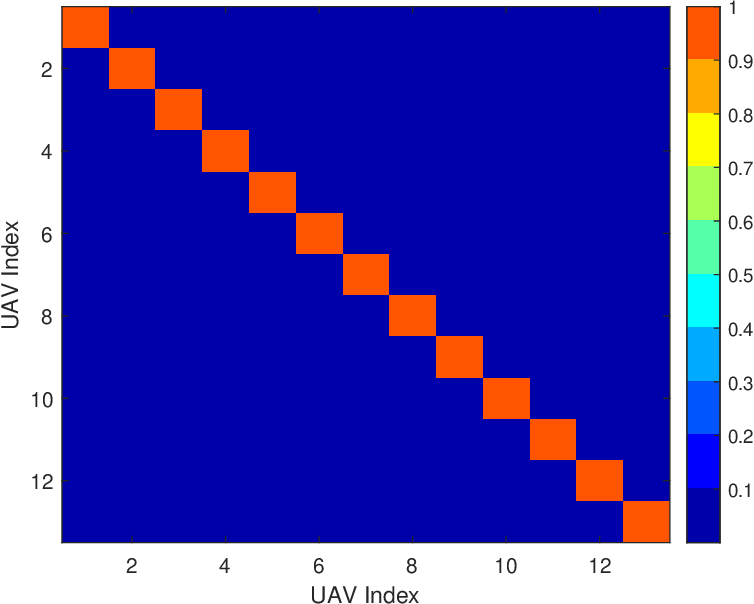}
\caption{TIN, proposed}
\end{subfigure}
\begin{subfigure}{0.32\textwidth}
\includegraphics[width=0.85\linewidth]{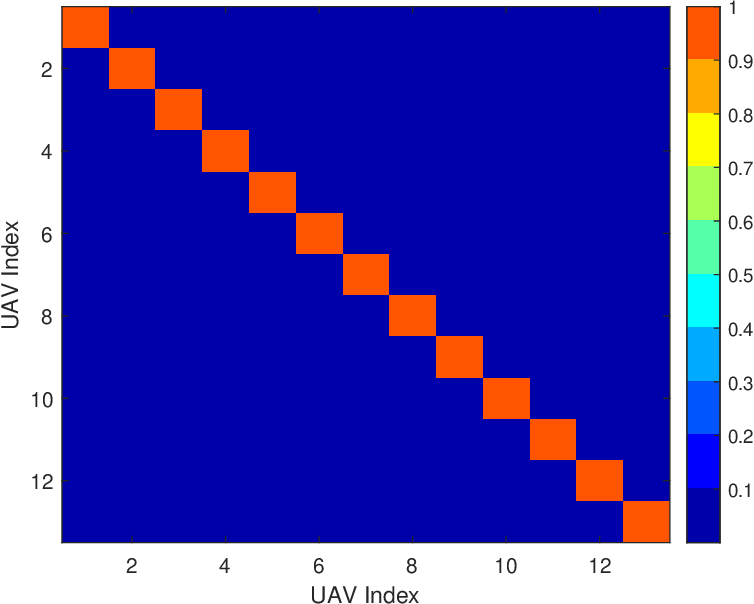}
\caption{SIC, proposed}
\end{subfigure}
\caption{Inter-UAV interference patterns for the three swarm schemes.}\label{F:interference}
\end{figure}

Fig.~\ref{F:UAVSeparations} plots the inter-UAV separations for all the $K(K-1)/2$ pairs of UAVs corresponding to the UAV swarm locations shown in Fig.~\ref{F:swarmlocations}(d). It is observed that for both random swarm and optimized swarm with our proposed algorithms, the minimum and maximum inter-UAV swarm separation constraints $d_{\min}=10$m and $d_{\max}=500$m are satisfied.

\begin{figure}
\centering
\includegraphics[width=0.5\linewidth]{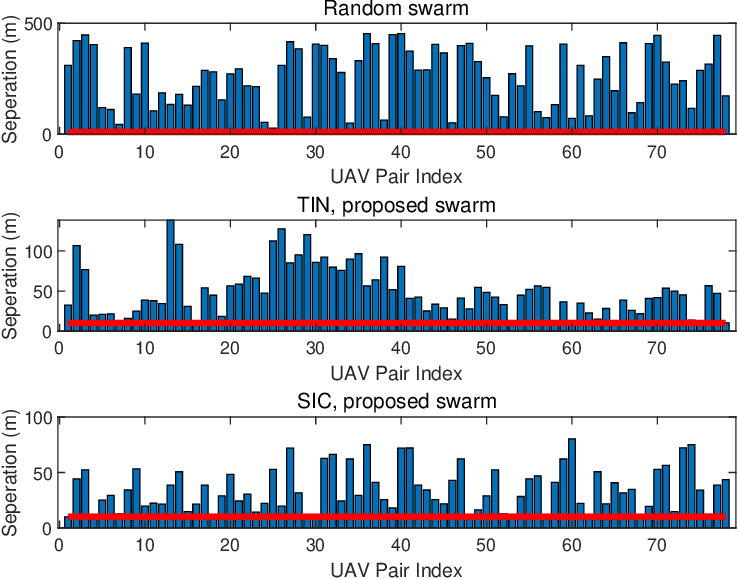}
\caption{Inter-UAV separations, with constraints $d_{\min}=10$ m and $d_{\max} = 500$ m.}\label{F:UAVSeparations}
\end{figure}

\subsection{Average Rate vs $K$}
Fig.~\ref{F:averageRate} gives the achieved average sum rate versus the number of UAV UEs $K$ for the ULA case with $M=16$. It is observed from Fig.~\ref{F:averageRate}(a) and (b) that for the case without UAV separation constraints, the proposed swarm formation optimization algorithms achieve sum rate linearly increasing with $K$ when $K\leq 16$ for $\Phi=90^\circ$ and $K\leq 13$ for $\Phi=60^\circ$, and they achieve the theoretical capacity upper bound. This is consistent with the theoretical analysis in Section~\ref{sec:optimalCapacity}, showing that if $K\leq N(\bar \Phi)$, with $N(\bar \Phi)$ given by \eqref{eq:NphiULA}, the optimal capacity is $C=K\log_2(1+\rho M)$. Furthermore, Fig.~\ref{F:averageRate}(a) and (b) demonstrate that our proposed UAV swarm formation optimization algorithms are able to achieve the theoretically optimal solution for the considered setups, and they significantly outperform the benchmarking random swarm for both SIC and TIN cases. When inter-UAV separation constraints are imposed, Fig.~\ref{F:averageRate}(c) and (d) demonstrate that though with slight performance degradation, our proposed algorithms that exploit the UEs' controllable mobility still achieve tremendous gains over the baseline schemes which fail to utilize such new design degree of freedom. For instance, for the case of TIN with $K=16$, the proposed schemes achieve a sum rate increase by about $1.6$ times and $2$ times for $\Phi=90^\circ$ and $\Phi=60^\circ$, respectively.

\begin{figure}
\centering
\begin{subfigure}{0.45\textwidth}
\includegraphics[width=0.8\linewidth]{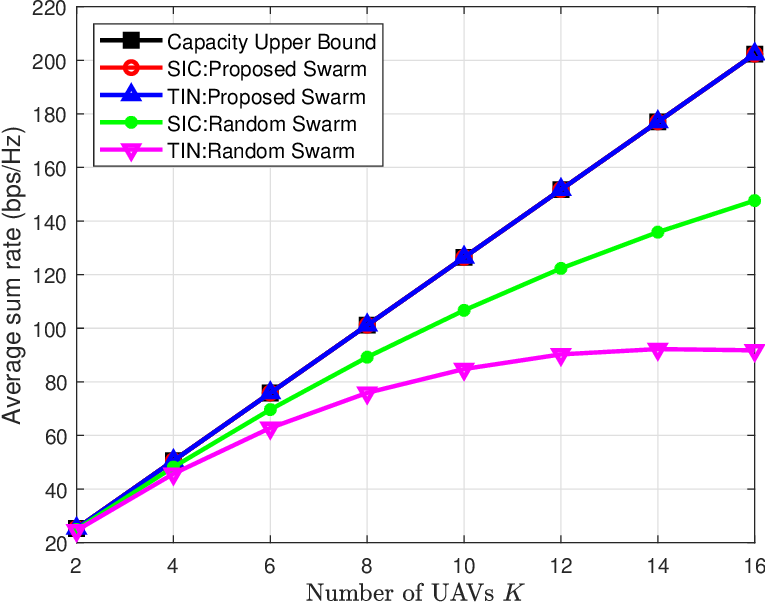}
\caption{w/o UAV separation constraint, $\Phi=90^\circ$}
\end{subfigure}\vspace{3ex}
\begin{subfigure}{0.45\textwidth}
\includegraphics[width=0.8\linewidth]{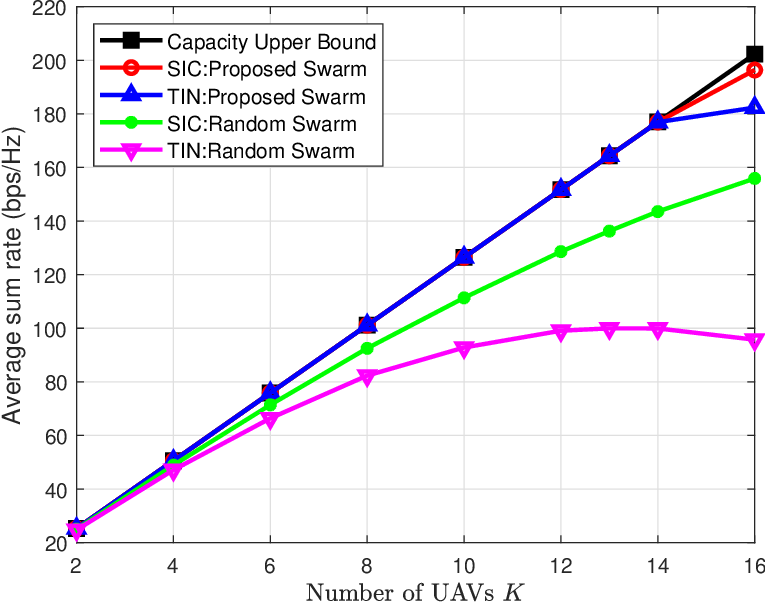}
\caption{w/o UAV separation constraint, $\Phi=60^\circ$}
\end{subfigure}\vspace{3ex}
\begin{subfigure}{0.45\textwidth}
\includegraphics[width=0.8\linewidth]{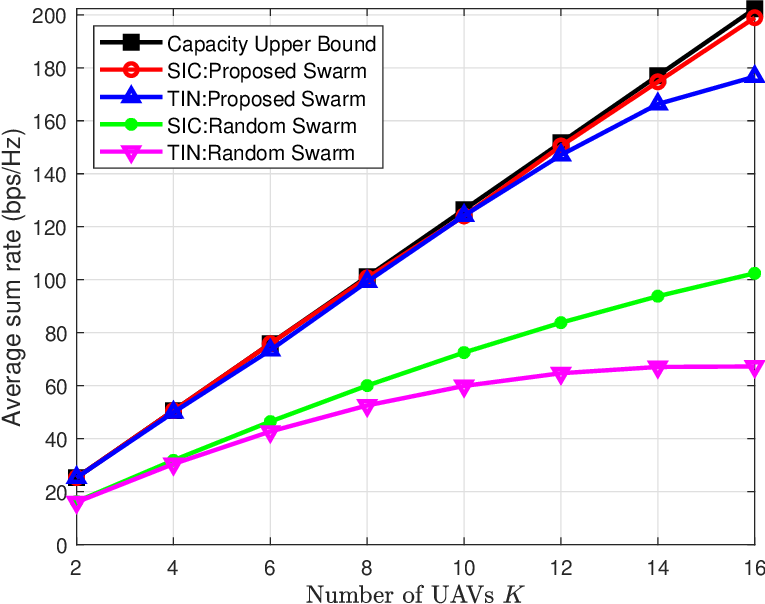}
\caption{with UAV separation constraint, $\Phi=90^\circ$}
\end{subfigure}
\begin{subfigure}{0.45\textwidth}
\includegraphics[width=0.8\linewidth]{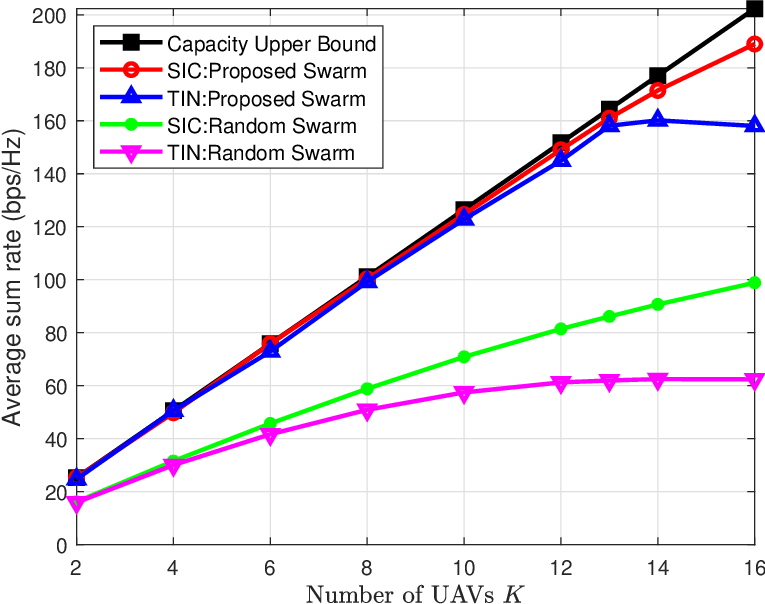}
\caption{with UAV separation constraint, $\Phi=60^\circ$}
\end{subfigure}
\caption{Average sum rate versus the number of UAVs $K$.}\label{F:averageRate}
\end{figure}

\subsection{Rate CDF}
Fig.~\ref{F:rateCDF} gives the cumulative distribution function (CDF) plot of the achieved communication sum rate for random swarm and proposed optimized swarm for both ULA and UPA, considering the case with inter-UAV separation constraints. The number of UAVs $K$ are set based on the theoretical values $N(\bar \Phi)$ in \eqref{eq:NphiULA} and  for ULA and $N_{\mathrm{UPA}}$ in \eqref{eq:NExact} for UPA. For the case of random swarm, $10000$ random realizations of the feasible UAV locations are considered. It is observed from the figure that the proposed swarm is able to achieve guaranteed deterministic performance that matches or approach to the capacity upper bound, thanks to its exploitation of the UEs' controllable mobility via swarm formation optimization. By contrast, if UEs are randomly located as in the conventional MU-MIMO communication systems, there is very little chance that the resulting performance is close to the capacity upper bound. This effectively demonstrate that fully controllable mobility of swarm UEs like UAV swarms is able to boost the communication capacity dramatically.

\begin{figure}
\centering
\begin{subfigure}{0.45\textwidth}
\includegraphics[width=0.8\linewidth]{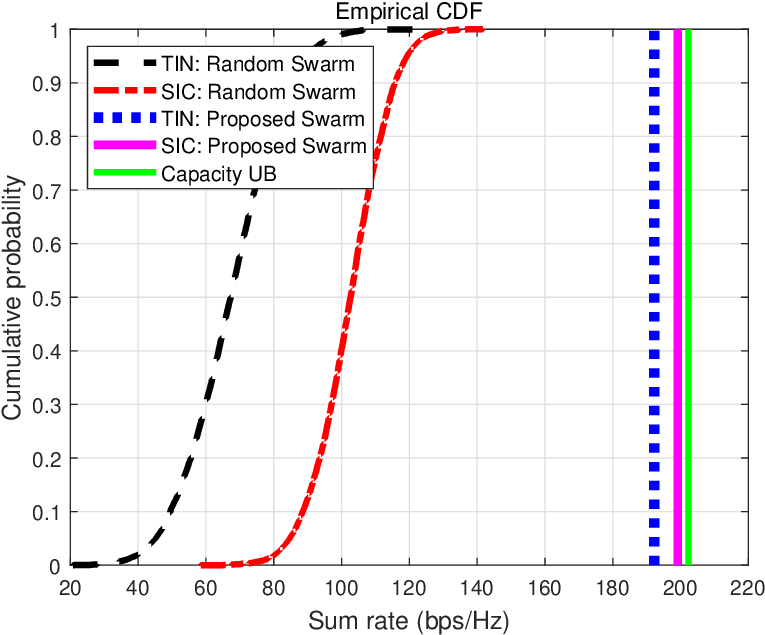}
\caption{ULA, $\Phi=90^\circ$}
\end{subfigure}\vspace{3ex}
\begin{subfigure}{0.45\textwidth}
\includegraphics[width=0.8\linewidth]{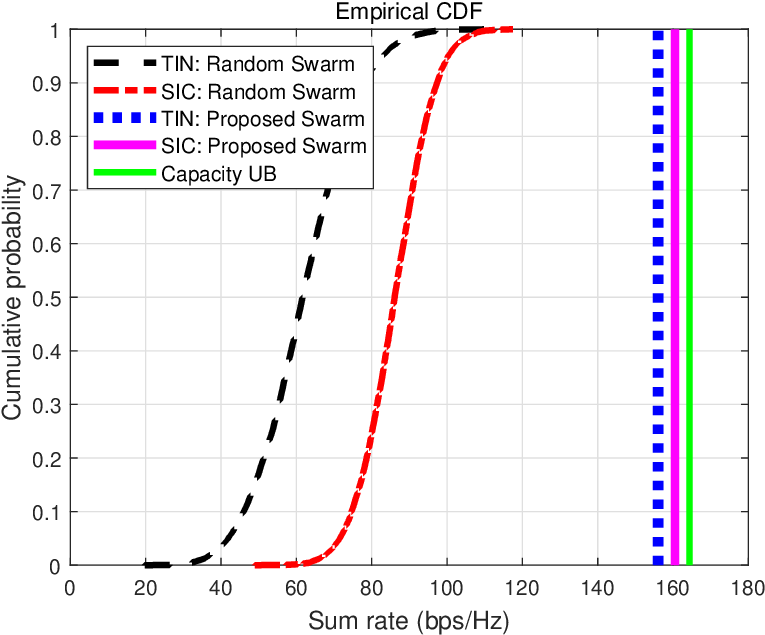}
\caption{ULA, $\Phi=60^\circ$}
\end{subfigure}\vspace{3ex}
\begin{subfigure}{0.45\textwidth}
\includegraphics[width=0.8\linewidth]{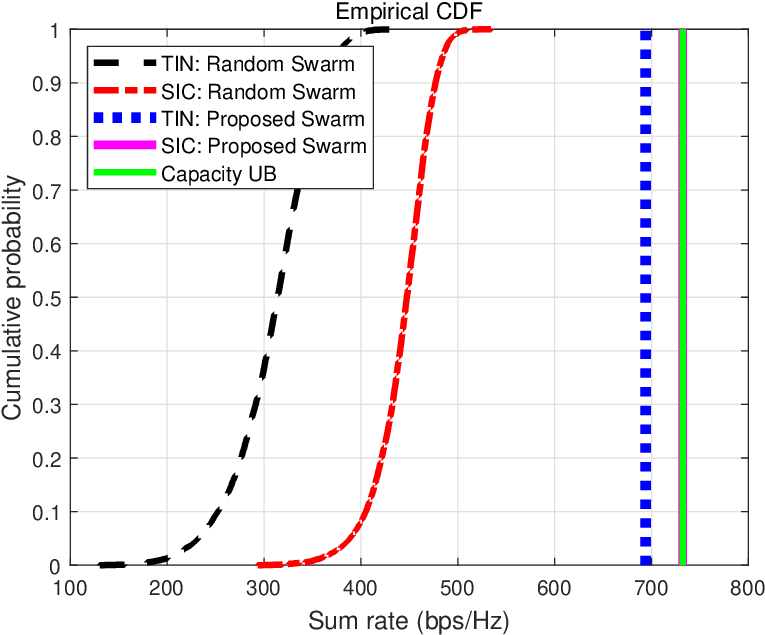}
\caption{UPA, $\Phi=\Theta=90^\circ$}
\end{subfigure}
\begin{subfigure}{0.45\textwidth}
\includegraphics[width=0.8\linewidth]{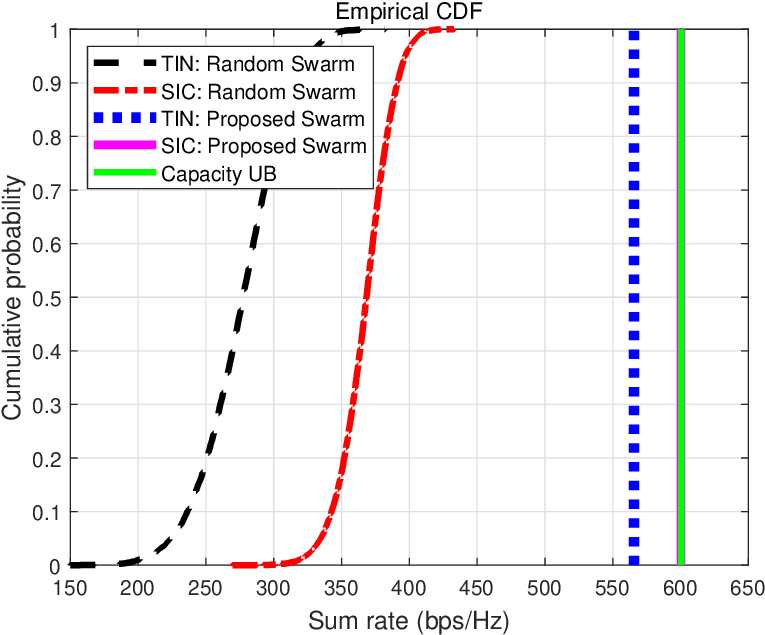}
\caption{UPA, $\Phi=\Theta=60^\circ$}
\end{subfigure}
\caption{Cumulative distribution function of communication sum rate.}\label{F:rateCDF}
\end{figure}

\section{Conclusion}\label{sec:Conclusion}
In this paper, we investigated a MU-MIMO wireless communication system in which a swarm of UAVs serves as the UEs. The key distinction from conventional MU-MIMO systems lies in the ability to optimize the UAVs' positions or formation to maximize either the sum-capacity (with SIC) or the sum-rate (with TIN). By leveraging this new degree of freedom, we first derived a closed-form characterization of the sum-capacity for MU-MIMO UAV swarm communications, revealing that full spatial multiplexing gain and beamforming gain can be simultaneously attained. For the general case incorporating practical swarm constraints such as collision avoidance and cohesion, we proposed effective algorithms to optimize the UAV formation for maximizing sum-capacity and sum-rate. Extensive numerical results were provided to validate both the theoretical analysis and the effectiveness of the proposed algorithms.

While this work extends classical MU-MIMO capacity characterization and rate optimization frameworks for UEs with controllable and coordinated mobility,  several limitations warrant further investigation. First, our current model assumes single-antenna UAVs; extending the analysis to multi-antenna UAVs presents an important direction for future work. Second, the study relies on LoS free-space channel model, whereas real-world aerial-to-ground channels in complex environments typically involve a mix of LoS and non-LoS (NLoS) components. Adapting the proposed framework to such realistic propagation conditions remains an open challenge. A promising avenue is to integrate the emerging concept of channel knowledge map (CKM) \cite{2101}, which can provide location-specific prior channel knowledge to guide swarm formation optimization in complex environments.

\appendices
\section{Proof of Theorem~\ref{theo:CUB}}\label{A:CUB}
To prove Theorem~\ref{theo:CUB}, it is first noted that since the two optimization problems share the same feasible region $\mathcal Q$, for any UAV swarm formation $\mathbf Q$ that is feasible to  problem \eqref{P:Cstar}, it is also feasible to \eqref{P:CUB}. Let $\mathbf Q_2^\star$ denote an optimal solution to problem \eqref{P:CUB}. Then $\forall \mathbf Q\in \mathcal Q$, we have the following relations:
\begin{align}
 C(\mathbf Q)&=\log_2\left|\mathbf I +  \mathbf{\bar{P}} \mathbf H^H(\mathbf Q) \mathbf H(\mathbf Q) \right| \label{eq:ineq0C}\\
 &\leq \log_2\left|\mathbf I +  \mathbf{\bar{P}} \diag\{\mathbf H^H(\mathbf Q) \mathbf H(\mathbf Q)\} \right| \label{eq:ineq1C} \\
 &=\sum_{k=1}^K \log_2\left(1+ \bar P_k \|\mathbf h_k(\mathbf Q)\|^2 \right)\\
 &\leq \sum_{k=1}^K \log_2\left(1+ \bar P_k \|\mathbf h_k(\mathbf Q_2^\star)\|^2\right)\label{eq:ineq2C} \\
 &=C_2^\star, \forall \mathbf Q\in \mathcal Q, \label{eq:ineq3C}
\end{align}
where the inequality \eqref{eq:ineq1C} holds due to the Fischer's (Hadamard's) inequality,  and the  inequality \eqref{eq:ineq2C} holds since $\mathbf Q_2^\star$ is an optimal solution to problem \eqref{P:CUB}.

 The remaining task for proving Theorem~\ref{theo:CUB} is to show that \eqref{eq:orthgogonal4} is the  sufficient and necessary conditions for the upper bound \eqref{eq:CUB} to be tight. Proving  the sufficient condition is straightforward, by simply substituting  $\mathbf Q=\mathbf Q_2^\star$ to problem \eqref{P:Cstar}, which gives $C(\mathbf Q_2^\star)=C_2^\star$ if \eqref{eq:orthgogonal4} is satisfied. Next, we show that \eqref{eq:orthgogonal4} is also the necessary condition. This is achieved by contradiction. Suppose, on the contrary, that there exists no optimal solution to problem \eqref{P:CUB} satisfying the conditions \eqref{eq:orthgogonal4}, while the upper bound \eqref{eq:CUB} is still tight, i.e., $\exists \mathbf Q' \in \mathcal Q$ such that $C(\mathbf Q')=C_2^\star$. Note that since $\mathbf Q'$ also needs to satisfy the relations \eqref{eq:ineq0C}-\eqref{eq:ineq3C}, then both inequalities \eqref{eq:ineq1C} and \eqref{eq:ineq2C} must be satisfied with equality. Furthermore, for \eqref{eq:ineq1C} to be satisfied with equality, a sufficient and necessary condition is that $\mathbf H^H(\mathbf Q') \mathbf H(\mathbf Q')$ is a diagonal matrix, which is equivalent to $\mathbf h_k^H(\mathbf Q')\mathbf h_j(\mathbf Q')=0, \forall k\neq j$. Besides, satisfying \eqref{eq:ineq2C} with equality implies that $\mathbf Q'$ is also an optimal solution to problem \eqref{P:CUB}  since it belongs to $\mathcal Q$ and also achieves the optimal objective value $C_2^\star$. This  contradicts with the contrary assumption that there exists no optimal solution to  problem \eqref{P:CUB} satisfying the orthogonality conditions \eqref{eq:orthgogonal4}.

This thus completes the proof of Theorem~\ref{theo:CUB}.

\section{Proof of Theorem~\ref{theo:1}}\label{A:theo:1}
Theorem~\ref{theo:1} can be easily shown by noticing that for any given UAV swarm formation $\mathbf Q$, the $K$ optimal receive beamforming vectors $\mathbf w_k$ to problem \eqref{P:Rstar} can be obtained independently by maximizing the $\SINR_k(\mathbf Q, \mathbf w_k)$ in \eqref{eq:SINRk}. This is a generalized Rayleigh quotient problem, and the optimal solution is given by the classic linear minimum mean square error (LMMSE) beamforming:
\begin{align}
\mathbf w_k^\star = \Big(\sum_{j\neq k} \bar P_j \mathbf h_j\mathbf h_j^H + \mathbf I_M \Big)^{-1}\sqrt{\bar P_k} \mathbf h_k.\label{eq:wkMMSE}
\end{align}
By substituting \eqref{eq:wkMMSE} into \eqref{eq:SINRk}, the resulting maximum SINR for any specified UAV swarm formation $\mathbf Q$ is given by \eqref{eq:SINRkQ}. This thus completes the proof of Theorem~\ref{theo:1}.

\section{Proof of Theorem~\ref{theo:2}}\label{A:upperbound}
In order to show Theorem~\ref{theo:2}, we need the following result:
\begin{proposition}\label{props1}
The following inequality holds
\begin{align}
\SINR_k(\mathbf Q) &\triangleq \bar P_k \mathbf h_k^H \Big(\sum_{j\neq k} \bar P_j \mathbf h_j\mathbf h_j^H + \mathbf I_M \Big)^{-1} \mathbf h_k \label{eq:SINRkQ2} \\
&\leq \bar P_k \|\mathbf h_k\|^2, ~ \forall \mathbf Q, \forall k, \label{eq:UB3}
\end{align}
where the equality holds if and only if
\begin{align}
\mathbf h_k^H \mathbf h_j = 0, \forall k\neq j. \label{eq:orthgogonal2}
\end{align}
\end{proposition}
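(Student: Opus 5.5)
The plan is a direct matrix-inequality argument that treats each index $k$ separately. Fix $k$ and write $\mathbf A_k\triangleq\sum_{j\neq k}\bar P_j\mathbf h_j\mathbf h_j^H$, which is Hermitian positive semidefinite.

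First, diagonalize $\mathbf A_k=\mathbf U\boldsymbol\Lambda\mathbf U^H$ with eigenvalues $\lambda_i\geq 0$. Then $(\mathbf I_M+\mathbf A_k)^{-1}=\mathbf U\,\diag\{1/(1+\lambda_i)\}\,\mathbf U^H$, whose eigenvalues all lie in $(0,1]$, so $(\mathbf I_M+\mathbf A_k)^{-1}\preceq\mathbf I_M$. Applying the quadratic form with $\mathbf h_k$ and multiplying by $\bar P_k>0$ gives $\SINR_k(\mathbf Q)\leq\bar P_k\|\mathbf h_k\|^2$, which is \eqref{eq:UB3}.

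For the equality case, the slack is
\begin{align*}
\bar P_k\|\mathbf h_k\|^2-\SINR_k(\mathbf Q)=\bar P_k\,\mathbf h_k^H\left(\mathbf I_M-(\mathbf I_M+\mathbf A_k)^{-1}\right)\mathbf h_k=\bar P_k\,\mathbf h_k^H\mathbf U\,\diag\left\{\frac{\lambda_i}{1+\lambda_i}\right\}\mathbf U^H\mathbf h_k .
\end{align*}
Set $\mathbf g=\mathbf U^H\mathbf h_k$. The slack is $\bar P_k\sum_i\frac{\lambda_i}{1+\lambda_i}|g_i|^2$, which vanishes if and only if $g_i=0$ whenever $\lambda_i>0$. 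This is exactly the condition $\boldsymbol\Lambda\mathbf g=\mathbf 0$, i.e., $\mathbf A_k\mathbf h_k=\mathbf 0$, which in turn is equivalent to
\begin{align*}
\mathbf h_k^H\mathbf A_k\mathbf h_k=\sum_{j\neq k}\bar P_j|\mathbf h_j^H\mathbf h_k|^2=0 .
\end{align*}
Assuming $\bar P_j>0$ for all $j$ (every UAV transmits), this holds if and only if $\mathbf h_k^H\mathbf h_j=0$ for all $j\neq k$.

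Taking this over all $k$, equality holds in \eqref{eq:UB3} simultaneously for every $k$ if and only if the pairwise orthogonality \eqref{eq:orthgogonal2} holds. This is the form that will be needed when summing $\log_2(1+\cdot)$ over $k$ in the proof of Theorem~\ref{theo:2}.

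The only delicate step is the "only if" direction of the equality case. It needs the observation that a zero quadratic form of the PSD matrix $\mathbf A_k(\mathbf I_M+\mathbf A_k)^{-1}$ forces $\mathbf A_k\mathbf h_k=\mathbf 0$. This is why I would work in the common eigenbasis, where it is immediate. It also relies on positivity of the $\bar P_j$; if some $\bar P_j=0$, orthogonality to that $\mathbf h_j$ would not be required, and I would state this assumption explicitly.
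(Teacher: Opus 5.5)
Your proposal is correct and follows essentially the same route as the paper. The paper also eigendecomposes the interference covariance $\sum_{j\neq k}\bar P_j\mathbf h_j\mathbf h_j^H$, in reduced form via the matrix inversion lemma, writing the slack as $\bar P_k\mathbf h_k^H\mathbf U_k(\boldsymbol\Lambda_k^{-1}+\mathbf I)^{-1}\mathbf U_k^H\mathbf h_k$ (whose weights are exactly your $\lambda_i/(1+\lambda_i)$), and then reduces equality to $\sum_{j\neq k}\bar P_j|\mathbf h_k^H\mathbf h_j|^2=0$; your explicit assumption $\bar P_j>0$ and your remark that, for fixed $k$, equality only forces $\mathbf h_k^H\mathbf h_j=0$ for $j\neq k$ (so full pairwise orthogonality corresponds to equality for all $k$ simultaneously) are welcome sharpenings of the paper's looser wording.
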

\begin{IEEEproof}
To show Proposition~\ref{props1}, let's define $\mathbf S_k \triangleq \sum_{j\neq k} \bar P_j \mathbf h_j \mathbf h_j^H$ as the interference covariance matrix for UAV $k$. Since $\mathbf S_k$ is a summation of $K-1$ rank-1 positive semidefinite matrices, it is also positive semidefinite but possibly rank-deficient, i.e., $\mathbf S_k\succeq \mathbf 0$ and $l_k\triangleq \mathrm{rank} (\mathbf S_k)\leq M$. Let $\mathbf S_k= \mathbf U_k \boldsymbol \Lambda_k \mathbf U_k^H$ be the (reduced) eigenvalue decomposition of $\mathbf S_k$, where $\mathbf U_k \in \mathbb{C}^{M\times r_k}$ is the semi-unitary matrix, i.e., $\mathbf U_k^H \mathbf U_k = \mathbf I_{l_k}$, which is composed by the $r_k$ eigenvectors of $\mathbf S_k$ associated with those positive eigenvalues, and $\boldsymbol \Lambda_k = \diag \{\lambda_1,\cdots \lambda_{l_k}\}$ is a diagonal matrix consisting of the $l_k$ positive eigenvalues. As such, the $\SINR_k(\mathbf Q)$ in \eqref{eq:SINRkQ2} can be expressed as
\begin{align}
\SINR_k(\mathbf Q) &= \bar P_k \mathbf h_k^H \Big( \mathbf U_k \boldsymbol \Lambda_k \mathbf U_k^H+ \mathbf I_M \Big)^{-1} \mathbf h_k \\
& = \bar P_k \mathbf h_k^H \Big( \mathbf I_M - \mathbf U_k (\boldsymbol \Lambda_k^{-1} + \mathbf U_k^H \mathbf U_k)^{-1}\mathbf U_k^H\Big)  \mathbf h_k\\
& = \bar P_k \|\mathbf h_k\|^2 -\bar P_k \mathbf h_k^H \mathbf U_k (\boldsymbol \Lambda_k^{-1} + \mathbf I_{l_k})^{-1}\mathbf U_k^H \mathbf h_k \\
& \leq \bar P_k \|\mathbf h_k\|^2, ~ \forall \mathbf Q, \forall k, \label{eq:inequa}
\end{align}
where second equality follows from the matrix inversion lemma, and the last inequality follows from the fact that $(\boldsymbol \Lambda_k^{-1} + \mathbf I_{l_k})^{-1}$ is a positive definite matrix. Besides, since the quadratic form of any non-zero vector with a positive definite matrix is positive, the inequality in \eqref{eq:inequa} will become equality if and only if $\mathbf U_k^H \mathbf h_k = \mathbf 0, \forall k$.  As a result, the remaining task is to show that $\mathbf U_k^H \mathbf h_k = \mathbf 0, \forall k$ is equivalent to the conditions specified in \eqref{eq:orthgogonal2}. This can be easily shown by noticing the following equivalence relationship:
\begin{align}
\mathbf U_k^H\mathbf h_k=\boldsymbol 0 & \Longleftrightarrow \mathbf h_k^H \mathbf U_k \boldsymbol \Lambda_k \mathbf U_k^H\mathbf h_k= 0 \\
& \Longleftrightarrow \mathbf h_k^H  \big(\sum_{j\neq k} \bar P_j \mathbf h_j \mathbf h_j^H\big) \mathbf h_k= 0  \\
&  \Longleftrightarrow   \sum_{j\neq k} \bar P_j |\mathbf h_k^H \mathbf h_j|^2= 0 \\
& \Longleftrightarrow \mathbf h_k^H \mathbf h_j =0, \forall k\neq j,
\end{align}
where the first equivalence holds since $\boldsymbol \Lambda_k$ is a positive definite matrix.
This thus completes the proof of Proposition~\ref{props1}.
\end{IEEEproof}

With Proposition~\ref{props1}, we are now ready to show Theorem~\ref{theo:2}. For any UAV formation $\mathbf Q$ that is feasible to both problems  \eqref{P:CUB} and \eqref{P:RQ}, we have the following results:
\begin{align}
R(\mathbf Q) & = \sum_{k=1}^K \log_2\left( 1 + \SINR_k(\mathbf Q)\right) \label{eq:ineq0}\\
& \leq  \sum_{k=1}^K \log_2\left( 1 + \bar P_k \|\mathbf h_k(\mathbf Q)\|^2 \right) \label{eq:ineq1}\\
& \leq  \sum_{k=1}^K \log_2\left( 1 + \bar P_k \|\mathbf h_k(\mathbf Q_2^\star)\|^2 \right) \label{eq:ineq2}\\
&\triangleq C_{2}^\star, ~ \forall \mathbf Q\in \mathcal Q, \label{eq:ineq3}
\end{align}
where the first inequality follows from the global upper bound \eqref{eq:UB3} specified in Proposition~\ref{props1}, and the second equality holds since $\mathbf Q_2^\star$ is an optimal solution to problem \eqref{eq:CUB}. Thus, the upper bound specified in \eqref{eq:RUB} is shown.

 The remaining task is to show that \eqref{eq:orthgogonal4} is the  sufficient and necessary conditions for the upper bound \eqref{eq:RUB} to be tight. Proving  the sufficient condition is straightforward, by simply substituting  $\mathbf Q=\mathbf Q_2^\star$ to problem \eqref{P:RQ}, which gives $R(\mathbf Q_2^\star)=C_2^\star$ if \eqref{eq:orthgogonal4} is satisfied. Next, we show that \eqref{eq:orthgogonal4} is also the necessary condition. This is achieved by contradiction. Suppose, on the contrary, that there exists no optimal solution to problem \eqref{eq:CUB} satisfying the conditions \eqref{eq:orthgogonal4}, while the upper bound \eqref{eq:RUB} is still tight, i.e., $\exists \mathbf Q_2 \in \mathcal Q$ such that $R(\mathbf Q_2)=C_2^\star$. Note that since $\mathbf Q_2$ needs to satisfy the relations \eqref{eq:ineq0}-\eqref{eq:ineq3}, then both inequalities \eqref{eq:ineq1} and \eqref{eq:ineq2} need to be satisfied with equality. Furthermore, based on Proposition~\ref{props1}, for \eqref{eq:ineq1} to be satisfied with equality, a sufficient and necessary conditions is $\mathbf h_k^H(\mathbf Q_2)\mathbf h_j(\mathbf Q_2)=0, \forall k\neq j$. Besides, satisfying \eqref{eq:ineq2} with equality implies that $\mathbf Q_2$ is also an optimal solution to problem \eqref{eq:CUB} since it also achieves the optimal value $C_2^\star$. This obviously contradicts with the contrary assumption that there exists no optimal solution to problem \eqref{eq:CUB} satisfying the orthogonality conditions \eqref{eq:orthgogonal4}.

This thus completes the proof of Theorem~\ref{theo:2}.

\section{Proof of Theorem~\ref{theo:NUPABounds}}\label{A:NUPABounds}
By applying $\lfloor x \rfloor \leq x$ to \eqref{eq:NUPA4}, the following upper bound of $N_\UPA$ can be found:
\begin{align}
N_\UPA & \leq   M_z \bar \Theta +1 + \bar \Phi M_y  \sum_{l=- \frac{M_z\bar \Theta}{2}}^{\frac{M_z\bar \Theta}{2}}\sqrt{1-\frac{4l^2}{M_z^2}}\\
  &\approx M_z\left(\frac{ \bar \Phi  M_y \left(\Theta+\bar \Theta \cos \Theta \right)}{2} +\bar \Theta \right)  +1,
\end{align}
where the approximation is obtained by approximating summation with integration, i.e., for $L_1\leq \frac{M_z}{2}$, we have
\begin{align}
\sum_{l=-L_1}^{L_1}\sqrt{1-\frac{4l^2}{M_z^2}}&\approx \frac{M_z}{2}\int_{-\frac{2L_1}{M_z}}^{\frac{2L_1}{M_z}} \sqrt{1-x^2}dx\\
&=\frac{M_z}{2}\left[\sin^{-1}\left(\frac{2L_1}{M_z}\right) +\frac{2L_1}{M_z}\sqrt{1-\left( \frac{2L_1}{M_z}\right)^2}\right], \label{eq:approx}
\end{align}
which is a valid approximation when $M_z\gg 2$, so that $x$ in the integral is discretized by infinitesimal step size with $x=\frac{2l}{M_z}$.

Besides, for multi-antenna communication with $M=M_yM_z$ antennas, the number of orthogonal channels is known to be upper bounded by $M_yM_z$ as well. In fact, this can also be easily shown from \eqref{eq:Ntheta} and \eqref{eq:Nphi} due to the minimum operation.

Similarly, by applying $\lfloor x \rfloor \geq x-1$ to \eqref{eq:NUPA4}, we may find the lower bound for $N_\UPA$ as:
\begin{align}
N_\UPA & \geq   
1-M_z \bar \Theta + \bar \Phi M_y  \sum_{l=- (\frac{M_z\bar \Theta}{2}-1)}^{\frac{M_z\bar \Theta}{2}-1}\sqrt{1-\frac{4l^2}{M_z^2}}\\
&\approx 1-M_z\bar \Theta +  \frac{\bar \Phi M_y M_z}{2}\left[\sin^{-1}\left(\bar \Theta - \frac{2}{M_z}\right) +\left(\bar \Theta - \frac{2}{M_z}\right)\sqrt{1- \left(\bar \Theta - \frac{2}{M_z}\right)^2}\right]\\
&\approx M_z\left(\frac{ \bar \Phi  M_y \left(\Theta+\bar \Theta \cos \Theta \right)}{2} - \bar \Theta \right)  +1
\end{align}
where the first approximation follows from \eqref{eq:approx}, and the last approximation is valid when $M_z\gg \frac{2}{\bar \Theta}$.

This completes the proof of Theorem~\ref{theo:NUPABounds}.

\bibliographystyle{IEEEtran}
\bibliography{IEEEabrv,IEEEfull}

\begin{thebibliography}{10}
\providecommand{\url}[1]{#1}
\csname url@samestyle\endcsname
\providecommand{\newblock}{\relax}
\providecommand{\bibinfo}[2]{#2}
\providecommand{\BIBentrySTDinterwordspacing}{\spaceskip=0pt\relax}
\providecommand{\BIBentryALTinterwordstretchfactor}{4}
\providecommand{\BIBentryALTinterwordspacing}{\spaceskip=\fontdimen2\font plus
\BIBentryALTinterwordstretchfactor\fontdimen3\font minus
  \fontdimen4\font\relax}
\providecommand{\BIBforeignlanguage}[2]{{%
\expandafter\ifx\csname l@#1\endcsname\relax
\typeout{** WARNING: IEEEtran.bst: No hyphenation pattern has been}%
\typeout{** loaded for the language `#1'. Using the pattern for}%
\typeout{** the default language instead.}%
\else
\language=\csname l@#1\endcsname
\fi
#2}}
\providecommand{\BIBdecl}{\relax}
\BIBdecl

\bibitem{377}
D.~Gesbert, M.~Kountouris, R.~W. Heath~Jr., C.~B. Chae, and T.~Salzer, ``From
  single user to multiuser communications: Shifting the {MIMO} paradigm,''
  \emph{{IEEE} Signal Process. Mag.}, vol.~24, no.~5, pp. 36--46, Oct. 2007.

\bibitem{373}
T.~L. Marzetta, ``Noncooperative cellular wireless with unlimited numbers of
  base station antennas,'' \emph{{IEEE} Trans. Wireless Commun.}, vol.~9,
  no.~11, pp. 3590--3600, Nov. 2010.

\bibitem{459}
H.~Q. Ngo, E.~G. Larsson, and T.~L. Marzetta, ``Energy and spectral efficiency
  of very large multiuser {MIMO} systems,'' \emph{{IEEE} Trans. Commun.},
  vol.~61, no.~4, pp. 1436--1449, Apr. 2013.

\bibitem{37}
G.~J. Foschini and M.~J. Gans, ``On limits of wireless communications in a
  fading environment when using multiple antennas,'' \emph{Wireless Personal
  Communications}, vol.~6, no.~3, pp. 311--355, Mar 1998.

\bibitem{36}
E.~Telatar, ``Capacity of multi-antenna {G}aussian channels,'' \emph{European
  transactions on telecommunications}, vol.~10, no.~6, pp. 585--596, Nov. 1999.

\bibitem{212}
H.~Weingarten, Y.~Steinberg, and S.~Shamai~(Shitz), ``The capacity region of
  the {Gaussian} multiple-input multiple-output broadcast channel,''
  \emph{{IEEE} Trans. Inf. Theory}, vol.~52, no.~9, pp. 3936--3964, Sep. 2006.

\bibitem{263}
G.~Caire and S.~Shamai, ``On the achievable throughput of a multiantenna
  {G}aussian broadcast channel,'' \emph{{IEEE} Trans. Inf. Theory}, vol.~49,
  no.~7, pp. 1691--1706, Jul. 2003.

\bibitem{310}
N.~Jindal and A.~Vishwanath, S.~Goldsmith, ``On the duality of {G}aussian
  multiple-access and broadcast channels,'' \emph{{IEEE} Trans. Inf. Theory},
  vol.~50, no.~5, pp. 768--783, May 2004.

\bibitem{453}
S.~Vishwanath, N.~Jindal, and A.~Goldsmith, ``Duality, achievable rates, and
  sum-rate capacity of {G}aussian {MIMO} broadcast channels,'' \emph{{IEEE}
  Trans. Inf. Theory}, vol.~49, no.~10, pp. 2658--2668, Oct. 2003.

\bibitem{312}
L.~Zhang, R.~Zhang, Y.~C. Liang, Y.~Xin, and H.~V. Poor, ``On {G}aussian {MIMO
  BC-MAC} duality with multiple transmit covariance constraints,'' \emph{{IEEE}
  Trans. Inf. Theory}, vol.~58, no.~4, pp. 2064--2078, Apr. 2012.

\bibitem{271}
P.~Viswanath and D.~N.~C. Tse, ``Sum capacity of the vector {G}aussian
  broadcast channel and uplink-downlink duality,'' \emph{{IEEE} Trans. Inf.
  Theory}, vol.~49, no.~8, pp. 1912--1921, Aug. 2003.

\bibitem{5003}
{M. Costa}, ``Writing on dirty paper,'' \emph{{IEEE} Trans. Inf. Theory},
  vol.~29, no.~3, pp. 439--441, May 1983.

\bibitem{182}
Q.~H. Spencer and A.~L. Swindlehurst, ``Zero-forcing methods for downlink
  spatial multiplexing in multiuser {MIMO} channels,'' \emph{{IEEE} Trans.
  Signal Process.}, vol.~52, no.~2, pp. 461 -- 471, Feb. 2004.

\bibitem{226}
W.~Yu and T.~Lan, ``Transmitter optimization for the multi-antenna downlink
  with per-antenna power constraints,'' \emph{{IEEE} Trans. Signal Process.},
  vol.~55, no.~6, pp. 2646--2660, Jun. 2007.

\bibitem{5004}
{E. Bjornson, J. Hoydis, and L. Sanguinetti}, ``Massive {MIMO} has unlimited
  capacity,'' \emph{{IEEE} Trans. Wireless Commun.}, vol.~17, no.~1, pp.
  574--590, Jan. 2018.

\bibitem{2005}
{Y. Zeng, X. Xu, S. Jin, and R. Zhang}, ``Simultaneous navigation and radio
  mapping for cellular-connected uav with deep reinforcement learning,''
  \emph{IEEE Trans. Wireless Commun.}, vol.~20, no.~7, pp. 4205--4220, 2021.

\bibitem{638}
W.~Zhao, M.~Ammar, and E.~Zegura, ``A message ferrying approach for data
  delivery in sparse mobile ad hoc networks,'' \emph{In Proc. ACM Mobihoc}, May
  2004.

\bibitem{653}
S.~Jain, K.~Fall, and R.~Patra, ``Routing in a delay tolerant network,''
  \emph{{Proc. ACM SIGCOMM}}, pp. 1--13, Jan. 2004.

\bibitem{641}
Y.~Zeng, R.~Zhang, and T.~J. Lim, ``Throughput maximization for {UAV}-enabled
  mobile relaying systems,'' \emph{{IEEE} Trans. Commun.}, vol.~64, no.~12, pp.
  4983--4996, Dec. 2016.

\bibitem{1095}
{Y. Zeng, Q. Wu, and R. Zhang}, ``Accessing from the sky: a tutorial on {UAV}
  communications for {5G} and beyond,'' \emph{Proc. of the IEEE}, vol. 107,
  no.~12, pp. 2327--2375, Dec. 2019.

\bibitem{904}
{Y. Zeng and R. Zhang}, ``{Energy-efficient UAV communication with trajectory
  optimization},'' \emph{{IEEE Trans. Wireless Commun.}}, vol.~16, no.~6, pp.
  3747--3760, Jun. 2017.

\bibitem{624}
F.~Bohagen, P.~Orten, and G.~E. Oien, ``Design of optimal high-rank
  line-of-sight {MIMO} channels,'' \emph{{IEEE} Trans. Wireless Commun.},
  vol.~6, no.~4, pp. 1420--1425, Apr. 2007.

\bibitem{5005}
{H. Lu, Y. Zeng, C. You, et al}, ``A tutorial on near-field {XL-MIMO}
  communications toward {6G},'' \emph{{IEEE Commun. Surveys Tutorials}},
  vol.~26, no.~4, pp. 2213--2257, 4th quarter 2024.

\bibitem{5006}
{W. K. New, K.-K. Wong, H. Xu, et al}, ``A tutorial on fluid antenna system for
  {6G} networks: Encompassing communication theory, optimization methods and
  hardware designs,'' \emph{{IEEE Commun. Surveys Tutorials}}, vol.~27, no.~4,
  pp. 2325--2377, Aug. 2025.

\bibitem{5007}
{L. Zhu, W. Ma, W. Mei, et al}, ``A tutorial on movable antennas for wireless
  networks,'' \emph{{IEEE Commun. Surveys Tutorials}}, vol.~28, pp. 3002--3054,
  Feb. 2025.

\bibitem{649}
Y.~Zeng, R.~Zhang, and T.~J. Lim, ``Wireless communications with unmanned
  aerial vehicles: opportunities and challenges,'' \emph{IEEE Commun. Mag.},
  vol.~54, no.~5, pp. 36--42, May 2016.

\bibitem{5009}
{M. Mozaffari, W. Saad, M. Bennis, and M. Debbah}, ``Communications and control
  for wireless drone-based antenna array,'' \emph{{IEEE Trans. Commun.}},
  vol.~67, no.~1, pp. 820--834, Jan 2019.

\bibitem{5011}
{H. Lu, Y. Zeng, S. Ma, B. Li, S. Jin, and R. Zhang}, ``{Wireless Communication
  for Low-altitude Economy with UAV Swarm Enabled Two-Level Movable Antenna
  System},'' \emph{to appear in IEEE Trans. Wireless Commun.,
  arxiv.org/abs/2505.22286}, 2026.

\bibitem{5008}
{N. Gao, X. Li, S. Jin, and M. Matthaiou}, ``{3-D Deployment of UAV Swarm for
  Massive MIMO Communications},'' \emph{{IEEE} J. Sel. Areas Commun.}, vol.~39,
  no.~10, pp. 3022--3034, Oct. 2021.

\bibitem{5010}
{S. Hanna, E. Krijestorac, and D. Cabric}, ``{UAV Swarm Position Optimization
  for High Capacity MIMO Backhaul},'' \emph{{IEEE} J. Sel. Areas Commun.},
  vol.~39, no.~10, pp. 3006--3021, Oct. 2021.

\bibitem{227}
M.~Grant and S.~Boyd, \emph{{CVX}: Matlab software for disciplined convex
  programming, version 2.1}, available online at http://cvxr.com/cvx.

\bibitem{2101}
Y.~Zeng and X.~Xu, ``Toward environment-aware {6G} communications via channel
  knowledge map,'' \emph{IEEE Wireless Commun.}, vol.~28, no.~3, pp. 84--91,
  2021.

\end{thebibliography}

\end{document}